\newtheorem{definition}{Definition}
\newtheorem*{notation}{Notations}
\newtheorem{lemma}{Lemma}
\newtheorem{theorem}{Theorem}
\newtheoremstyle{examplestyle}%
  {9pt}{9pt}%
  {\normalfont\small}%
  {0pt}%
  {\normalfont\scshape}{.}%
  { }%
  { }
\theoremstyle{examplestyle}
\newtheorem{paradigma}{Example}
\newenvironment{example}[1][]{\begin{paradigma}[#1]\pushQED{$\triangle$}}{\hfill\popQED \end{paradigma}}
\renewenvironment{proof}[1][\small\proofname]{\par
  \pushQED{\qed}%
  \normalfont\small \topsep6\p@\@plus6\p@\relax
  \trivlist
  \item[\hskip\labelsep
        \itshape
    #1\@addpunct{.}]\ignorespaces
}{%
  \popQED\endtrivlist\@endpefalse
}
\def\bibpath{/Users/dumas/Bib/Bib}
\def\bC{\mathbb{C}}
\def\bR{\mathbb{R}}
\def\cL{\mathcal{L}}
\def\cV{\mathcal{V}}
\def\cX{\mathcal{X}}
\mathchardef\slSigma="0106
\newcommand{\base}{B}
\newcommand{\length}[1]  {\lvert #1\rvert}
\newcommand{\abs}[1]  {\lvert #1\rvert}
\newcommand{\norm}[1]   {\left \|#1\right \|}
\newcommand{\Bint}[1]{(#1)_{{\base}}}
\newcommand{\Bexp}[1]{(0.#1)_{{\base}}}
\newcommand{\rhojsr}{\rho_{*}}
\newcommand{\sawtooth}[1]{\{#1\}}
\newcommand{\Id}{\operatorname{I}}
\newcommand{\matI}{\operatorname{I}}
\begin{document}

\begin{frontmatter}
\date{}
\title
{Rational series\\ and\\ asymptotic expansion\\ for\\ linear homogeneous \\ divide-and-conquer recurrences}
\author{Philippe Dumas}

\address{%
{\rm {\sc SpecFun} Team}, {\rm {\sc Inria}-Saclay},
France
}

\ead{Philippe.Dumas@inria.fr}

\begin{abstract}
Among all sequences that satisfy a divide-and-conquer recurrence, the sequences that are rational with respect to a numeration system are certainly the most immediate and most essential. Nevertheless, until recently they have not been studied  from the asymptotic standpoint. We show how a mechanical process permits to compute their asymptotic expansion. It is based on  linear algebra, with Jordan normal form, joint spectral radius, and dilation equations. The method is compared with the analytic number theory approach, based on Dirichlet series and residues, and new ways to compute the Fourier series of the periodic functions involved in the expansion are developed. The article comes with an extended bibliography. 
\end{abstract}

\begin{keyword}
	divide-and-conquer recurrence \sep radix-rational sequence \sep spectral radius \sep dilation equation \sep cascade algorithm \sep Dirichlet series \sep Fourier series \MSC[2010] 11A63 \sep 41A60
\end{keyword}
\end{frontmatter}

The aim of this article is the asymptotic study of a special type of divide-and-conquer sequences, namely the sequences which are rational with respect to a numeration system or \emph{radix-rational sequences}. In most cases where the numeration system is the binary system, such sequences satisfies a recursion which essentially links the values~$u_{2n}$ and~$u_{2n+1}$ to the value of~$u_n$ by a linear homogeneous relationship. Section~\ref{dumas-rrseq-v4:sec:RadixRationalSequences} is devoted to their precise definition. In Sections~\ref{dumas-rrseq-v4:sec:BasicIdeas} to~\ref{dumas-rrseq-v4:sec:AsymptoticExpansion}, we provide the reader  with a method of easy use to determine the asymptotic behaviour of radix-rational sequences. The used tools all come from linear algebra and this is the distinctive feature of this article. We insist on ideas and proofs are only sketched. A formal and complete treatment of this part can be found in~\cite{Dumas13}. It results in  Theorem~\ref{dumas-rrseq-v4:thm:Theorem}, which asserts that a radix-rational sequence admits an asymptotic expansion with variable coefficients in the scale 
$N^\alpha \binom{\log N}{m}$, or equivalently 
$N^\alpha \log^m N$. The so-called \emph{dichopile algorithm}, briefly reminded in Example~\ref{dumas-rrseq-v4:ex:Dichopile-1} below, will be a thread in this paper and its asymptotic behaviour is given by 
\begin{equation*}
	f_N \mathop{=}_{N \to + \infty}
	\frac{1}{2} N \log_2 N + N \Phi(\log_2 N) + O(N^{\varepsilon}),
\end{equation*}
where~$\Phi(t)$ is a $1$-periodic function and $0 < \varepsilon < 1$. We present in Section~\ref{dumas-rrseq-v4:sec:ExamplesAndImprovements} some  improvements of the method and we summarize it into an algorithm. 
Next in  Section~\ref{dumas-rrseq-v4:sec:LinearvsAnalytic}, we briefly mention the previous works about the asymptotic behaviour of divide-and-conquer sequences. Specifically, we show how our algebraic approach can help the analytic number theory approach, popularized by Philippe Flajolet. To end, in Section~\ref{dumas-rrseq-v4:sec:FourierCoefficients} we elaborate on the computation of the Fourier coefficients of the involved periodic functions and we provide the reader with several methods. This leads us to consider a Mellin transform.

\section{Radix-rational sequences}\label{dumas-rrseq-v4:sec:RadixRationalSequences}

Radix-rational sequences  are a mere generalization of classical rational sequences, that is sequences which satisfy a linear homogeneous recurrence with constant coefficients. In the same manner they satisfy a linear homogeneous recurrence with constant coefficients but where the shift $n\mapsto n+1$ is replaced by the pair of scaling transformations $n\mapsto 2n$ and $n\mapsto 2n+1$, in case the radix is~$2$. 

These sequences come to light in various domains of knowledge and the first example which comes in mind~\citep{Trollope68,Delange75} is the binary sum-of-digits function, that is the number~$s_n$ of~$1$'s in the binary expansion
of an integer~$n$~\cite{DrGa98}. 
The sequence satisfies $s_0=0$, $s_{2n}=s_n$,
$s_{2n+1}=s_n+1$. It may seem to the reader that such a sequence
is of limited interest, but it appears in many problems like the study
of the maximum of a determinant of a $n\times n$ matrix with
entries~$\pm1$~\cite{ClLi65} or in a merging process occurring in
graph theory~\cite{McIlroy74}. This example have been greatly
generalized with the number of occurrences of some pattern in the
binary code~\cite{BoCoMo89}, or in the Gray code: Flajolet and Ramshaw~\cite{FlRa80}
study the average case of Batcher's odd-even merge by using the
sum-of-digits function of the Gray code.  Among the sequences directly
related to a numeration system the Thue-Morse sequence which writes
$u_n=(-1)^{s_n}$ is certainly the one which has caused the
greatest number of publications~\cite{AlSh99}. There exist variants
with some subsequences~\cite{Newman69,Coquet83} or with binary
patterns~\cite{BoCoMo89} other than the simple pattern~$1$: the
Rudin-Shapiro sequence associated to the pattern~$11$ satisfies $u_{2n} = u_{4n+1} = u_{n}$, $u_{4n+3} = - u_{2n+1}$, $u_0 = u_1 =1$, and 
 was initially
designed to minimize the $L^{\infty}$ norm of a sequence of
trigonometric polynomials with
coefficients~$\pm1$~\cite{Rudin59,Shapiro51}.  The study of the
complexity of algorithms is another source of radix-rational
sequences. The cost~$c_n$ of computing the $n$-th power of a matrix by
binary powering satisfies $c_0=0$, $c_{2n}=c_n+1$,
$c_{2n+1}=c_n+2$. The idea of binary powering has been re-employed
by Morain and Olivos~\cite{MoOl90} in the context of computations on an elliptic curve
where the subtraction has the same cost than addition. Supowit and Reingold~\cite{SuRe83}
have used the divide-and-conquer strategy to give heuristics for
the problem of Euclidean matching and this leads them to a
$4$-rational sequence in the worst case. The theory of numbers is
another domain which provides examples like the number of odd binomial
coefficients in row~$n$ of Pascal's triangle~\cite{Stolarsky77} or
the number of integers which are a sum of three squares in the first
$n$ integers~\cite{OsSh89}.

It is the merit of \citet{AlSh92,AlSh03} to have put all these scattered examples in a common framework, that leads to the idea of a linear representation with matrices. For the examples above the matrix dimension~$d$ is usually small, say from~$1$ to~$4$, but there exist examples where $d$ is larger, like in the work of Cassaigne~\cite{Cassaigne93} which uses $d=30$. In such cases a general method of study is necessary and we will elaborate one. But, it is first necessary to formalize the notion of a radix-rational sequence.

\subsection{{Forward direction}}

Let~$S$ be a rational formal series over a finite alphabet~$\cX$. For the sake of simplicity, we are assuming here and in all the sequel that the coefficients of the formal series are complex numbers, even if it is possible to consider in this first section the more general framework of a commutative field or even of a commutative semi-ring. 
If~$\cX$ is the set of figures for the numeration system with radix~$\base$, the rational series defines a sequence~$s_n$ in the following manner: for each non-negative integer~$n$, we consider the word $w = n_{\ell-1}\dotsb n_1n_0$ which is the radix~$\base$ expansion of~$n$, that is 
\begin{equation}\label{dumas-rrseq-v4:eq:RadixExpansionDefinition}
n = (n_{\ell-1}\dotsb n_1n_0)_\base = n_0 + n_1\base +\dotsb + n_{\ell-1} \base^{\ell-1},
\end{equation}
and the term~$s_n$ of the sequence is the value $(S,w)$ of the rational series over the word~$w$.

More concretely, if a linear representation of the rational series~$S$ is at our disposal, we can compute the value $(S,w)$ for a word $w = w_1\dotsb w_{\ell}$ by the formula 
\begin{equation}\label{dumas-rrseq-v4:eq:RationalSeriesComputation}
	(S,w) = L A_{w_1}A_{w_2}\dotsb A_{w_{\ell}}C = LA_wC.
\end{equation}
Here, $(L,\ (A_b)_{0 \leq b < \base},\  C)$ is a linear representation of~$S$. This means that $L$ is a row vector, matrices $A_b$ are square matrices, and $C$ is a column vector, all with a coherent size~$d$, which is the dimension of the representation. Formula~\eqref{dumas-rrseq-v4:eq:RationalSeriesComputation} translates immediately into
\begin{equation}\label{dumas-rrseq-v4:eq:RadixRationalSequenceComputation}
	s_n = L A_{n_{\ell-1}}\dotsb A_1A_0C
\end{equation}
and gives a way to compute the successive values of the sequence~$s$ associated with the rational series~$S$.

\medskip

The rational character of the formal series~$S$ has the following meaning~\cite{BeRe88,Sakarovitch09}: there is a vector space which contains the series~$S$; this vector space  is of finite dimension~$d$; and this vector space is left stable by the trimming operators~$T_r$, with $0\leq r < \base$, defined by (the figure~$r$ is viewed as a letter)
\begin{equation*}
	T_r S = \sum_{\begin{subarray}{c} w = v r\end{subarray}} (S,w) v.
\end{equation*}
The operator~$T_r$ extracts the part of the formal series associated with the words which end with~$r$ and trims the letter~$r$. These operators translate immediately, via the numeration system, into the multisection operators $T_r s_n = s_{\base n + r}$. The correspondence is more clear if we use the ordinary generating function of the sequence~$s_n$, which parallels the formal series~$S$, 
\begin{equation}\label{dumas-rrseq-v4:eq:TrimmingOperatorsDefinition}
	T_r \sum_{n=0}^{+\infty} s_n z^n = \sum_{n=0}^{+\infty}s_{\base n + r} z^k.
\end{equation}
With this viewpoint, the linear representation of the formal series~$S$ is interpreted as follows for the sequence~$s_n$: we have a finite-dimensional vector space~$\cV$ of sequences, equipped with a basis $(v^1,\,v^2,\,\ldots,\,v^d)$ (or more generally a generating system), which is left stable by the multisection operators and contains the sequence~$s$. The column vector~$C$ gives the coordinates of the sequence~$s$ with respect to the basis~$(v^i)_{1\leq i \leq d}$; the matrix~$A_b$ is the expression of the multisection operator~$T_b$; and the row vector~$L$ appears to be the linear form $u\mapsto u_0$, which evaluates the sequences for $n = 0$. As an example, with $\base = 2$, $n = 5 = (101)_2$, the column vectors $C$, $A_1C$, $A_0A_1C$, and $A_1A_0A_1C$ are the expressions of the sequences $s_n$, $s_{2n+1}$, $s_{4n+1}$, and $s_{8n+5}$ respectively. Taking the value for $n = 0$ of the last sequence provides the value~$s_5$ and this explains the formula $s_5 = L A_1A_0A_1C$. 

\medskip

We are led to the following definition, but with a slight change with respect to~\cite{AlSh92,AlSh03}. We prefer the attributive adjective \emph{ rational} to \emph{ regular}, because \emph{ rational} is understood by both computer scientists and mathematicians, while \emph{ regular} is understood only by the former but remains  hazy for the latter. 

\begin{definition}\label{dumas-rrseq-v4:def:RadixRationalSequence}
A (complex) sequence is said to be rational with respect to a numeration system, or more briefly radix-rational, or more precisely $\base$-rational if its orbit under the action of the $\base$-multisection operators remains in a finite dimensional vector space.
\end{definition}

\vskip 2ex

\subsection{{Reverse direction}}
Let us assume now that we have a sequence~$s = (s_n)$ which is $\base$-rational. We can build a linear representation of this sequence in the following way. If the sequence is the null sequence there is nothing to do, and we consider that the associated dimension~$d$ is~$0$. Otherwise we put~$v^1 = s$ as the first vector of the basis we want to build. Next we consider successively the sequences $(s_{\base n})$, $(s_{\base n + 1})$, $\ldots$, $(s_{\base n + \base -1})$, $(s_{\base^2 n})$, $(s_{\base^2 n +1})$, and so on. For each of these sequences, either it is independent of the previous ones and we add it to the basis, say as~$v^k$, or it can be expressed as a linear combination of the previous sequences that are in the basis. The process necessarily stops since all these sequences live in a finite dimensional space. When it stops, we have a free family~$(v^1,\,v^2,\,\ldots,\,v^d)$ which is a basis of a vector space~$\cV$. This vector space is left stable by the multisection operators and contains the sequence~$s$. Moreover we have a linear representation of the sequence. First the sequence~$s$ is the first vector of the basis, and this gives the vector column $C = E_1$, say. Next we have noted in passing the dependencies encountered in the process, between each of the sequence~$v^j$ and its images by the multisection operators~$T_0v^j$, $T_1v^j$, $\ldots$, $T_{\base -1 }v^j$, and this gives us the matrices~$A_0$, $A_1$, $\ldots$, $A_{\base -1}$. And last the values~$v^1_0$, $\ldots$, $v^d_0$ at~$0$ give the row vector~$L$. 

By this process, we have associated with the $\base$-rational sequence a rational formal series over the alphabet of the radix~$\base$ numeration system. Moreover, we see that the classical rational sequences, like the Fibonacci sequence, that is the sequences whose ordinary generating function is a rational function (for which $0$ is not a pole), appear to be $1$-rational because it is well known that such a sequence writes $s_n = L A^n C$ for some square matrix~$A$. Hence, the $\base$-rational sequences generalize the classical rational sequences.

\vskip 2ex

\subsection{{Sensitivity to the leftmost zeroes}}
The above construction is not entirely satisfactory, for if a rational series determines a radix rational sequence, the converse is not true. The reason is that the sequence does not use the words which begin with some zeroes. (The $\base$-ary word associated with the number~$0$ is the empty word.) A first idea to fill this gap is to extend the formal series by the value 0 for words that begin with zero. A more preferable method is to follow the process of the previous paragraph. It provides sequences $v^j$ which satisfy obviously $(T_0 v^j)_0 = v^j_0$, that is, in one formula, $LA_0 = L$. This property determines completely the formal series if its values for the $\base$-ary expansions of the integers are known.

\begin{definition}\label{dumas-rrseq-v4:def:ZeroInsensitiveLinearRepresentation}
	A linear representation $(L,\ (A_b)_{0 \leq b < \base},\  C)$ of a $\base$-rational sequence is said to be insensitive to the leftmost zeroes, or simply zero-insensitive, if it satisfies $LA_0 = L$.
\end{definition}
This definition implies two remarks. First, a radix-rational sequence admits always a linear representation which is  insensitive to the leftmost zeroes. Above, we have proved this property. Second, because all the minimal linear representations of a rational formal series are isomorphic, the same property is satisfied for the radix-rational sequence if we use only zero-insensitive linear representations. 

\begin{example}[Dichopile algorithm]\label{dumas-rrseq-v4:ex:Dichopile-1}
The \emph{dichopile algorithm} is an attempt to find a balance between space and time in the random generation of words from a regular language~\cite{Oudinet10,OudeGa12} (the recursion~\eqref{dumas-rrseq-v4:eq:DichopileRecursion} below is rather hidden in the last reference). To generate uniformly a word of length~$n$ the previously known methods use $O(n)$ in space and $O(n)$ in time, while the dichopile algorithm uses only $O(\log n)$ in space (a big saving) but $O(n\log n)$ in time (a small loss), as we will see. The time complexity is given by the sequence $f_n$ defined as follows,
\begin{equation}\label{dumas-rrseq-v4:eq:DichopileRecursion}
	\begin{split}
	f_{n} & = n + g_n,\\
	g_{n} & = f_{\lfloor n/2 \rfloor -1} + g_{\lceil n/2\rceil},\qquad \text{for $n \geq 2$,}
	\end{split}
\end{equation}
with $f_1 = 1$, $g_1 = 0$. We add $f_0 = 0$, $g_0 = 0$. Both sequences $f_n$ and $g_n$ are rational with respect to the radix $B = 2$. 

As it will soon appear, we are not interested by a linear representation for the sequence~$f_n$ but for the sequence of backward differences $u_n = \nabla f_n = f_{n} - f_{n-1}$ (with $f_{-1} = 0$), which is $2$-rational too according to Lemma~\ref{dumas-rrseq-v4:lemma:BackwardDifferences} below. A linear representation of dimension~$6$ is the following 
\begin{multline}\label{dumas-rrseq-v4:eq:DichopileLinearRepresentation}
	L = \left( \begin {array}{cccccc} 0&0&0&0&1&0\end {array} \right),\quad
	C = \left( \begin {array}{cccccc} 1&0&0&0&0&0\end {array} \right)^T,\\
	A_0 = 
	\left( \begin {array}{cccccc} 0&0&0&0&0&0\\ \noalign{\medskip}1&0&0&1&0&0\\ \noalign{\medskip}0&0&1&0&0&0\\ \noalign{\medskip}0&1&0&0&0&0\\ \noalign{\medskip}0&0&0&0&1&0\\ \noalign{\medskip}1&1&0&0&0&1\end {array} \right),\quad
	A_1 =
	\left( \begin {array}{cccccc} 0&0&1&0&0&0\\ \noalign{\medskip}0&1&0&0&0&0\\ \noalign{\medskip}1&0&0&1&0&0\\ \noalign{\medskip}0&0&0&0&0&0\\ \noalign{\medskip}1&0&0&0&1&1\\ \noalign{\medskip}0&1&0&0&0&0\end {array} \right).
\end{multline}
It uses the generating family $u_n = \nabla f_n = 1 + \nabla g_n$, $\nabla f_{n-1} = 1 + \nabla g_{n-1}$, $\nabla f_{n+1} = \Delta g_n$, $\nabla g_n$, $1$, and $1 - \delta_{0,n}$ (here $\delta_{x,y}$ is the usual Kronecker symbol). As a consequence it is not a reduced linear representation, because the first sequence is the sum of the fourth and the fifth. Nevertheless it is perfectly acceptable and, moreover, it is insensitive to the leftmost zeroes. 
\end{example}

In the rest of the article, we will use only zero-insensitive linear representations.

\subsection{Backward differences}
As it is pointed out in Example~\ref{dumas-rrseq-v4:ex:Dichopile-1}, we do not use a linear representation of the sequence~$s_n$ we want to estimate asymptotically, but a linear representation of the sequence $u_n = \nabla s_n = s_n - s_{n-1}$ of its backward differences. The next assertion shows that the latter is radix-rational if the former is.

\begin{lemma}\label{dumas-rrseq-v4:lemma:BackwardDifferences}
	If a sequence~$s_n$ is $\base$-rational, then the sequence of its backward differences is $\base$-rational too. Conversely, if a sequence~$u_n$ is $\base$-rational, then the sequence of its partial sums $s_n$ is $\base$-rational too.
\end{lemma}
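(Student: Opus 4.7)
The plan is to prove each implication by exhibiting a finite-dimensional vector space of sequences, closed under the multisection operators $T_r$ ($0 \leq r < \base$), that contains the target sequence. The key auxiliary tool is the unit backward shift $\sigma$ defined by $(\sigma v)_n = v_{n-1}$ for $n \geq 1$ and $(\sigma v)_0 = 0$; this is compatible with the convention $s_{-1} = 0$ used implicitly in $\nabla s_n = s_n - s_{n-1}$.

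For the forward implication, let $\cV$ be a finite-dimensional multisection-stable space containing $s$. I claim that $\cV + \sigma\cV$ is again multisection-stable, with dimension at most $2\dim\cV$. The verification is direct: writing $(\sigma v)_{\base n+r} = v_{\base n+r-1}$ and splitting on whether $r \geq 1$ or $r = 0$ yields $T_r(\sigma v) = T_{r-1} v \in \cV$ in the first case and $T_0(\sigma v) = \sigma(T_{\base-1} v) \in \sigma\cV$ in the second, the boundary value at $n=0$ matching on both sides. Since $u = s - \sigma s$ belongs to $\cV + \sigma\cV$, the sequence $u$ is $\base$-rational.

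For the converse, let $\cV_u$ be a finite-dimensional multisection-stable space containing $u$, with basis $v^1, \ldots, v^d$. Let $w^i$ denote the partial-sum sequence $w^i_n = \sum_{k=0}^n v^i_k$, and set $\cW = \mathrm{span}(w^1, \ldots, w^d) + \cV_u$. Ordinary partial summation yields $T_r w^i = T_{r-1} w^i + T_r v^i$ for $r \geq 1$, while splitting $\{0, 1, \ldots, \base n\}$ modulo $\base$ produces $T_0 w^i = \sum_{r=0}^{\base-1} \sigma(\Sigma T_r v^i) + T_0 v^i$, where $\Sigma$ is the partial-sum operator. The crucial identity $\sigma(\Sigma v) = \Sigma v - v$, immediate from the definitions, rewrites each $\sigma(\Sigma T_r v^i)$ as a linear combination of the $w^j$ minus an element of $\cV_u$, and hence as an element of $\cW$. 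So $\cW$ is multisection-stable, and writing $u = \sum_i \beta_i v^i$ gives $s = \sum_i \beta_i w^i \in \cW$, proving that $s$ is $\base$-rational.

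The only real obstacle is the converse: a naive attempt to enlarge $\cV_u$ by iterated shifts of partial sums does not obviously close up in finite dimension. The identity $\sigma(\Sigma v) = \Sigma v - v$ is what avoids this infinite regress, by expressing a shifted partial sum through a partial sum and a term already in $\cV_u$.
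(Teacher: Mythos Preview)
Your argument is correct. Both the forward and converse implications are established by valid constructions: in the forward direction the identity $T_0(\sigma v)=\sigma(T_{\base-1}v)$ and $T_r(\sigma v)=T_{r-1}v$ for $r\geq 1$ does make $\cV+\sigma\cV$ multisection-stable; in the converse direction your recursion $T_r w^i = T_{r-1}w^i + T_r v^i$ reduces everything to $T_0 w^i$, and your formula for $T_0 w^i$ together with $\sigma\Sigma = \Sigma - \mathrm{id}$ shows that $\cW=\Sigma\cV_u+\cV_u$ is closed under all $T_r$.

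Your route, however, is different from the paper's. The paper argues via ordinary generating functions: $u(x)=(1-x)s(x)$ and $s(x)=\dfrac{1}{1-x}\,u(x)$, and then invokes the Allouche--Shallit theorem that the $\base$-rational sequences form a ring under Cauchy product, together with the $\base$-rationality of the polynomial $1-x$ and of the rational function $1/(1-x)$ (whose only pole is a root of unity). The paper even remarks that a proof via linear representations is possible but is not given because the explicit passage between representations is never needed later. Your proof is precisely that suppressed linear-algebra argument. What it buys is self-containment: you rely on nothing beyond Definition~\ref{dumas-rrseq-v4:def:RadixRationalSequence}, whereas the paper's proof imports two nontrivial closure results from \cite{AlSh92}. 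Conversely, the paper's approach makes the algebraic structure transparent (backward differencing and partial summation are just multiplication by units of the ring) and is shorter once those results are granted.
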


\begin{proof}
	We could give a proof based on linear representations as in~\cite[Lemma~1]{Dumas13}, but this is useless because we do not will employ the change from one representation to the other. It is simpler to consider the ordinary generating functions
	\[
	u(x) = \sum_{n=0}^{+\infty} u_n x^n,\qquad s(x) = \sum_{n=0}^{+\infty} s_n x^n,
	\]
which are related by
\[
u(x) = (1-x) s(x),\qquad s(x) = \frac{1}{1-x} u(x).
\]
Allouche and Shallit have proved that the product of two $\base$-rational generating functions is $\base$-rational, hence the title \emph{The \emph{ring} of $k$-regular sequences} of their article~\cite{AlSh92}. Moreover $1-x$ is $\base$-rational as a polynomial, and $1/(1-x)$ is $\base$-rational as a rational function whose poles are all roots of unity~\cite[Th.~3.3]{AlSh92}. Accordingly, if~$s(x)$ is $\base$-rational, so is~$u(x)$, and conversely.
\end{proof}

\section{Basic ideas}\label{dumas-rrseq-v4:sec:BasicIdeas}
Our aim is to expose a calculation method for an asymptotic expansion of a radix rational sequence~$s_n$. It takes as input a linear representation of the sequence of backward differences $u_n = \nabla s_n = s_n - s_{n-1}$ (with $s_{-1} = 0$). Its output is an asymptotic expansion for the sequence~$s_n$, which writes as a finite linear combination of terms
\begin{equation}\label{dumas-rrseq-v4:eq:BasicTermExpansion}
	n^\alpha \binom{\log_{\base} n}{m} \Psi(\log_{\base} n).
\end{equation}
Here~$\alpha$ is a real number, $\base$ is the radix of the numeration system, $m$ is a nonnegative integer, and $\Psi(t)$ is an oscillatory function often $1$-periodic in practice.

In the rest of the article, we will constantly use the following notations.
\begin{notation}
\noindent
Throughout,  $L$, $(A_b)_{0 \leq b < \base}$, $C$ is a zero-insensitive linear representation for $u_n = \nabla s_n$, which is $d$-dimensional. The matrix~$\Id_d$ is the identity matrix of size~$d$. The matrix~$Q$ is the sum of the square matrices in the representation,
\begin{equation}\label{dumas-rrseq-v4:eq:DefinitionQ}
	Q = \sum_{0 \leq b < \base} A_b.
\end{equation}
The integer~$K$ is the integer part of the logarithm base~$\base$ of~$N$, and~$t$ is its fractional part,
\begin{equation}\label{dumas-rrseq-v4:eq:Notation-K,t}
	K = \lfloor \log_\base N\rfloor,\qquad t = \log_\base N - K = \{\log_\base N\}.
\end{equation}
At last $\Bexp{w}$ is the number, in $\left[0,1\right)$, whose $\base$-ary expansion reads $0.w$.
\end{notation}

The main idea is to link the partial sum of~$u_n$ up to~$N$, that is~$s_N$, with the sum of the rational series~$S$ behind~$u_n$ for some words of a given length.
\begin{lemma}\label{dumas-rrseq-v4:lemma:BasicEquation}
Both sums
\begin{equation}\label{dumas-rrseq-v4:def:Sums}
	s_N = \sum_{n=0}^N u_n,\qquad \text{and}\qquad S_K(x) = \sum_{\begin{subarray}{c}\length{w} = K\\ \Bexp{w} \leq x\end{subarray}} (S,w),
\end{equation} 
are related by
\begin{equation}\label{dumas-rrseq-v4:eq:BasicEquationSimplified}
	s_N =  S_{K+1}(\base^{t-1}).
\end{equation}
\end{lemma}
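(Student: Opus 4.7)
The plan is to set up a bijection between the integers $n \in \{0, 1, \ldots, N\}$ indexing $s_N$ and the length-$(K+1)$ words $w$ indexing $S_{K+1}(\base^{t-1})$, and to verify that under this bijection each term $u_n$ coincides with $(S, w)$. The first step is to decode the threshold $\base^{t-1}$. From $t = \log_\base N - K$ one gets $\base^{t-1} = N/\base^{K+1}$. For a length-$(K+1)$ word $w = w_1 \cdots w_{K+1}$, multiplying by $\base^{K+1}$ turns $\Bexp{w} = \sum_{i=1}^{K+1} w_i \base^{-i}$ into the integer $n_w = \sum_{i=1}^{K+1} w_i \base^{K+1-i}$ whose padded $\base$-ary expansion is $w$. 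Hence the condition $\Bexp{w} \leq \base^{t-1}$ is equivalent to $n_w \leq N$; since $N < \base^{K+1}$, the map $w \mapsto n_w$ restricts to a bijection from the summation range of $S_{K+1}(\base^{t-1})$ onto $\{0, 1, \ldots, N\}$.

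The second step is to use zero-insensitivity. Iterating $LA_0 = L$ yields $LA_0^j = L$ for every $j \geq 0$, so for each $n$ with unpadded expansion $w_n$ of length $\ell(n) \leq K+1$, padding to $\tilde w_n = 0^{K+1 - \ell(n)} w_n$ gives
\begin{equation*}
(S, \tilde w_n) = L A_0^{K+1 - \ell(n)} A_{w_n} C = L A_{w_n} C = u_n,
\end{equation*}
including the edge case $n = 0$ where $w_0$ is the empty word and $\tilde w_0 = 0^{K+1}$. Combining with the bijection of the first step,
\begin{equation*}
S_{K+1}(\base^{t-1}) = \sum_{\substack{\length{w} = K+1 \\ \Bexp{w} \leq \base^{t-1}}} (S, w) = \sum_{n=0}^N u_n = s_N.
\end{equation*}

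There is no real obstacle; the lemma is essentially a bookkeeping identity. The only points requiring attention are that the boundary $n = N$ is included (it is, since the defining inequality is $\leq$ and $\base^{t-1} = N/\base^{K+1}$ exactly), and that zero-insensitivity is precisely what is needed to identify $(S, w_n) = (S, \tilde w_n)$ so that the sum may be indexed uniformly over length-$(K+1)$ words.
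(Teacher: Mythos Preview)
Your proof is correct. The argument differs slightly from the paper's: the paper cuts the interval $[0,N]$ at the powers of~$\base$, obtaining first the intermediate identity
\[
s_N = L(\Id_d-A_0)\sum_{k=0}^K Q^k C + S_{K+1}(\base^{t-1}),
\]
and then invokes $LA_0=L$ to kill the first sum. You instead pad every integer $n\le N$ to a length-$(K+1)$ word and use $LA_0=L$ term by term, so that the bijection and the identification $u_n=(S,\tilde w_n)$ give the result in one stroke. Your route is shorter and makes the role of zero-insensitivity more transparent; the paper's route has the side benefit of exhibiting the intermediate formula, which remains valid even without the zero-insensitivity assumption.
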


\begin{proof}
Formula~\eqref{dumas-rrseq-v4:eq:BasicEquationSimplified} comes from cutting the interval $[0,N]$ by the powers of~$\base$. The integers in the interval $\left[\base^k,\base^{k+1}\right)$ have a $\base$-ary expansion which is a length~$k+1$ word. The sum over all length~$k+1$ words expresses with the matrix~$Q$, as follows
\[
	\sum_{\length{w} = k+1} L A_w C = \sum_{\length{w} = k+1} L A_{w_1} \dotsb A_{w_{k+1}} C = 
	L (A_0 + \dotsb + A_{\base -1})^{k+1} C = L Q^{k+1} C.
\]
This leads us to the formula
\begin{equation}\label{dumas-rrseq-v4:eq:BasicEquation}
	s_N = L(\Id_d-A_0)\sum_{k=0}^K Q^k C + S_{K+1}(\base^{t-1}).
\end{equation}
More precisely, the $\base$-ary expansions of the integers in $\left[\base^k,\base^{k+1}\right)$ do not begin with a~$0$, hence we have to subtract $L A_0 Q^k C$ from $L Q^{k+1} C$. This explains the first term in~\eqref{dumas-rrseq-v4:eq:BasicEquation}. The second term $S_{K+1}(\base^{t-1})$ corresponds to the interval $\left[\base^K,N\right]$. With the notations~\eqref{dumas-rrseq-v4:eq:Notation-K,t}, the integer $N$ writes $N = \base ^{K+t} = \base^{K+1}\base^{t-1}$ and for a length $K+1$ word~$w$ the inequality $\Bint{w} \leq N$ is equivalent to $\Bexp{w} \leq \base^{t-1}$, hence the occurrence of the term $S_{K+1}(\base^{t-1})$. At this point, we take advantage of our assumption that we consider only zero-insensitive linear representations and Formula~\eqref{dumas-rrseq-v4:eq:BasicEquation} simplifies into~\eqref{dumas-rrseq-v4:eq:BasicEquationSimplified}.
\end{proof}

It will turn out that it is not sufficient to consider the scalar-valued sum~$S_K(x)$, and we reinforce the notations with a vector-valued sum.
\begin{notation}
The sum $S_K(x)$ writes $L\slSigma_K(x)$ where $\slSigma_K(x)$ is the vector-valued sum
\begin{equation}\label{dumas-rrseq-v4:eq:VectorValuedSumDefinition}
	\slSigma_K(x) = \sum_{\begin{subarray}{c}\length{w} = K\\ \Bexp{w} \leq x\end{subarray}} A_w C
\end{equation}
\end{notation}

\noindent
We are interested by the behaviour of $\slSigma_K(x)$ for~$K$ large. Our starting point  is the next formula.
\begin{lemma}\label{dumas-rrseq-v4:lemma:RecursionFormula}
	The sum~$\slSigma_K(x)$ satisfies the recursion 
\begin{equation}\label{dumas-rrseq-v4:eq:RecursionFormula}
	\slSigma_{K+1}(x) = \sum_{b_1 < x_1} A_{b_1} Q^K C + A_{x_1} \slSigma_K (\base x - x_1),
\end{equation}
valid for each $ x = \Bexp{x_1x_2\ldots}$ in $\left[0,1\right)$. 
\end{lemma}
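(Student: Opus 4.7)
The plan is to prove the recursion by partitioning the sum defining $\slSigma_{K+1}(x)$ according to the first letter of each word of length $K+1$, and to verify that the inequality $\Bexp{w}\leq x$ transforms cleanly into a condition on the length-$K$ suffix.

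First, I would write any length-$(K+1)$ word as $w = x_1' w'$ with first letter $w_1 = x_1'$ and tail $w' = w_2 w_3 \ldots w_{K+1}$ of length $K$, so that $A_w = A_{w_1} A_{w'}$ and $\Bexp{w} = w_1/\base + \Bexp{w'}/\base$. Writing the expansion of $x$ as $x = \Bexp{x_1 x_2 x_3 \ldots}$ gives $\base x = x_1 + \Bexp{x_2 x_3 \ldots}$ with the fractional part $\Bexp{x_2 x_3 \ldots}$ lying in $[0,1)$. Consequently, the inequality $\Bexp{w} \leq x$ is equivalent to $\Bexp{w'} \leq \base x - w_1$.

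Then I would distinguish three ranges for $w_1$. When $w_1 < x_1$, one has $\base x - w_1 \geq 1 + \Bexp{x_2 x_3 \ldots} \geq 1 > \Bexp{w'}$, so the constraint is vacuous and the contribution sums to $A_{w_1}\bigl(\sum_{|w'|=K} A_{w'}\bigr) C = A_{w_1} Q^K C$, using the identity $\sum_{|w'|=K} A_{w'} = Q^K$ obtained by expanding $Q^K = (A_0 + A_1 + \dotsb + A_{\base-1})^K$. When $w_1 = x_1$, the constraint becomes $\Bexp{w'} \leq \base x - x_1$, and this value lies in $[0,1)$, so the contribution is exactly $A_{x_1}\slSigma_K(\base x - x_1)$ by the very definition of $\slSigma_K$. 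When $w_1 > x_1$, one has $\base x - w_1 \leq \Bexp{x_2 x_3 \ldots} - 1 < 0$, which no value $\Bexp{w'} \geq 0$ can attain, so the contribution vanishes.

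Summing the three contributions reproduces the announced recursion~\eqref{dumas-rrseq-v4:eq:RecursionFormula}. I do not expect any real obstacle; the only subtle point is book-keeping the strict versus non-strict inequalities at the case boundaries, which ultimately reduces to the elementary fact that $\Bexp{w'} < 1$ for every finite word $w'$ of length at most $K$.
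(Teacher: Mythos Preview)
Your argument is correct and is essentially the paper's approach: partition the words of length $K+1$ by their first digit and use $\sum_{|w'|=K}A_{w'}=Q^K$. The only cosmetic difference is that the paper first writes the fully unrolled sum $\slSigma_{K+1}(x)=\sum_{b_1<x_1}A_{b_1}Q^KC+\sum_{b_2<x_2}A_{x_1}A_{b_2}Q^{K-1}C+\dotsb$ and then observes the one-step recursion, whereas you establish the recursion directly; neither route adds or loses anything substantive.
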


\begin{proof}
The numbers from $[0,x]$ with a $\base$-mantissa of length $K+1$ are sorted according to the prefixes of their $\base$-ary expansions. This emphasizes the intervals $\left[0,\Bexp{x_1}\right)$, $\left[\Bexp{x_1},\Bexp{x_1x_2}\right)$, $\ldots$ $\left[\Bexp{x_1x_2\ldots x_K},\Bexp{x_1x_2\ldots x_{K+1}}\right]$, and leads to the  formula
\[
\slSigma_{K+1}(x) = \sum_{b_1 < x_1} A_{b_1} Q^K C + \sum_{b_2 < x_2} A_{x_1} A_{b_2} Q^{K-1} C + \dotsb + \sum_{b_{K+1} \leq x_{K+1}} A_{x_1} A_{x_2} \dotsb A_{b_{K+1}} C.
\]
The above recursive Formula~\eqref{dumas-rrseq-v4:eq:RecursionFormula} is a mere consequence.
\end{proof}

We will decompose the column vector~$C$ of the linear representation on a basis of generalized eigenvectors for the matrix~$Q$, because for such vectors Formula~\eqref{dumas-rrseq-v4:eq:RecursionFormula} will reveal a functional equation, known as a dilation equation. But this will be possible only for eigenvalues sufficiently large, and we will begin by what means `sufficiently large' in the context. 

\section{Joint spectral radius}\label{dumas-rrseq-v4:sec:JointSpectralRadius}

The computing of a rational series uses products of square matrices of an arbitrary length. We need to evaluate the size of such products. We consider all products~$A_w$ for words~$w$ of a given length~$T$ and their norms, for some subordinate norm. Then the joint spectral radius of the set of matrices $A_b$, $0 \leq b < \base$, is defined as~\cite{RoSt60,Blondel08}
\begin{equation}\label{dumas-rrseq-v4:def:JointSpectralRadius}
\rhojsr = \lim_{T\to +\infty} \max_{|w|=T} \norm{A_w}^{1/T}.
\end{equation}
It is independent of the used subordinate norm. Moreover the maximum in the right hand side term is a non increasing function of~$T$. More information can be found in~\cite{BlKaPrWi08} and the papers which come with it.

\medskip

We apply immediately this definition to the study of the error term in the asymptotic expansion.
\begin{lemma}\label{dumas-rrseq-v4:lemma:ErrorTerm}
	Let~$V$ be a generalized eigenvector associated with the eigenvalue $\rho \omega$, $\rho \geq 0$, $\abs{\omega} = 1$, which satisfies $\rho \leq \rhojsr$. Then for every ${r} > \rhojsr$, the sequence $\slSigma_K$ associated with~$V$ satisfies $\norm{\slSigma_K} = O({r}^K)$.
\end{lemma}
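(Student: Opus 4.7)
The plan is to unroll the recursion of Lemma~\ref{dumas-rrseq-v4:lemma:RecursionFormula} all the way down to the base case $\slSigma_0(y) = V$, so that $\slSigma_K(x)$ appears as a sum of at most $K+1$ explicit terms, and then to bound each term using two independent ingredients: the joint spectral radius, which controls products $A_w$ uniformly in $w$, and the Jordan structure of $Q$ at the eigenvalue $\rho\omega$, which controls $Q^kV$.

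Concretely, I would first iterate~\eqref{dumas-rrseq-v4:eq:RecursionFormula}, peeling off one leading digit $x_k$ of the orbit $y^{(k)} = \base^k x - \lfloor\base^k x\rfloor$ at each step, to reach
\begin{equation*}
\slSigma_K(x) = \sum_{k=1}^{K} A_{x_1 x_2 \dotsb x_{k-1}} \Bigl(\sum_{b < x_k} A_b\Bigr) Q^{K-k} V + A_{x_1 x_2 \dotsb x_K} V,
\end{equation*}
with the convention that the empty product equals $\Id_d$. Each summand now factors into a prefix product $A_{x_1 \dotsb x_{k-1}}$, a ``slice'' matrix $\sum_{b<x_k} A_b$ whose norm is majorized by the fixed constant $\sum_b \|A_b\|$, and a tail $Q^{K-k} V$.

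Pick $r_1$ with $\rhojsr < r_1 < r$. From Definition~\eqref{dumas-rrseq-v4:def:JointSpectralRadius} and sub-multiplicativity (split $w$ into blocks of a length $T_0$ large enough that $\max_{|u|=T_0}\|A_u\|^{1/T_0} \leq r_1$, handling the remainder as a fixed factor), there is a constant $C_1$ with $\|A_w\| \leq C_1 r_1^{|w|}$ uniformly in $w$. On the other hand, because $V$ is annihilated by some power of $Q - \rho\omega\Id_d$, expanding
\begin{equation*}
Q^k V = \sum_{i=0}^{d-1} \binom{k}{i}(\rho\omega)^{k-i}(Q-\rho\omega\Id_d)^i V
\end{equation*}
yields $\|Q^kV\| = O(k^{d-1}\rho^k)$; since $\rho \leq \rhojsr < r_1$, the polynomial factor $k^{d-1}$ is absorbed and $\|Q^kV\| = O(r_1^k)$.

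Injecting these two estimates back into the iterated formula, each of the $K+1$ summands is $O(r_1^{k-1} \cdot r_1^{K-k}) = O(r_1^{K})$, so $\|\slSigma_K(x)\| = O(Kr_1^K)$ uniformly in $x$. Since $r > r_1$, the extraneous factor $K$ is absorbed into $(r/r_1)^K$, giving $\|\slSigma_K\| = O(r^K)$, as required. The main delicacy is ensuring that every intermediate constant is genuinely uniform in $x$ and in the words $w$ that arise, which is exactly what the joint spectral radius is designed for; the naive bound $\|\slSigma_K\| \leq \base^K \max_{|w|=K}\|A_w\|\,\|V\|$ would carry an unwanted factor $\base^K$ and be far too loose to conclude.
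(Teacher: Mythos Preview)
Your argument is correct and follows essentially the same strategy as the paper's sketch: both rest on the recursion of Lemma~\ref{dumas-rrseq-v4:lemma:RecursionFormula} together with the bound $\norm{A_w}\le C\,r_1^{|w|}$ coming from the joint spectral radius. The only difference is presentational: the paper keeps the recursion as a one-step inequality $\norm{\slSigma_{K+1}}_\infty\le c\rho^K+r\norm{\slSigma_K}_\infty$ and solves it, whereas you unroll it completely and bound the resulting $K+1$ terms one by one---these are two equivalent ways of summing the same geometric-type series. Your treatment is in fact a bit more explicit than the paper's sketch on one point: the paper restricts the sketch to a genuine eigenvector and waves at the general case, while your binomial expansion of $Q^kV$ and the intermediate choice $\rhojsr<r_1<r$ handle the full Jordan block uniformly and absorb the polynomial factor $k^{d-1}$ cleanly.
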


\begin{proof}[Sketch of the proof]
Let us consider simply the case of an eigenvector and assume $\norm{A_b} \leq r$ for $1 \leq b < \base$. With Formula~\eqref{dumas-rrseq-v4:eq:RecursionFormula} of Lemma~\ref{dumas-rrseq-v4:lemma:RecursionFormula}, we readily obtain
\[
\norm{\slSigma_{K+1}(x)} \leq \rho^K \sum_{b_1 < x_1} \norm{A_{b_1}} \norm{V} + \norm{A_{x_1}} \norm{\slSigma_K(\base x - x_1)}
\]
and, using the supremum norm,
\[
\norm{\slSigma_{K+1}}_\infty \leq \rho^K \base {r} \norm{V} + {r} \norm{\slSigma_K}_\infty.
\]
With $\rho < {r}$, this recursion leads to $\norm{\slSigma_K}_\infty = O({r}^K)$. More generally, we can find an integer~$T$ such that for all words~$w$ of length~$T$ we have $\norm{A_w} \leq r$ and we can deal with the subsequences $\norm{\slSigma_{KT+r}(x)}$, $0 \leq r < T$, as previously. 
\end{proof}

\begin{example}[Dichopile algorithm, continued]\label{dumas-rrseq-v4:ex:Dichopile-2}
Here, we are using the maximum absolute column sum~$\norm{\ }_1$. Because~$A_0$ and~$A_1$ have nonnegative coefficients, computing  $\norm{A_w}_1$ amounts to compute the row vector $U A_w$ and take the largest component, where~$U$ is the row vector with all components equal to~$1$. It turns out that all these products $U A_w$ write 
\[\displaystyle W = \left( \begin {array}{cccccc} a&b&c&d&1&1\end {array} \right) \]
because this is true for the empty word, and this writing is preserved in the product by~$A_0$ and~$A_1$,
\[\displaystyle  WA_0 = \left( \begin {array}{cccccc} 1+b&1+d&c&b&1&1\end {array} \right) ,\, WA_1 = \left( \begin {array}{cccccc} 1+c&1+b&a&c&1&1\end {array} \right) \]
The formul{\ae} above show that all the components for a word of length~$T$ are bounded by $T+1$. Moreover if $UA_1^T$ (here~$T$ is an exponent) writes $[a,b,c,d,1,1]$ where~$b$ is larger than all the other components, then $UA_1^{T+1}$ writes $[1+c,1+b,a,c,1,1]$ and $1+b$ is the largest component, so that $\norm{UA_1^T} = T+1$. We  obtain
\[
\max_{\length{w} = T}\norm{A_w}_1 = T+1.
\]
As a consequence we conclude $\rhojsr = 1$%
, and the error term will be~$O(r^K)$ for every $r > 1$.
\end{example}

\section{Dilation equations}\label{dumas-rrseq-v4:sec:DilationEquations}

Assume for a while that we take for~$C$ an eigenvector~$V$ of~$Q$ associated with the eigenvalue $\rho \omega$, with~$\rho > 0$ and $\abs{\omega} = 1$. With $F_K(x) = \slSigma_K(x)/ (\rho \omega)^K$, Formula~\eqref{dumas-rrseq-v4:eq:RecursionFormula} of Lemma~\ref{dumas-rrseq-v4:lemma:RecursionFormula} becomes
\[
\rho \omega F_{K+1}(x) =  \sum_{b_1 < x_1} A_{b_1} V +  A_{x_1} F_K(\base x - x_1).
\]
If the sequence~$F_K(x)$ has a limit~$F(x)$, the previous equation gives in the limit
\[
\rho \omega F(x) = \sum_{b_1 < x_1} A_{b_1} V +  A_{x_1} F(\base x - x_1).
\]
According to Definition~\eqref{dumas-rrseq-v4:eq:VectorValuedSumDefinition}, we must have $F(1) = V$, because of $\slSigma_K(1) = \sum_{\length{w} = K}A_w V = Q^K V = (\rho \omega)^K V$.

More generally, if~$Q$ admits a generalized eigenvector associated with the eigenvalue~$\rho \omega$, there exists a free family~$\cV = (V^{(j)})_{0 \leq j < \nu}$ of vectors in~$\bC^d$, such that the subspace generated by~$\cV$ is left stable by~$Q$ and the matrix induced with respect to the basis~$\cV$ of this subspace is the usual Jordan cell with size~$\nu$
\[
J = J_{\nu,\rho \omega} = 
\left(\begin{array}{ccccc}
	 \rho\omega &1 & & & \\
              &\rho\omega&1 && \\
              &    & \ddots&\ddots & \\
              &    &       & & 1\\
              &    & & & \rho\omega
	 \end{array}\right).
\]
The same idea as in the case of an eigenvector leads us to consider the functional equation
\begin{equation}\label{dumas-rrseq-v4:eq:DilationEquation}
	F(x) J = \sum_{b_1 < x_1} A_{b_1} V +  A_{x_1} F(\base x - x_1).
\end{equation}
This time~$V$ and~$F(x)$ are matrix-valued, precisely their values are in~$\bC^{{[1,d]}\times {\left[0,\nu\right)}}$. Matrix~$V$ is made from the column vectors~$V^{(0)}$, $\ldots$, $V^{(\nu - 1)}$ in that order. Anew we must have $F(1) = V$. As functions~$\slSigma_K(x)$ and~$F_K(x)$ are defined on the segment~$[0,1]$, we are searching for solution of  previous Equation~\eqref{dumas-rrseq-v4:eq:DilationEquation} on~$[0,1]$. But we extend function~$F(x)$ by 
\begin{equation}\label{dumas-rrseq-v4:eq:BoundaryConditions}
	F(x) = 0 \quad \text{for $x \leq 0$},\qquad F(x) = V \quad \text{for $x \geq 1$}.
\end{equation}
With this choice, Equation~\eqref{dumas-rrseq-v4:eq:DilationEquation} rewrites as an homogeneous equation 
\begin{equation}\label{dumas-rrseq-v4:eq:HomogeneousDilationEquation}
	F(x)J = \sum_{0 \leq b < \base} A_b F(\base x - b).
\end{equation}

This equation is a \emph{dilation equation} or \emph{two-scale difference
equations}. Such equations are common in the definition of wavelets~\cite{DaLa91} and refinement schemes~\cite{MiPr89}, and have been studied in detail~\cite[Th.~2.2]{DaLa92}, \cite{Heil92,Rioul92}. They appear also in the study of   dynamical systems toy-examples~\cite{TaAnSu93,TaGiDo1998}. 

\medskip

Let us recall that a function~$f$ from the real line into a normed
space is H{\"o}lder continuous with exponent~$\alpha$ if it satisfies $\norm{f(y)-f(x)}\leq c |y-x|^{\alpha}$ for some constant~$c$. 

\begin{lemma}\label{dumas-rrseq-v4:lemma:TheoremAboutDilationEquations}
	For an eigenvalue with modulus $\rho > \rhojsr$, Equation~\eqref{dumas-rrseq-v4:eq:HomogeneousDilationEquation} has a unique continuous solution, which is H{\"o}lder with exponent~$\log_\base (\rho/ r)$ for every $r > \rhojsr$. 
\end{lemma}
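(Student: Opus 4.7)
The plan is to establish existence and uniqueness simultaneously through the \emph{cascade algorithm}. On the Banach space of continuous matrix-valued functions on $[0,1]$ with prescribed boundary values $F(0) = 0$, $F(1) = V$, extended via~\eqref{dumas-rrseq-v4:eq:BoundaryConditions} outside $[0,1]$, I introduce the cascade operator
\[
(\mathcal{C}F)(x) = \Bigl(\sum_{0 \leq b < \base} A_b\,F(\base x - b)\Bigr)J^{-1}.
\]
The matrix $J$ is invertible since $\rho > \rhojsr \geq 0$. The boundary conditions are preserved by $\mathcal{C}$: at $x=1$ the relation $\sum_b A_b V = QV = VJ$ (the very definition of the generalized eigenspace) yields $(\mathcal{C}F)(1)=V$, while at $x=0$ all terms vanish by the extension rule. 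A routine check at the break points $x = b/\base$ ensures $\mathcal{C}F$ is again continuous, so a continuous solution of~\eqref{dumas-rrseq-v4:eq:HomogeneousDilationEquation} with the prescribed boundary is precisely a fixed point of $\mathcal{C}$.

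Iterating gives the compact identity
\[
(\mathcal{C}^K F)(x) = \sum_{\abs{w}=K} A_w\,F(\base^K x - \Bint{w})\,J^{-K}.
\]
When $F$ and $G$ share the boundary values, $F-G$ vanishes outside $[0,1]$, so only the word~$w$ with $\base^K x - \Bint{w} \in [0,1]$ contributes, and
\[
\norm{\mathcal{C}^K F - \mathcal{C}^K G}_\infty \leq \max_{\abs{w}=K}\norm{A_w}\cdot\norm{J^{-K}}\cdot\norm{F - G}_\infty.
\]
Fix any $r$ with $\rhojsr < r < \rho$. By the definition~\eqref{dumas-rrseq-v4:def:JointSpectralRadius} of the joint spectral radius, $\max_{\abs{w}=K}\norm{A_w} \leq r^K$ for $K$ large enough; and a direct computation on a Jordan cell of size $\nu$ gives $\norm{J^{-K}} \leq C_\nu K^{\nu-1}\rho^{-K}$. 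The product $C_\nu K^{\nu-1}(r/\rho)^K$ tends to $0$, so $\mathcal{C}^K$ is a strict contraction for $K$ large. Banach's fixed point theorem applied to $\mathcal{C}^K$ then furnishes the unique continuous solution $F$.

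For the Hölder estimate, given $x,y\in[0,1]$, let $K$ be the integer with $\base^{-K-1} < \abs{y-x} \leq \base^{-K}$. Applying the iterated identity to $F$ at both points and subtracting, only a bounded number of words $w$ yield nonzero $F$-differences (those for which $\base^K x-\Bint{w}$ or $\base^K y-\Bint{w}$ is near $[0,1]$), and each such difference is bounded crudely by $2\norm{F}_\infty$. Thus
\[
\norm{F(y) - F(x)} \leq C' K^{\nu-1}(r/\rho)^K.
\]
Since $(r/\rho)^K = \base^{-K\log_\base(\rho/r)} \asymp \abs{y-x}^{\log_\base(\rho/r)}$, and the logarithmic factor $K^{\nu-1} \asymp (\log\tfrac{1}{\abs{y-x}})^{\nu-1}$ is absorbed by running the argument with any $r'\in(\rhojsr,r)$ in place of $r$, the bound certifies Hölder continuity of exponent $\log_\base(\rho/r)$ for every $r > \rhojsr$.

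The main technical obstacle is the polynomial blow-up in $\norm{J^{-K}}$ coming from a nontrivial Jordan block of size $\nu>1$: it is what forces us to iterate $\mathcal{C}$ over $K$ steps at once instead of contracting in a single step, and it explains why the Hölder regularity can only be quoted with a slack over every $r > \rhojsr$ rather than sharply at $r=\rhojsr$.
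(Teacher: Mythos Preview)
Your argument is correct and follows essentially the same route as the paper's sketch: both invoke the Banach fixed-point theorem on an iterate of the cascade operator $\Phi\mapsto\sum_b A_b\,\Phi(Bx-b)\,J^{-1}$, and both derive the H\"older bound from the product estimate $\max_{|w|=K}\norm{A_w}\cdot\norm{J^{-K}}$. The paper merely points to the literature for the details, while you supply them explicitly; the iteration formula, the word-counting for the H\"older step, and the absorption of the logarithmic factor by passing to a smaller $r'$ are all handled correctly.

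One small remark on your closing commentary: the need to iterate $\mathcal{C}$ many times before contraction is not solely due to the Jordan block. Even when $\nu=1$ one may have $\max_b\norm{A_b}>\rho$, so a single step of $\mathcal{C}$ need not contract; the iteration is already forced by the very definition~\eqref{dumas-rrseq-v4:def:JointSpectralRadius} of $\rhojsr$ as a limit over long products. The polynomial factor $K^{\nu-1}$ is an additional nuisance on top of that, not the primary cause.
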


\begin{proof}[Sketch of the proof]
	The existence and uniqueness of the solution results from the Banach fixed-point theorem~\cite[\S4]{DaLa91}. More concretely, there is a power of the fixed-point operator $\cL \Phi (x) = \sum_{0 \leq b < \base} A_b \Phi (\base x - b) J^{-1}$ which is a contraction mapping. The result about the H{\"o}lderian character of the solution is common in the study of dilation equations~\cite[p.~1039]{DaLa92}.
\end{proof}

\medskip

\noindent
The solution of the dilation equation is not always explicit, but it can always be computed by the algorithm known as \emph{ cascade algorithm}~\cite[\S6.5,
p.~206]{Daubechies92}, which enter into the family of \emph{interpolatory schemes}~\cite{DyLe02}. All pictures in this article are computed by this algorithm. We will use it immediately in  the next example.

\begin{example}[Dichopile algorithm, continued]\label{dumas-rrseq-v4:ex:Dichopile-3}
With the previous linear representation for the sequence~$\delta_n$, the matrix 
\[\displaystyle Q = A_0 + A_1 = \left( \begin {array}{cccccc} 0&0&1&0&0&0\\ \noalign{\medskip}1&1&0&1&0&0\\ \noalign{\medskip}1&0&1&1&0&0\\ \noalign{\medskip}0&1&0&0&0&0\\ \noalign{\medskip}1&0&0&0&2&1\\ \noalign{\medskip}1&2&0&0&0&1\end {array} \right). \]
admits the Jordan normal form $\Lambda = P^{-1} Q P$, with

\begin{equation}\label{dumas-rrseq-v4:eq:JordanDichopile}
\Lambda =  \left( \begin {array}{cccccc} 2&1&0&0&0&0\\ \noalign{\medskip}0&2&0&0
&0&0\\ \noalign{\medskip}0&0&1&1&0&0\\ \noalign{\medskip}0&0&0&1&0&0
\\ \noalign{\medskip}0&0&0&0&-1&0\\ \noalign{\medskip}0&0&0&0&0&0
\end {array} \right)
 ,\quad
P =  \frac{1}{12}\left( \begin {array}{cccccc} 0&2&0&6&-2&-6\\ \noalign{\medskip}0&4&0
&-6&2&0\\ \noalign{\medskip}0&4&0&6&2&0\\ \noalign{\medskip}0&2&0&-6&-
2&6\\ \noalign{\medskip}12&-16&6&15&1&0\\ \noalign{\medskip}0&10&-6&-
15&-1&6\end {array} \right). 
\end{equation}

Let us consider only the Jordan block relative to~$2$, in the upper left corner. It has size~$2$, and we have to find two vector-valued functions~$F^{0}(x)$ and~$F^{1}(x)$, which we rename~$f(x)$ and~$g(x)$ to shorten. The equation for~$f(x)$ is a homogeneous dilation equation. It is readily seen that it solves into $f_5(x) = x$ and $f_i(x) = 0$ for the other indices~$i$. (Obviously these expressions are valid only for $x$ in~$[0,1]$.) 

Let us consider more carefully the dilation equation for~$g(x)$. It is non-homogeneous and needs the knowledge of~$f(x)$ to be solved. It writes for $0 \leq x \leq 1$
\begin{align*}
 g_{1} \left( x \right) & = \frac{1}{2}\,g_{3} \left( 2\,x-1 \right) 
\\
 g_{2} \left( x \right) & = \frac{1}{2}\,g_{1} \left( 2\,x \right) +\frac{1}{2}\,g_{4} \left( 2\,x \right) +\frac{1}{2}\,g_{2} \left( 2\,x-1 \right) 
\\
 g_{3} \left( x \right) & = \frac{1}{2}\,g_{3} \left( 2\,x \right) +\frac{1}{2}\,g_{1} \left( 2\,x-1 \right) +\frac{1}{2}\,g_{4} \left( 2\,x-1 \right) 
\\
 g_{4} \left( x \right) & = \frac{1}{2}\,g_{2} \left( 2\,x \right) 
\\
 g_{5} \left( x \right) & = \frac{1}{2}\,g_{5} \left( 2\,x \right) +\frac{1}{2}\,g_{1} \left( 2\,x-1 \right) +\frac{1}{2}\,g_{5} \left( 2\,x-1 \right) +\frac{1}{2}\,g_{6} \left( 2\,x-1 \right) -\frac{1}{2}\, x 
\\
 g_{6} \left( x \right) & = \frac{1}{2}\,g_{1} \left( 2\,x \right) +\frac{1}{2}\,g_{2} \left( 2\,x \right) +\frac{1}{2}\,g_{6} \left( 2\,x \right) +\frac{1}{2}\,g_{2} \left( 2\,x-1 \right)  
\end{align*}
With~$g(0)$ and~$g(1)$ as input, the cascade algorithm computes first~$g(1/2)$, next~$g(1/4)$ and~$g(3/4)$, and in the end all the values~$g(i/2^K)$ with $0 \leq i \leq 2^K$ up to some depth~$K$, using the formul{\ae} above. In this way, we draw the pictures of Figure~\ref{dumas-rrseq-v4:fig:Dichopile-G-Functions}, and we see explicit expressions for~$g_1$, $g_2$, $g_3$, $g_4$, like
\[
g_1(x) = \left\{
\begin{array}{ll}
	0 & \text{if $0\leq x \leq 1/2$,}\\
	(x-1/2)/3 & \text{if $1/2 \leq x \leq 1$.}
\end{array}
\right.
\]
Such empirical results are not a proof, but the system has a unique continuous solution and a mere substitution establishes the formul{\ae}. Hence we know the four first components. Next we find elementarily 
\[
g_6(x) = \left\{
\begin{array}{ll}
	x/3 + 1/2 & \text{if $1/2 \leq x \leq 1$,}\\
	(k+2) x/3 + 1/(3\times 2^k) & \text{if $1/2^{k+1} \leq x \leq 1/2^k$ with $k \geq 1$.}
\end{array}
\right.
\]
However we do not have an explicit expression for~$g_5(x)$. We only know that it is well and  completely defined by the dilation equation 
\begin{equation*}
	g_5(x) = \frac{1}{2} g_5(2x) + \frac{1}{2} g_5(2x-1) + h(x) \qquad\text{for $0 \leq x \leq 1$,}
\end{equation*}
where~$h(x)$ is a known piecewise affine function, with the boundary conditions $g_5(0) = 0$, $g_5(1) = -4/3$. Moreover, it is H{\"o}lder continuous with exponent $\log_2 (2/r)$ for every $r > \rhojsr = 1$, that is with exponent $\varepsilon$ for every $\varepsilon < 1$.
\begin{figure}[t]
	\begin{center}
		\includegraphics[width=0.3\linewidth]{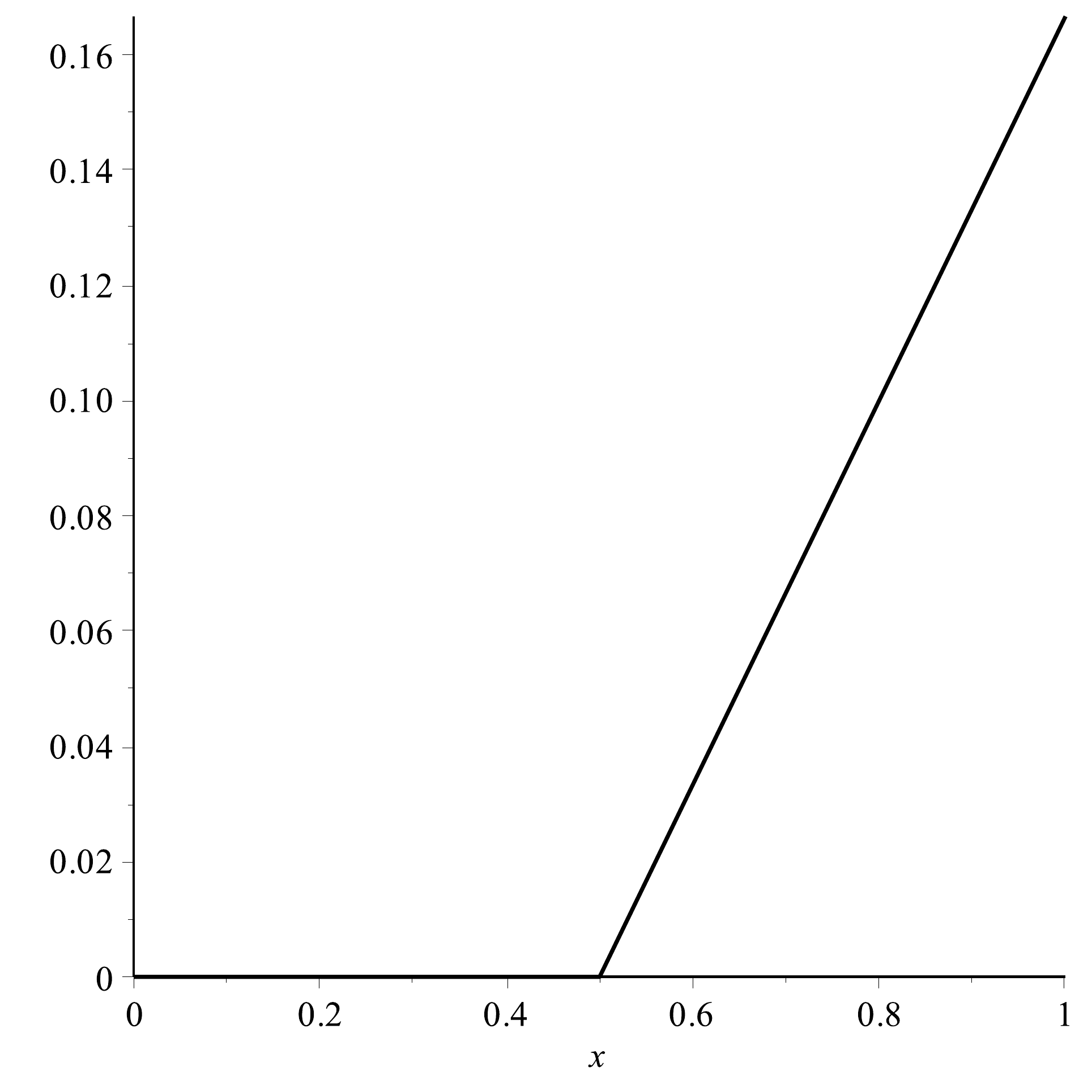}
		\hfil
		\includegraphics[width=0.3\linewidth]{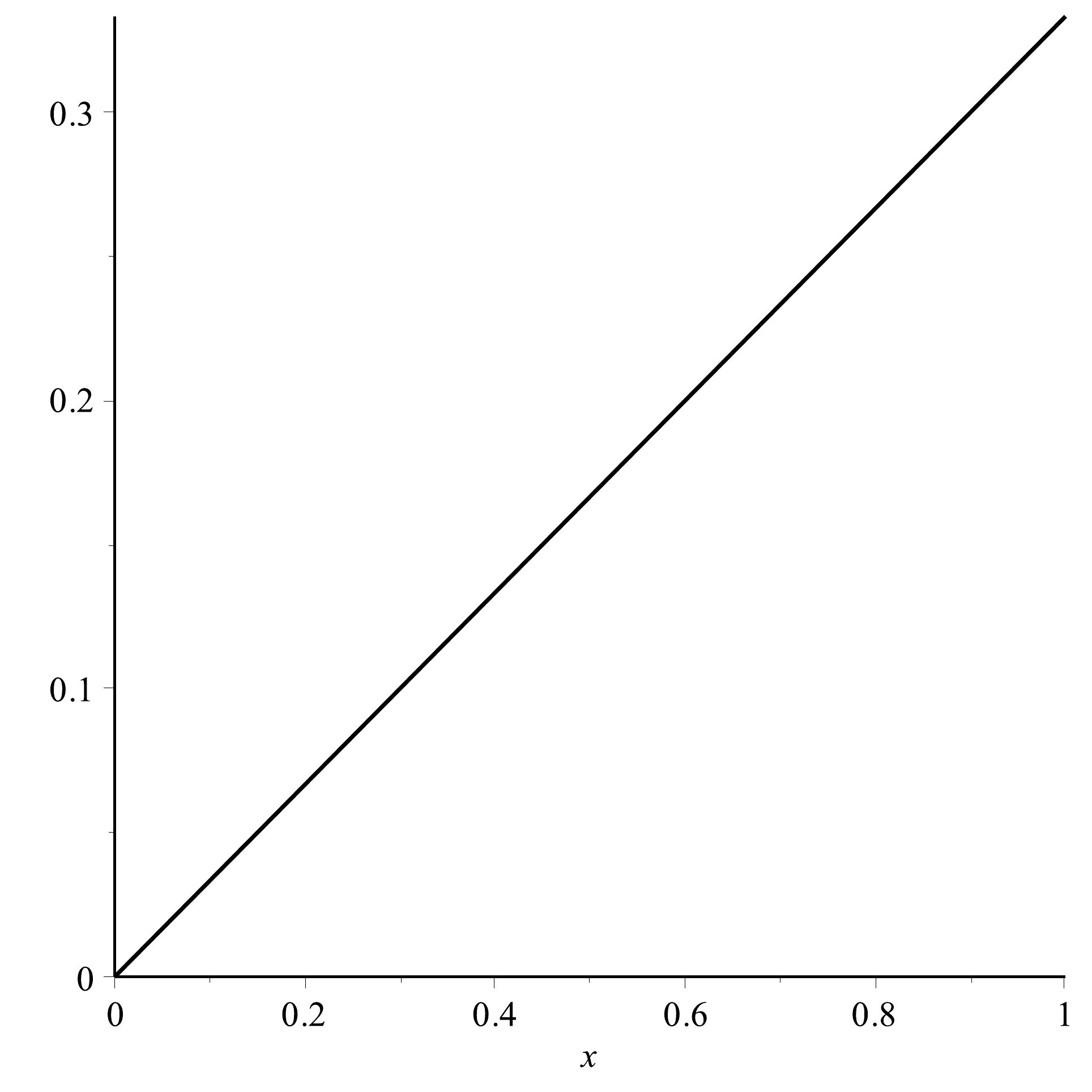}
		\hfil
		\includegraphics[width=0.3\linewidth]{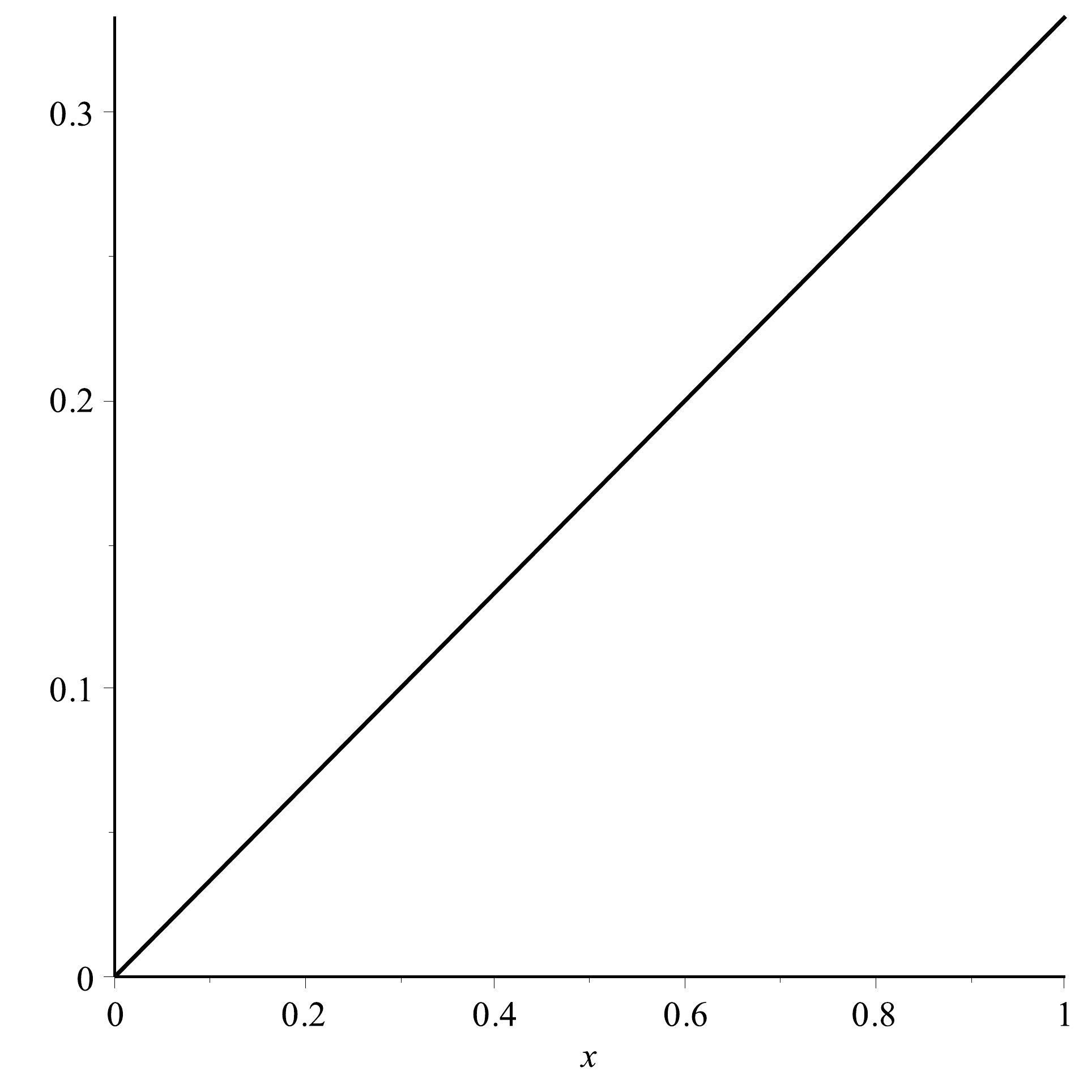}
		\\
		\includegraphics[width=0.3\linewidth]{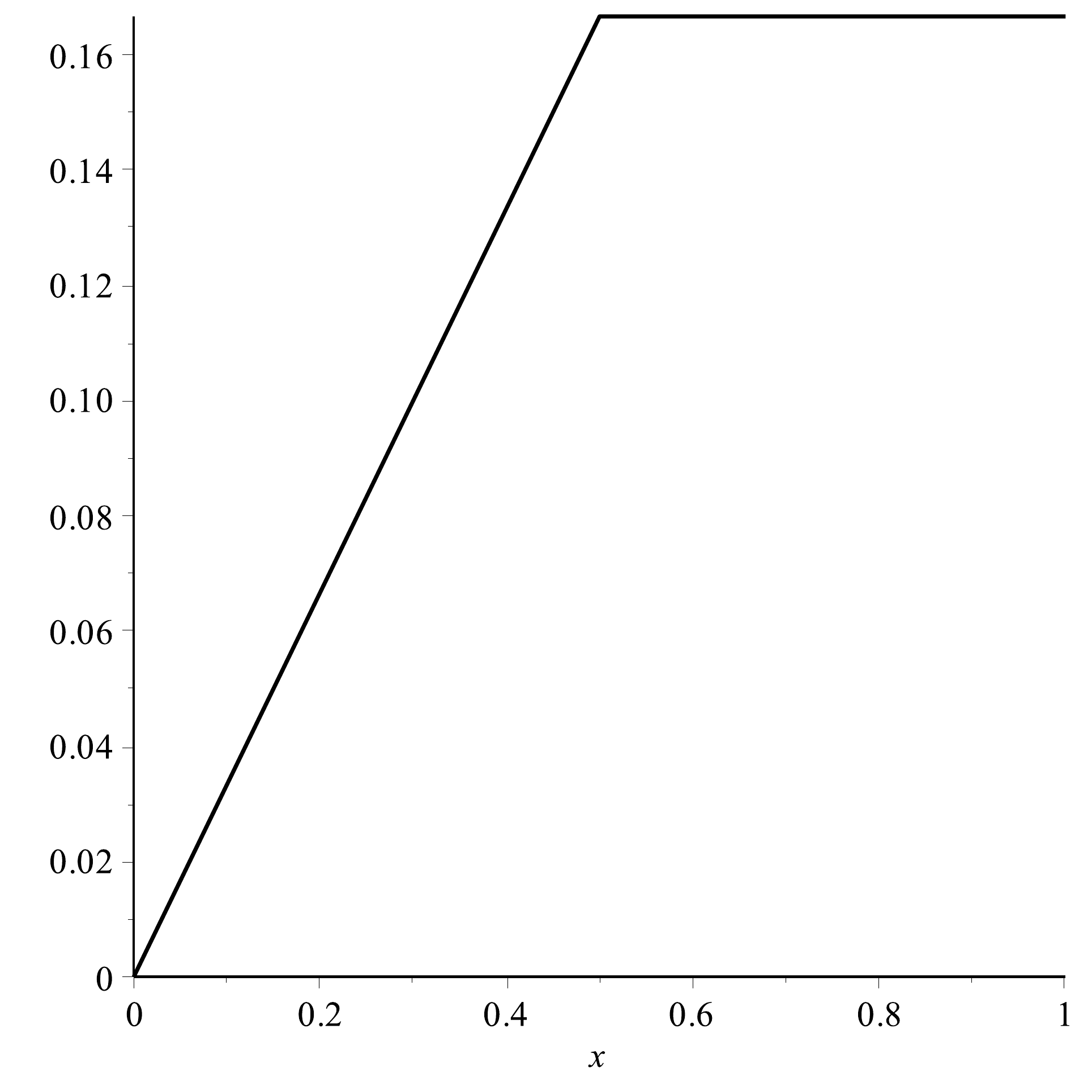}
		\hfil
		\includegraphics[width=0.3\linewidth]{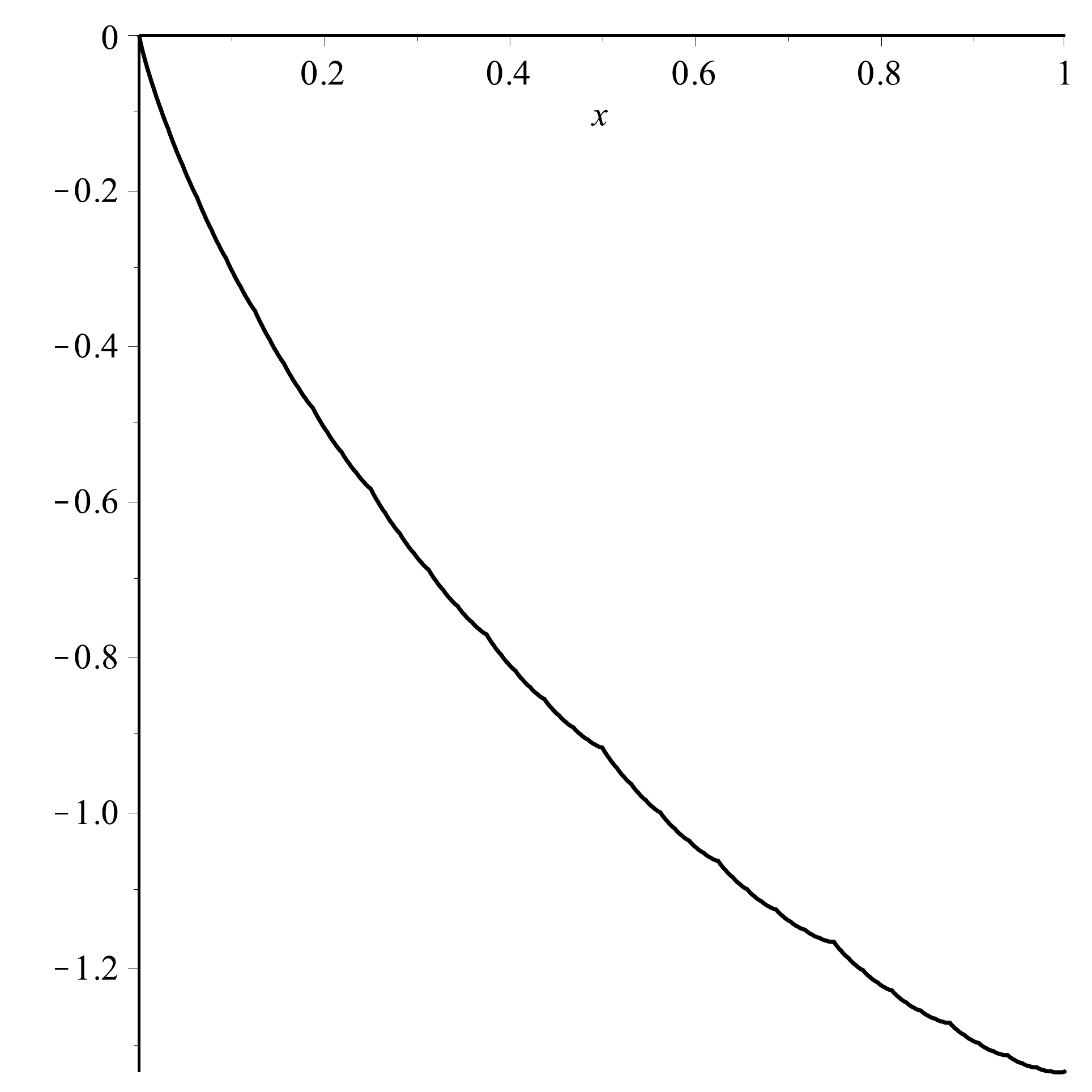}
		\hfil
		\includegraphics[width=0.3\linewidth]{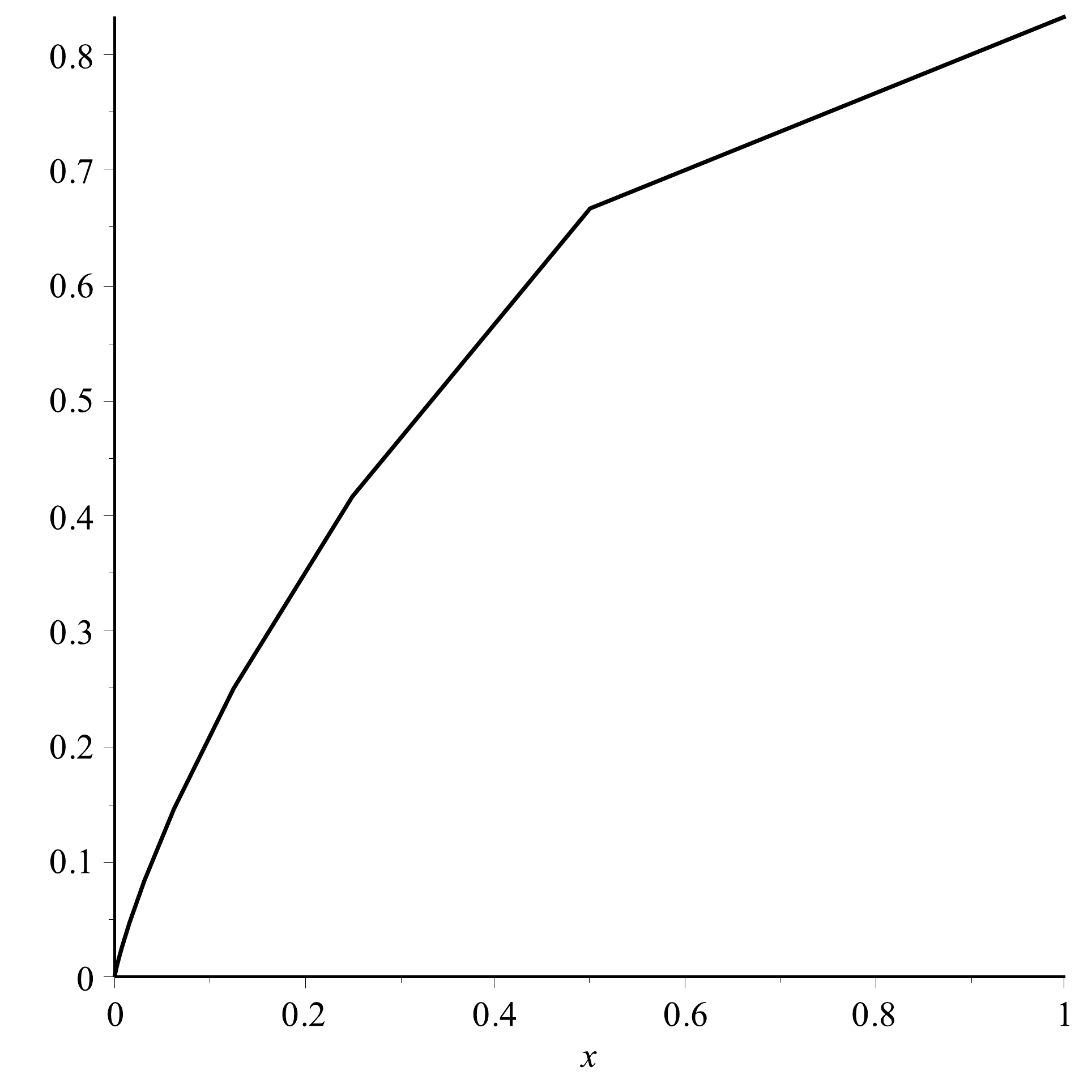}
	\end{center}
	\caption{\label{dumas-rrseq-v4:fig:Dichopile-G-Functions}
	The six components (from left to right, and from top to bottom) of the vector-valued function~$g(x) = F^1(x)$ for the dilation equation associated with the dichopile algorithm.
	}
\end{figure}
\end{example}

\section{Asymptotic expansion}\label{dumas-rrseq-v4:sec:AsymptoticExpansion}

\subsection{{Asymptotic expansion for the rational series}}
We proceed to a Jordan reduction of the matrix~$Q$, which is, let us call it to mind, the sum of the square matrices of the linear representation for the sequence~$u_n$. Next we expand the column vector~$C$ over the Jordan basis, so that we have to consider the sum~$\slSigma_K(x)$, defined by~\eqref{dumas-rrseq-v4:eq:VectorValuedSumDefinition}, relative to each vector of the Jordan basis. That is, using the same notations as in the previous section, we have to consider 
\[
	\slSigma_K(x) = \sum_{\begin{subarray}{c}\length{w} = K\\ \Bexp{w} \leq x\end{subarray}} A_w V^{(\nu -1)}.
\]
When $x$ goes from $0$ to $1$, the sum goes from nearly the null vector in~$\bC^d$ for $x = 0$ to the vector $Q^KV^{(\nu-1)}$ for $x = 1$. This last quantity evaluates to 
\begin{multline}
Q^KV^{(\nu-1)}=\binom K {\nu-1}(\rho\omega)^{K-\nu+1}V^{(0)}+
\binom K {\nu-2}(\rho\omega)^{K-\nu+2}V^{(1)}+
\dotsb\\+
\binom K 1 (\rho\omega)^{K-1}V^{(\nu-2)}+
(\rho\omega)^KV^{(\nu-1)},
\end{multline}
with the help of the relationship $Q V = V J$ (as in the previous section, matrix~$V$ has type $[1,d]\times \left[0,\nu\right)$ and its columns are~$V^{(0)}$, $\ldots$, $V^{(\nu - 1)}$). It is not too astonishing that the intermediate values may be described by the following expression
\begin{multline}\label{dumas-rrseq-v4:eq:asymptoticpartforJordancell}
A_K(x)=
\binom K {\nu-1}(\rho\omega)^{K-\nu+1}{F}^{(0)}(x)+
\binom K {\nu-2}(\rho\omega)^{K-\nu+2}{F}^{(1)}(x)+
\dotsb\\+
\binom K 1 (\rho\omega)^{K-1}{F}^{(\nu-2)}(x)+
(\rho\omega)^K{F}^{(\nu-1)}(x),
\end{multline}
at least asymptotically, since functions~$F^{(k)}(x)$ are obtained by a refinement scheme whose input are both values at the ends of the interval.

\begin{lemma}\label{dumas-rrseq-v4:lemma:AsymptoticExpansionJordanVector}
	For a Jordan vector~$V^{(\nu -1)}$ associated with an eigenvalue~$\rho \omega$, whose modulus~$\rho$ is larger than the joint spectral radius, that is $\rho > \rhojsr$, the expression~$A_K(x)$ above is an asymptotic expansion for the sum~$\slSigma_K(x)$. The error term is~$O(r^K)$ for every~$r$ between~$\rho$ and~$\rhojsr$, and it is uniform with respect to~$x$. 
\end{lemma}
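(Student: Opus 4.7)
The plan is to identify the candidate main term $A_K(x)$ with a compact matrix-valued expression $\Phi_K(x) := F(x)\, J^K e_{\nu-1}$, where $F$ is the continuous solution of the homogeneous dilation equation~\eqref{dumas-rrseq-v4:eq:HomogeneousDilationEquation} supplied by Lemma~\ref{dumas-rrseq-v4:lemma:TheoremAboutDilationEquations}, $J = J_{\nu,\rho\omega}$ is the Jordan block, and $e_{\nu-1}$ is the last standard basis vector of~$\bC^{\nu}$. Writing $J = \rho\omega\, \matI + N$ with $N$ nilpotent and expanding $J^K e_{\nu-1}$ produces exactly the binomial coefficients $\binom{K}{j}(\rho\omega)^{K-j}$ occurring in~\eqref{dumas-rrseq-v4:eq:asymptoticpartforJordancell}, so $\Phi_K(x)$ coincides with $A_K(x)$. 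It therefore suffices to bound the error $E_K(x) := \slSigma_K(x) - \Phi_K(x)$.

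Next I would check that $\Phi_K$ satisfies the same recursion as $\slSigma_K$. Multiplying~\eqref{dumas-rrseq-v4:eq:HomogeneousDilationEquation} on the right by $J^K e_{\nu-1}$ gives $\Phi_{K+1}(x) = \sum_{b} A_b \Phi_K(\base x - b)$, valid for all real~$x$ once $F$ is extended by the boundary conditions~\eqref{dumas-rrseq-v4:eq:BoundaryConditions}. For $x = \Bexp{x_1 x_2 \ldots}$ in $[0,1)$ those conditions collapse the sum: $\Phi_K(\base x - b) = V J^K e_{\nu-1} = Q^K V^{(\nu-1)}$ for $b < x_1$, and $\Phi_K(\base x - b) = 0$ for $b > x_1$, so only the term $b = x_1$ retains the nontrivial evaluation and one recovers exactly the recursion of Lemma~\ref{dumas-rrseq-v4:lemma:RecursionFormula} specialised to $C = V^{(\nu-1)}$.

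As a consequence, the difference $E_K$ satisfies the purely homogeneous recursion $E_{K+1}(x) = A_{x_1} E_K(\base x - x_1)$. Iterating $K$ times yields
\[
E_K(x) = A_{x_1} A_{x_2} \dotsb A_{x_K}\, E_0(y)
\]
for some $y \in [0,1]$, and $E_0$ is bounded on $[0,1]$ since $F^{(\nu-1)}$ is continuous there. Fixing $r$ with $\rhojsr < r < \rho$ and invoking, exactly as in the proof of Lemma~\ref{dumas-rrseq-v4:lemma:ErrorTerm}, the existence of an integer~$T$ with $\|A_w\| \leq r^T$ for every length-$T$ word~$w$, I would split the product $A_{x_1} \dotsb A_{x_K}$ into blocks of length~$T$ plus a bounded remainder and apply submultiplicativity to conclude $\|A_{x_1} \dotsb A_{x_K}\| = O(r^K)$, uniformly in the digits. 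This delivers $\|E_K(x)\| = O(r^K)$ uniformly in~$x$.

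The main obstacle I expect is the careful bookkeeping relating the dilation equation, which is a statement on the whole real line via the boundary extension~\eqref{dumas-rrseq-v4:eq:BoundaryConditions}, to the purely combinatorial recursion of Lemma~\ref{dumas-rrseq-v4:lemma:RecursionFormula}: the extension by $0$ and~$V$ outside $[0,1]$ is precisely engineered so that the three cases $b < x_1$, $b = x_1$, $b > x_1$ reproduce the three contributions of that recursion. Once this identification is secured, the joint spectral radius argument is essentially a rerun of Lemma~\ref{dumas-rrseq-v4:lemma:ErrorTerm}. The hypothesis $\rho > \rhojsr$ enters twice: once in Lemma~\ref{dumas-rrseq-v4:lemma:TheoremAboutDilationEquations} to ensure that $F$ exists and is continuous, and again in being able to choose $\rhojsr < r < \rho$ so that the error $O(r^K)$ is genuinely smaller than the main contribution, which is of order $\rho^K K^{\nu-1}$; this is what makes the expansion a true asymptotic expansion.
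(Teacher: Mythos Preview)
Your proposal is correct and follows essentially the same route as the paper's sketch: both show that the candidate main term satisfies the same recursion~\eqref{dumas-rrseq-v4:eq:RecursionFormula} as $\slSigma_K$, deduce that the error obeys the purely homogeneous relation $E_{K+1}(x)=A_{x_1}E_K(\base x-x_1)$, iterate, and bound the resulting long product via the joint spectral radius. Your compact rewriting $A_K(x)=F(x)J^K e_{\nu-1}$ and your explicit use of the boundary extension~\eqref{dumas-rrseq-v4:eq:BoundaryConditions} to reconcile the three cases $b<x_1$, $b=x_1$, $b>x_1$ make transparent a step the paper leaves implicit.
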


\begin{proof}[Sketch of the proof]
	By substitution into the basic recursion Formula~\eqref{dumas-rrseq-v4:eq:RecursionFormula} of $\slSigma_K(x) = A_K (x) + \Delta_K (x)$, we find $\Delta_{K+1}(x) = A_{x_1} \Delta_K (\base x - x_1)$. Iterating this relationship, we bring out long products of matrices~$A_b$ whose norm is about $\rhojsr^K$ and consequently~$O(r^K)$.
\end{proof}

\begin{example}[Dichopile algorithm, continued]\label{dumas-rrseq-v4:ex:Dichopile-4}
Lemma~\ref{dumas-rrseq-v4:lemma:AsymptoticExpansionJordanVector} permits us to find an asymptotic expansion for~$S_K(x)$ as follows. We expand the column vector~$C$ of the linear representation~\eqref{dumas-rrseq-v4:eq:DichopileLinearRepresentation} onto the Jordan basis.
We find
\[
C = V_2^0 + V_2^1 + V_1^1 - 2V_{-1}^0 - V_0^0,
\]
with natural notations: $V_\lambda^j$ is a generalized eigenvector for the eigenvalue~$\lambda$.  
According to Lemma~\ref{dumas-rrseq-v4:lemma:ErrorTerm}, we have to take into account only the Jordan block relative to~$2$ in matrix~$\Lambda$ of Formula~\eqref{dumas-rrseq-v4:eq:JordanDichopile}, and consequently only 
$V_2^0 + V_2^1$. 

We apply Lemma~\ref{dumas-rrseq-v4:lemma:AsymptoticExpansionJordanVector} and specifically  Formula~\eqref{dumas-rrseq-v4:eq:asymptoticpartforJordancell} to obtain the contribution of each involved generalized eigenvector. 
With the notations of Example~\ref{dumas-rrseq-v4:ex:Dichopile-3}, the contribution of~$V_2^0$ to the sum~$\slSigma_K(x)$, relative to~$C$, is $2^K F^0(x) = 2^K f(x)$. 
Similarly, the contribution of~$V_2^1$ is $2^{K-1}K F^0(x) + 2^K F^1(x) = 2^{K-1}K f(x) + 2^K g(x)$, with $\nu = 1$.  So that we obtain
\begin{equation*}
	\slSigma_K(x) \mathop{=}_{K\to +\infty} 2^{K-1}(K+2) f(x) + 2^K g(x) + O(r^K)
\end{equation*}
for $1 < r < 2$. Henceforth the sum~$S_K(x) = L \slSigma_K(x)$, defined in Formula~\eqref{dumas-rrseq-v4:def:Sums}, satisfies
\begin{equation}\label{dumas-rrseq-v4:eq:AsymptoticExpansionForTheRationalSeries}
	S_K(x) \mathop{=}_{K\to +\infty} 2^{K-1}(K+2) x + 2^K g_5(x) + O(r^K).
\end{equation}

\end{example}

\vskip 2ex

\subsection{Asymptotic expansion for the radix-rational sequence}
Equation~\eqref{dumas-rrseq-v4:eq:BasicEquationSimplified} in Lemma~\ref{dumas-rrseq-v4:lemma:BasicEquation}, which relates~$s_N$ and~$S_{K+1}$, gives us the way to obtain the asymptotic expansion of the sequence~$s_N$, in other words the main result of this article.

\begin{figure}[t]
	\centering
		\includegraphics[width=0.5\linewidth]{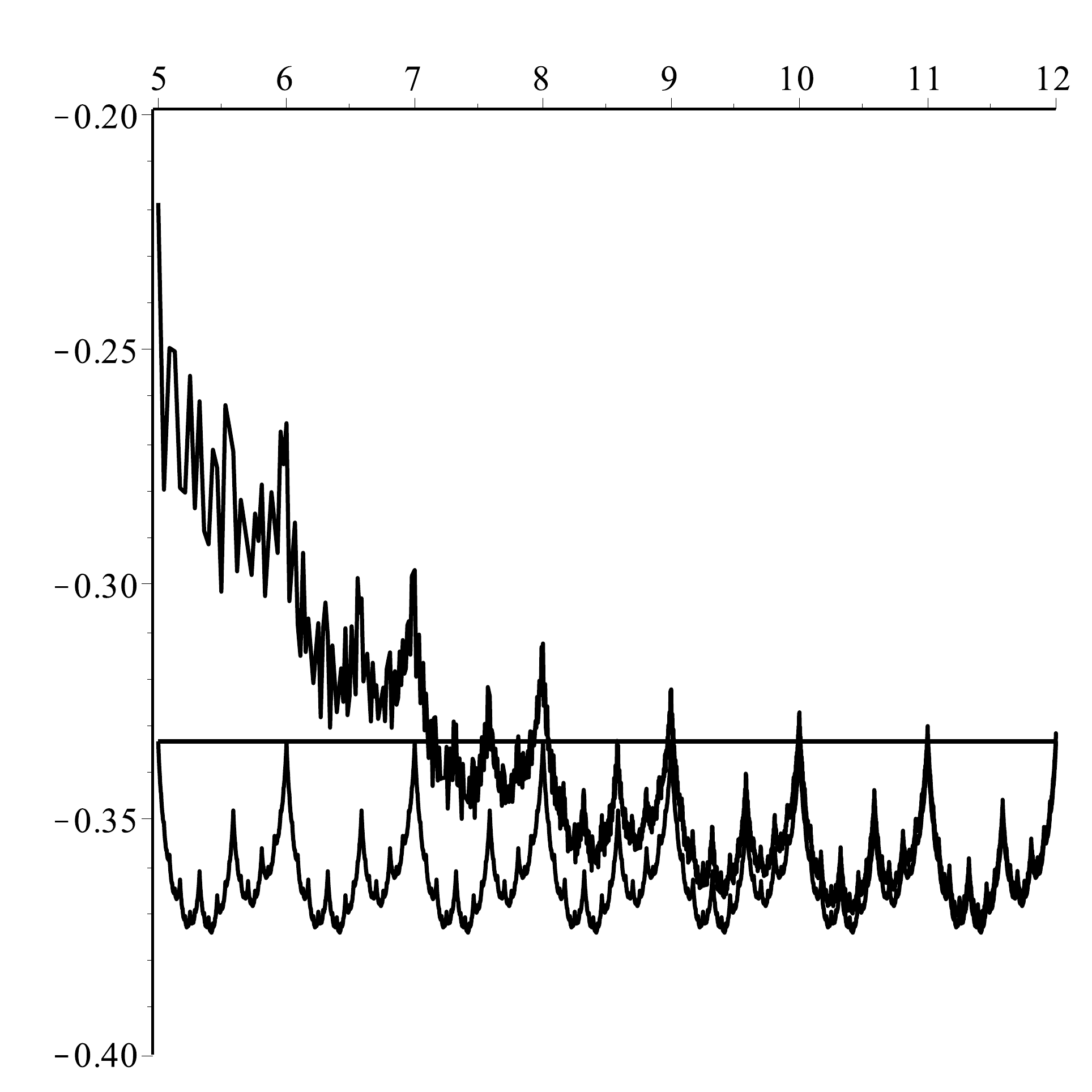}
	\caption{
	\label{dumas-rrseq-v4:fig:Dichopile-Phi-Function}
	Theory is closely akin to practice: for the dichopile algorithm, the normalized sequence $N^{-1} (f_N - N \log_2(N)/2) $ approaches the $1$-periodic function~$\Phi(t)$ as~$N$ goes towards infinity. Absciss{\ae} are in logarithmic scale for~$N$. The maximum value of the periodic function is~$-1/3$.
	}
\end{figure}

\begin{theorem}\label{dumas-rrseq-v4:thm:Theorem}
	Let~$s_N$ be a radix-rational sequence, whose the sequence of backward differences is defined by a linear representation~$L$, $(A_b)_{0 \leq b < \base}$, $C$,  insensitive to the leftmost zeroes. Then the sequence~$s_N$ admits an asymptotic expansion which is a sum of terms
\begin{equation}\label{dumas-rrseq-v4:eq:ASBasicTerm}
	N^{\log_\base \rho} \binom{\log_\base N}{m}	\times e^{i \vartheta \log_\base N} \times \Phi(\log_\base N),
\end{equation}
indexed by the eigenvalues~$\rho e^{i \vartheta}$ of~$Q = A_0 + \dotsb + A_{\base -1}$ whose modulus~$\rho$ is larger than the joint spectral radius~$\rhojsr$, with~$m$ a nonnegative integer and~$\Phi(t)$ a $1$-periodic function. The error term is~$O(N^{\log_\base r})$ for every $r > \rhojsr$. Moreover the $1$-periodic function~$\Phi(t)$
is H{\"o}lder continuous with exponent~$\log_\base (\rho / r)$ for every $r$ between~$\rho$ and~$\rhojsr$. 
\end{theorem}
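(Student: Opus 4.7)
The plan is to combine the three technical tools prepared in Sections~\ref{dumas-rrseq-v4:sec:BasicIdeas}--\ref{dumas-rrseq-v4:sec:DilationEquations}, namely Lemmas~\ref{dumas-rrseq-v4:lemma:BasicEquation},~\ref{dumas-rrseq-v4:lemma:ErrorTerm},~\ref{dumas-rrseq-v4:lemma:TheoremAboutDilationEquations}, and~\ref{dumas-rrseq-v4:lemma:AsymptoticExpansionJordanVector}, and then to repackage the outcome in the natural variable $\log_\base N$.

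First I would invoke Lemma~\ref{dumas-rrseq-v4:lemma:BasicEquation} to write $s_N = L\slSigma_{K+1}(\base^{t-1})$, so that the analysis is reduced to the vector-valued sum $\slSigma$. I would then fix a Jordan basis of $\bC^d$ for the matrix $Q$, organised in chains $(V^{(0)}_j, \dotsc, V^{(\nu_j-1)}_j)$ attached to eigenvalues $\rho_j \omega_j = \rho_j e^{i\vartheta_j}$, and decompose $C$ over that basis. By linearity of $\slSigma_K$ in its rightmost vector, the full sum splits into one contribution per Jordan chain; truncating at the top vector of a sub-chain reduces each term to the setting of Lemma~\ref{dumas-rrseq-v4:lemma:AsymptoticExpansionJordanVector}.

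For Jordan blocks with $\rho_j \le \rhojsr$, Lemma~\ref{dumas-rrseq-v4:lemma:ErrorTerm} absorbs the contribution into $O(r^{K})$ uniformly in $x$ for every $r>\rhojsr$. For Jordan blocks with $\rho_j > \rhojsr$, Lemma~\ref{dumas-rrseq-v4:lemma:AsymptoticExpansionJordanVector} supplies the expansion~\eqref{dumas-rrseq-v4:eq:asymptoticpartforJordancell} in terms of the components $F^{(k)}_j$ of the solution of the associated dilation equation~\eqref{dumas-rrseq-v4:eq:HomogeneousDilationEquation}, with the same error quality. After applying~$L$ and specialising $x = \base^{t-1}$, the partial sum $s_N$ appears as a finite combination of terms
\[
\binom{K+1}{m}\,(\rho\omega)^{K+1-m}\,\bigl(L F^{(k)}(\base^{t-1})\bigr),
\]
with $0 \le m < \nu_j$, plus a remainder $O(r^{K})$.

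The last step, which I expect to be the main obstacle by sheer bookkeeping, is to translate each such term into the scale~\eqref{dumas-rrseq-v4:eq:ASBasicTerm}. Using $N=\base^{K+t}$, I would write $\rho^{K+1-m} = \rho^{1-t-m} N^{\log_\base\rho}$ and $\omega^{K+1-m} = \omega^{1-t-m} e^{i\vartheta \log_\base N}$; the binomial $\binom{K+1}{m}$ is a polynomial of degree $m$ in $K = \log_\base N - t$, so expanding it as a combination of $\binom{\log_\base N}{m'}$ with $m'\le m$ only introduces coefficients that are polynomial in $t=\{\log_\base N\}$ and hence $1$-periodic in $\log_\base N$. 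Collecting all factors depending on $t$ into one function then yields the claimed $1$-periodic $\Phi(t)$, whose leading-order shape for a given triple $(\rho,\vartheta,m)$ is, up to a constant, $\rho^{1-t}\omega^{1-t} L F^{(k)}(\base^{t-1})$ extended by $1$-periodicity; the lower-order corrections from the binomial expansion get reabsorbed into the functions $\Phi$ attached to smaller values of $m$. The remainder $O(r^K)$ rewrites as $O(N^{\log_\base r})$ via $r^K = r^{-t} N^{\log_\base r}$. Finally, Lemma~\ref{dumas-rrseq-v4:lemma:TheoremAboutDilationEquations} ensures that every $F^{(k)}$ is H{\"o}lder with exponent $\log_\base(\rho/r)$; since $t\mapsto \base^{t-1}$ is $C^\infty$, this regularity transfers to $\Phi$, and the matching across integer values of $\log_\base N$ needed to make the periodic extension continuous is exactly the boundary condition~\eqref{dumas-rrseq-v4:eq:BoundaryConditions} built into the dilation equation.
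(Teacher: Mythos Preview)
Your proposal is correct and follows essentially the same route as the paper: reduce $s_N$ to $L\slSigma_{K+1}(\base^{t-1})$ via Lemma~\ref{dumas-rrseq-v4:lemma:BasicEquation}, split $C$ over a Jordan basis of~$Q$, handle small eigenvalues by Lemma~\ref{dumas-rrseq-v4:lemma:ErrorTerm} and large ones by Lemma~\ref{dumas-rrseq-v4:lemma:AsymptoticExpansionJordanVector}, and then rewrite in the variable $\log_\base N$. Two cosmetic remarks: the paper names the step where you expand $\binom{K+1}{m}=\binom{\log_\base N+1-\{t\}}{m}$ into $\sum_{m'}\binom{\log_\base N}{m'}\binom{1-\{t\}}{m-m'}$ as the Chu--Vandermonde identity, which makes the $1$-periodic coefficients explicit; and for the continuity of $\Phi$ at integers the paper points to the relation $QV=VJ$ rather than to the boundary condition~\eqref{dumas-rrseq-v4:eq:BoundaryConditions}---both ingredients are in play, since one needs $F(1)=V$ together with the consistency of $A_{K+1}$ and $A_K$ under the recursion, which is exactly what $QV=VJ$ encodes.
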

It should be noted that the assumption of insensitivity is only a convenience that simplifies the computation. See~\cite{Dumas13} for a general version.

\begin{proof}[Sketch of the proof]
The error term is $O(r^{\log_\base N}) = O(N^{\log_\base r})$ for every~$r > \rhojsr$ according to Lemma~\ref{dumas-rrseq-v4:lemma:AsymptoticExpansionJordanVector}.  

A mere substitution in $S_{K+1}(x)$ (not in $S_K(x)$, however) translates the result about~$S_K(x)$ into the expected result about~$s_N$. First we write $\omega = e^{i \vartheta}$ in~\eqref{dumas-rrseq-v4:eq:asymptoticpartforJordancell} with~$\vartheta$ a real number. Next, for each term $\binom{K}{\ell} (\rho \omega)^{K-\ell} F(x)$, we change~$K$ into~$K + 1$ and~$x$ into $\base^{\sawtooth{t}-1}$. Moreover we rewrite $K = \lfloor \log_\base N \rfloor$ as $\log_\base N - \sawtooth{t}$ (implicitly $t = \log_\base N$) and we obtain the regular part of an asymptotic expansion for~$s_N$ as a sum of terms
\[
\binom{\log_\base N + 1 - \sawtooth{t}}{\ell} (\rho e^{i \vartheta})^{\log_\base N - (\ell - 1 + \sawtooth{t})}  F(\base^{\sawtooth{t} - 1}).
\]
With the Chu-Vandermonde identity and $\rho^{\log_\base N} = N^{\log_\base \rho}$, this rewrites
\begin{multline}\label{dumas-rrseq-v4:eq:ChuVandermonde}
N^{\log_\base \rho}	e^{i \vartheta \log_\base N} 
\rho^{-\ell + 1 - \sawtooth{t}} e^{-i \vartheta (\ell - 1 + \sawtooth{t})}
F(\base^{\sawtooth{t} - 1}) 
\times 
\sum_{m=0}^\ell \binom{\log_\base N}{m} \binom{1 - \sawtooth{t}}{\ell - m}
\\
=
\sum_{m=0}^\ell 
N^{\log_\base \rho}
\binom{\log_\base N}{m} \times
e^{i \vartheta \log_\base N} \times 
\rho^{-\ell + 1 - \sawtooth{t}} e^{-i \vartheta (\ell - 1 + \sawtooth{t})}
F(\base^{\sawtooth{t} - 1}) \binom{1 - \sawtooth{t}}{\ell - m}
.
\end{multline}
It turns out that the expansion is a linear combination of the more elementary terms~\eqref{dumas-rrseq-v4:eq:ASBasicTerm}.

	The proof is complete, except the point about the H{\"o}lderian character of~$\Phi(t)$. This property comes from Lemma~\ref{dumas-rrseq-v4:lemma:TheoremAboutDilationEquations}, but the use of the fractional part~$\sawtooth{t}$ needs a complement: we have to verify the continuity at integers. All in all, it amounts to the equality $QV = VJ$ with the notations of Formula~\eqref{dumas-rrseq-v4:eq:DilationEquation}. 
\end{proof}

\begin{example}[Dichopile algorithm, the end]\label{dumas-rrseq-v4:ex:Dichopile-5}
	Applying the substitution explained above, we obtain readily
\begin{equation}\label{dumas-rrseq-v4:eq:DichopileExpansion}
	f_N \mathop{=}_{N \to + \infty}
	\frac{1}{2} N \log_2 N + N \Phi(\log_2 N) + O(N^{\varepsilon}),
\end{equation}
for $0 < \varepsilon < 1$ and with a $1$-periodic function
\begin{equation}\label{dumas-rrseq-v4:eq:DichopilePeriodicFunction}
	\Phi(t) = \frac{3-\{t\}}{2} + 2^{1-\{t\}} g_5(2^{\{t\}-1}),
\end{equation}
which is H{\"o}lder with exponent $\varepsilon$ for every $\varepsilon < 1$.
Let us detail the change from one asymptotic expansion to the other. First, using Equation~\eqref{dumas-rrseq-v4:eq:AsymptoticExpansionForTheRationalSeries}, we write $S_{K+1}(x) = 2^K (K+3) x + 2^{K+1} g_5(x) + O(r^K)$. Next we replace~$K$ by $\log_2 N - \sawtooth{t}$ and~$x$ by~$2^{\sawtooth{t}-1}$ to obtain
\begin{multline*}
f_N \mathop{=}_{N \to + \infty} 
2^{\log_2 N - \sawtooth{t}} (\log_2N - \sawtooth{t} + 3) 2^{\sawtooth{t}-1} + 
2^{\log_2 N - \sawtooth{t} + 1} g_5(2^{\sawtooth{t}-1}) + O(N^{\log_2 r})
\\ =
\frac{1}{2} 2^{\log_2 N } \log_2 N + \frac{1}{2} 2^{\log_2 N} (3 - \sawtooth{t}) + 2^{\log_2 N} 2^{1-\sawtooth{t}} g_5(2^{\sawtooth{t}-1}) + O(N^{\log_2 r})
\end{multline*}
and the writing above appears. Figure~\ref{dumas-rrseq-v4:fig:Dichopile-Phi-Function} shows the comparison between the sequence $N^{-1} (f_N - N\log_2 (N)/2)$ and the periodic function~$\Phi(t)$. 
\end{example}

\section{Improvements}\label{dumas-rrseq-v4:sec:ExamplesAndImprovements}

\subsection{{Lazy approach}}
Our first improvement is in fact a worsening. Perhaps the reader finds that all this algebraic machinery is too  complicated for he wants only an asymptotic equivalent for the sequence under consideration. In that case, it suffices to simplify the process of computation.

\begin{example}[Dichopile algorithm, the true end]\label{dumas-rrseq-v4:ex:Dichopile-6}
Let us assume that we are only interested by an equivalent for the cost~$f_N$ of the dichopile algorithm. We compute the spectral radius of the matrix~$Q$ and we find it is~$2$. 
Using the maximum  absolute column sum norm, we find numerically for the length $T = 10$
\[
\max_{\lvert w \rvert = 10}\norm{A_w}_1^{1/10} \simeq 1.271 < 2.
\]
(As a matter of fact, the example is so simple that we find at hand the value~$\sqrt 3 \simeq 1.732$ for the length~$2$ and this is sufficient, but we try to be a little more generic.) We proceed as in Example~\ref{dumas-rrseq-v4:ex:Dichopile-4}, but we retain only the contribution of the vector~$V_2^{1}$, and moreover we take only the first term in~\eqref{dumas-rrseq-v4:eq:asymptoticpartforJordancell}, that is $\binom K {\nu-1}(\rho\omega)^{K-\nu+1}{F}^{(0)}(x)$. As often for the cost of an algorithm, the function~$F^{(0)}(x) = f(x)$ is explicit (Ex.~\ref{dumas-rrseq-v4:ex:Dichopile-3}) and we obtain
\begin{equation*}
	S_K(x) \mathop{\sim}_{K\to+\infty} 2^{K-1}K x,
\qquad
\text{henceforth}
\qquad
	f_N \mathop{\sim}_{N\to+\infty} \frac{1}{2} N \log_2 N.
\end{equation*}

\end{example}

\vskip 2ex

\subsection{{Improvement of the error term}}

It is slightly irritating that in using Theorem~\ref{dumas-rrseq-v4:thm:Theorem}, we have always to write 
\emph{the error term is~$O(N^{\log_\base r})$ for every $r > \rhojsr$.} It would be simpler if we could write \emph{the error term is~$O(N^{\log_\base \rhojsr})$.} This is not true in full generality, but there is a circumstance that guarantees this property. 
In the definition of the joint spectral radius~\eqref{dumas-rrseq-v4:def:JointSpectralRadius}, it can happen that there exists a subordinate norm and a length~$T$ such that the equality
\begin{equation}\label{dumas-rrseq-v4:eq:FinitenessProperty}
\rhojsr = \max_{|w|=T} \norm{A_w}^{1/T}
\end{equation}
takes place. In such a case, it is said \emph{the set of matrices has the finiteness property}~\cite{JuBl08}. We will said also that the linear representation has the finiteness property. 

The coordinate vector~$C$ of the representation decomposes onto the Jordan basis used to reduce the matrix~$Q$. The eigenvalues associated to the generalized eigenvectors which occur in this decomposition can be sorted into two sets: first the set~$\Lambda_{>}$ of eigenvalues of~$Q$ larger than the joint spectral radius, next its complementary part~$\Lambda_{\leq}$. The first set~$\Lambda_{>}$ provides the regular part of the asymptotic expansion. The other set~$\Lambda_{\leq}$ provides the error term. 

\begin{lemma}\label{dumas-rrseq-v4:lemma:ImprovmentErrorTerm}
If the linear representation has the finiteness property, the error term in the expansion announced by Theorem~\ref{dumas-rrseq-v4:thm:Theorem} writes
\begin{itemize}
	\item[--] $O(N^{\log_\base \rhojsr})$ if~$\Lambda_{\leq}$ does not contain the joint spectral radius~$\rhojsr$,
	\item[--] $O(N^{\log_\base \rhojsr} \log_\base ^ m N)$ if $\rhojsr$ is a member of~$\Lambda_{\leq}$ and $m$ is the maximal size of the Jordan cells associated to~$\rhojsr$ and involved in the decomposition of the coordinate vector~$C$ over the Jordan basis.
\end{itemize}
\end{lemma}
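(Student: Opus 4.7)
The plan is to refine the induction used for Lemma~\ref{dumas-rrseq-v4:lemma:ErrorTerm}, upgrading the bound $O(r^K)$ valid for $r>\rhojsr$ to a tight bound $O(\rhojsr^K)$, enriched by a polynomial factor precisely when $\rhojsr$ itself occurs as an eigenvalue in the Jordan decomposition of~$C$. The new ingredient is the equality $\max_{|w|=T}\norm{A_w}=\rhojsr^T$ granted by the finiteness property, which converts into an equality the strict inequality previously used.

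First, I would fix a subordinate norm and an integer~$T$ realizing the finiteness property, decompose~$C$ over a Jordan basis of~$Q$, and, by linearity of $\slSigma_K$ in its source vector, treat separately the contribution of each generalized eigenvector $V^{(j)}$ ($0\leq j<\nu$) of a Jordan cell of size~$\nu$ attached to an eigenvalue~$\lambda$ with $\abs{\lambda}\leq\rhojsr$. Iterating Formula~\eqref{dumas-rrseq-v4:eq:RecursionFormula} of Lemma~\ref{dumas-rrseq-v4:lemma:RecursionFormula} $T$ times (by sorting length-$(K+T)$ words according to their length-$T$ prefix) produces a clean split
\[
\slSigma_{K+T}(x) = S(x) + A_{w'_x}\,\slSigma_K\!\left(\base^T x-\lfloor\base^T x\rfloor\right),
\]
where~$S(x)$ is a source term built from $Q^KV^{(j)}$ multiplied by length-$T$ products $A_{w'}$, and $A_{w'_x}$ is itself a length-$T$ product. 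The Jordan structure gives $\norm{Q^KV^{(j)}}=O(K^j\abs{\lambda}^K)$, while the finiteness property bounds both $\norm{A_{w'}}$ and $\norm{A_{w'_x}}$ by $\rhojsr^T$.

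Normalizing $a_K=\norm{\slSigma_K}_\infty/\rhojsr^K$, I expect a recursion of the form $a_{K+T}\leq c\,K^j(\abs{\lambda}/\rhojsr)^K+a_K$, whence a dichotomy. If $\abs{\lambda}<\rhojsr$, the series $\sum_K K^j(\abs{\lambda}/\rhojsr)^K$ converges, so $a_K$ stays bounded and $\norm{\slSigma_K}_\infty=O(\rhojsr^K)$, which transfers to an error $O(N^{\log_\base\rhojsr})$. If $\abs{\lambda}=\rhojsr$, the recursion degenerates to $a_{K+T}-a_K\leq cK^j$, yielding $a_K=O(K^{j+1})$ and $\norm{\slSigma_K}_\infty=O(K^{j+1}\rhojsr^K)$. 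Summing across the Jordan decomposition of~$C$, the dominant contribution comes from the top index $j=m-1$ of a cell of maximal size~$m$ attached to~$\rhojsr$, producing $O(K^m\rhojsr^K)=O(N^{\log_\base\rhojsr}\log_\base^m N)$ once one uses $K=\lfloor\log_\base N\rfloor$ and $\rhojsr^K=\Theta(N^{\log_\base\rhojsr})$. Lemma~\ref{dumas-rrseq-v4:lemma:BasicEquation} then propagates the bound from $\slSigma_{K+1}$ to~$s_N$.

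The delicate step I anticipate is the bookkeeping of the $T$-step iteration, together with the polynomial-times-exponential growth of $\norm{Q^KV^{(j)}}$ along the Jordan cascade, so that the constants are correctly tracked and the $T$-step blocking does not cost an extra factor in the induction. The conceptual content is, however, modest: finiteness is precisely what allows one to run the induction of Lemma~\ref{dumas-rrseq-v4:lemma:ErrorTerm} with equality in place of strict inequality in the dominant bound, and everything else is a standard weighted Gronwall-type argument.
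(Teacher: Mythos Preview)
Your proposal is correct and follows essentially the same route as the paper's own sketch: decompose~$C$ on a Jordan basis, feed each generalized eigenvector into the recursion of Lemma~\ref{dumas-rrseq-v4:lemma:RecursionFormula}, use the finiteness property to replace the strict bound $r>\rhojsr$ by the exact value~$\rhojsr^T$ on length-$T$ products, and then solve the resulting scalar recursion according to whether $\abs{\lambda}<\rhojsr$ or $\abs{\lambda}=\rhojsr$. Your presentation is in fact more explicit than the paper's (you write the $T$-step iteration and the normalized sequence~$a_K$ out), whereas the paper first argues the case $T=1$ and then defers the general~$T$ to the subsequences $\slSigma_{KT+s}$; one minor imprecision is that the source term after $T$ iterations involves $Q^{K},Q^{K+1},\dots,Q^{K+T-1}$ rather than a single $Q^K$, but since~$T$ is fixed this does not affect the asymptotics.
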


\begin{proof}[Sketch of the proof]
We deal with every eigenvalue not larger than~$\rhojsr$. We use the same notations as in Lemma~\ref{dumas-rrseq-v4:lemma:ErrorTerm}. In the first case $\rho < \rhojsr$, the proof of this proposition works with~$\rhojsr$ in place of~$r$. In the second case, let us assume for a while that we have $T = 1$ in~\eqref{dumas-rrseq-v4:eq:FinitenessProperty}. With the recursive formula~\eqref{dumas-rrseq-v4:eq:RecursionFormula} and $Q^K V^{\nu-1} = O(\rhojsr^K K^{\nu-1})$, we obtain
 $ \norm{\slSigma_{K+1}}_{\infty}\leq   c \rhojsr ^KK^{\nu-1}+ \rhojsr \norm{\slSigma_K}_{\infty}$ for some positive   constant~$c$. This recursion solves into $\norm{\slSigma_K}_{\infty} = O(\rhojsr^K K^{\nu})$. For the general case $T \geq 1$, we use the subsequences $\slSigma_{KT+s}(x)$ with $0 \leq s < T$. To have a bound at our disposal we employ~$m$ that is the maximal~$\nu$ encountered in the decomposition of~$C$. Substituting~$\log_\base N$ for~$K$, we arrive at the announced result.
\end{proof}

The next example enables us to see, with our eyes, that the order of the error term predicted by the above proposition is the right one.  

\begin{example}[Triangular tiling]\label{dumas-rrseq-v4:ex:geometricex}
This example is slightly different of the other examples in the article because first we do not consider a sequence but a rational series, second the linear representation is not insensitive to the leftmost zeroes. This last point is of no importance here because it was only useful in the change from a rational series to a radix-rational sequence.
For a real~$\vartheta$, we consider the rotation matrix
\[
R_{\vartheta}=\left(\begin{array}{cc}\cos{\vartheta} & -\sin{\vartheta}\\\sin{\vartheta}&\cos{\vartheta}\end{array}\right).
\]
In the following we use $V_0=E_1$ the first vector of the canonical
basis and $V_{\vartheta}=R_{\vartheta}V_0$. The example is based on the
linear representation, with radix~$2$, 
\[
L=\left(\begin{array}{cc}
1&0
	\end{array}\right),\qquad
A_0=R_{-\vartheta},\qquad
A_1=R_{\vartheta},\qquad
C=V_0.
\]
Because we use orthogonal matrices, the joint spectral radius is
$\rhojsr=1$. The matrix~$Q$ is the diagonal matrix $2 \cos \vartheta \, \matI_2$. 
We particularize the case to $\vartheta=\pi/3$, so that the only eigenvalue is $\rho=\rhojsr$ with $\nu =1$, and $m=1$. 

We consider the words $u=11$, $v=01$, and the rational
numbers~$x_{2K+2}$ and $y_{2K}$ whose binary expansions are
$(0.uv^K)_2$ and $(0.v^K)_2$ respectively. According to the functional
equation~\eqref{dumas-rrseq-v4:eq:RecursionFormula} satisfied by~$\slSigma_K(x)$, we have
\[ 
\slSigma_{2K+2}(x_{2K+2})=R_{-{\pi/3}}V_0+R_{{\pi/3}}\left(R_{-{\pi/3}}V_0+R_{{\pi/3}}\slSigma_{2K}(y_{2K})\right) 
=
V_{-{\pi/3}}+V_0+R_{2{\pi/3}}\slSigma_{2K}(y_{2K}),
\] 
\[
\slSigma_{2K}(y_{2K})=R_{-{\pi/3}}\left(R_{-{\pi/3}}V_0+R_{{\pi/3}}\slSigma_{2K-2}(y_{2K-2})\right)=
V_{-2{\pi/3}}+\slSigma_{2K-2}(y_{2K-2}).
\]
With $\slSigma_0(0)=V_0$, we obtain by induction  $\slSigma_{2K}(y_{2K})=KV_{-2{\pi/3}}+V_0$ and
\[
\slSigma_{2K+2}(x_{2K+2})=(K+1)V_0+V_{-{\pi/3}}+V_{2{\pi/3}}=(K+1)V_0.
\]
Hence the~$O(K^m\rhojsr^K)$ we have found in
Lemma~\ref{dumas-rrseq-v4:lemma:ImprovmentErrorTerm} is satisfying.
The orbit of the parameterized curve~$\slSigma_K(x)$ is illustrated in
Figure~\ref{dumas-rrseq-v4:fig:triangular_cells}. 
\end{example}

\begin{figure}[t]
\begin{center}
\includegraphics[width=0.3\linewidth]{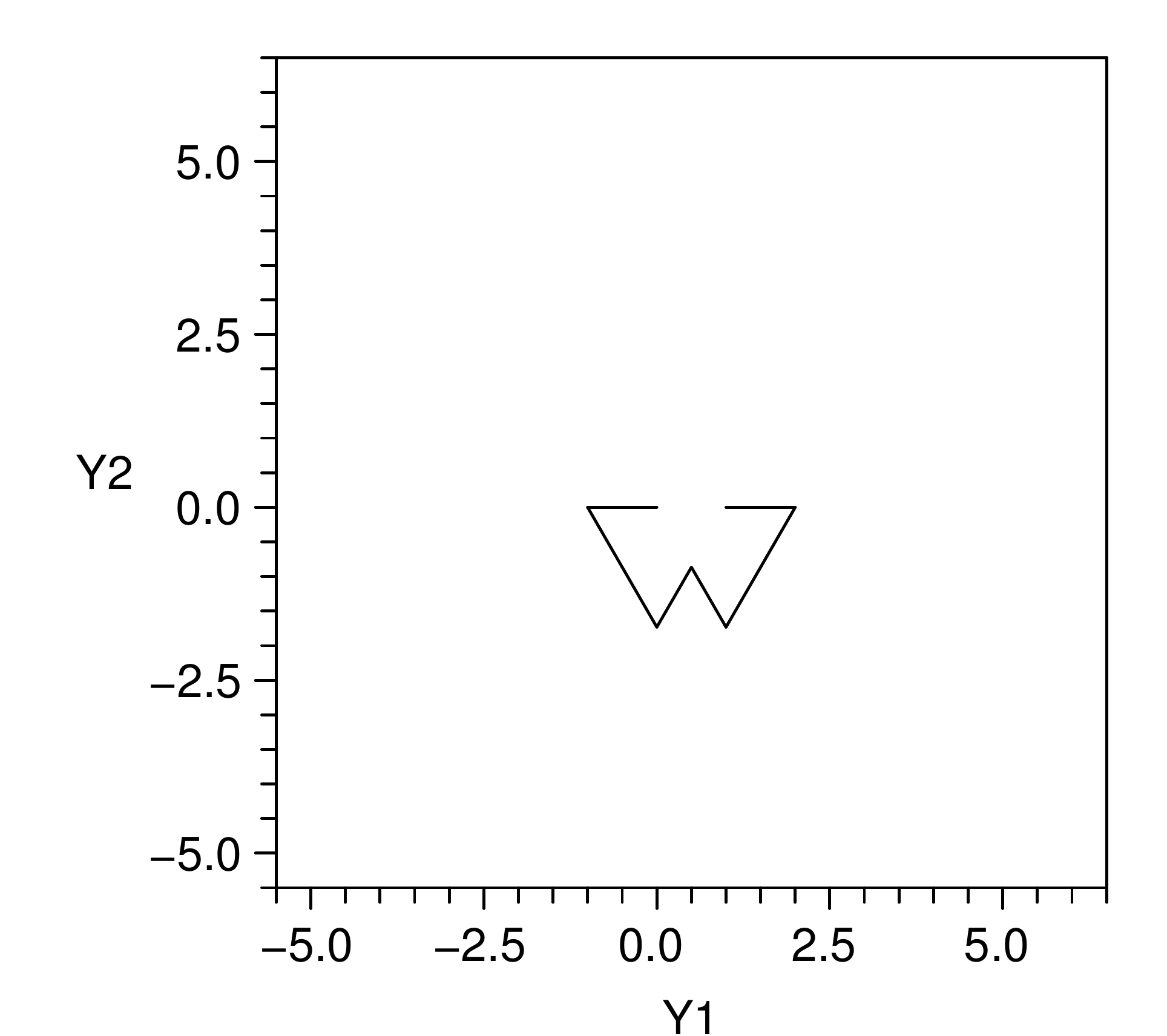}
\hfil
\includegraphics[width=0.3\linewidth]{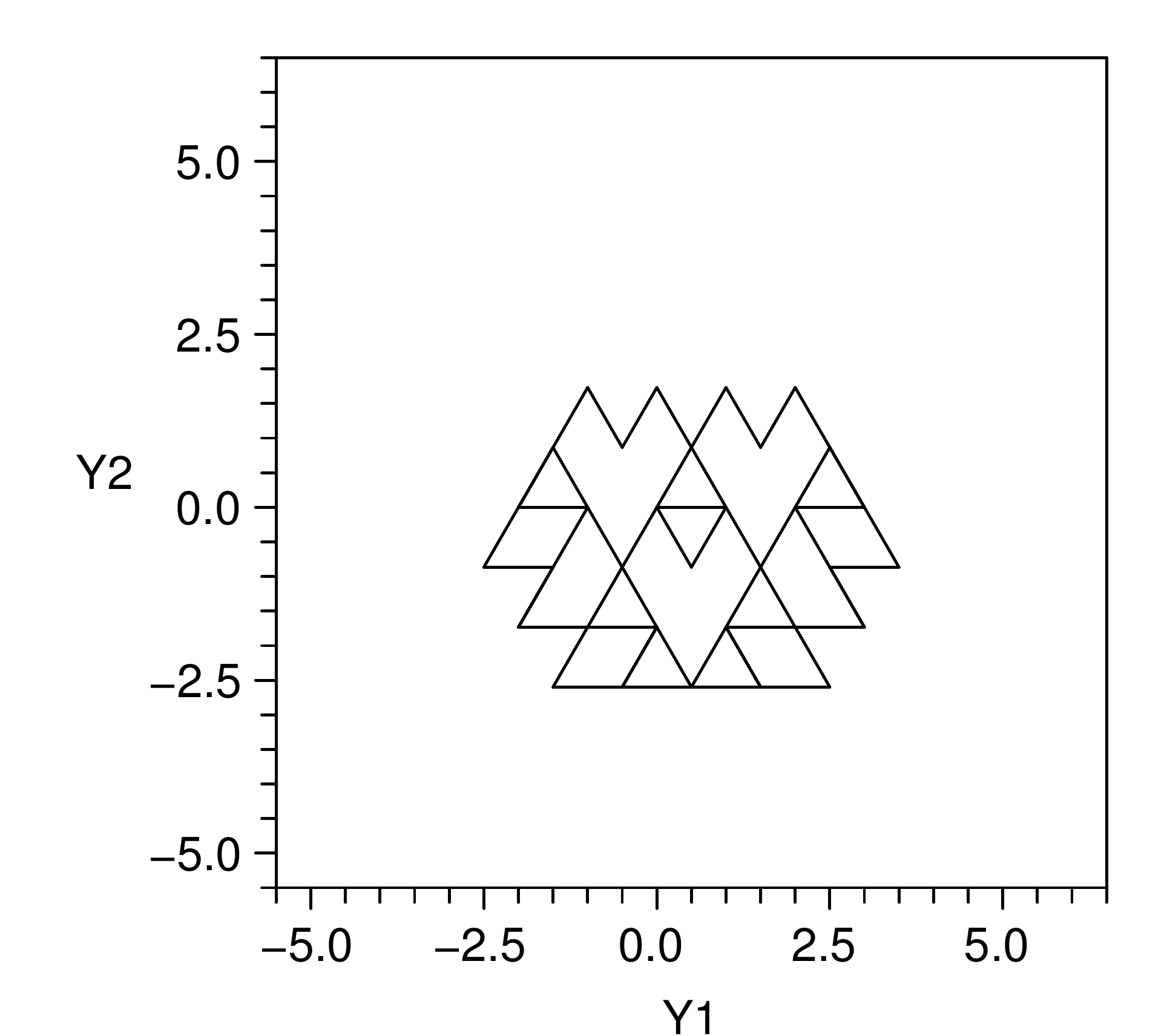}
\hfil
\includegraphics[width=0.3\linewidth]{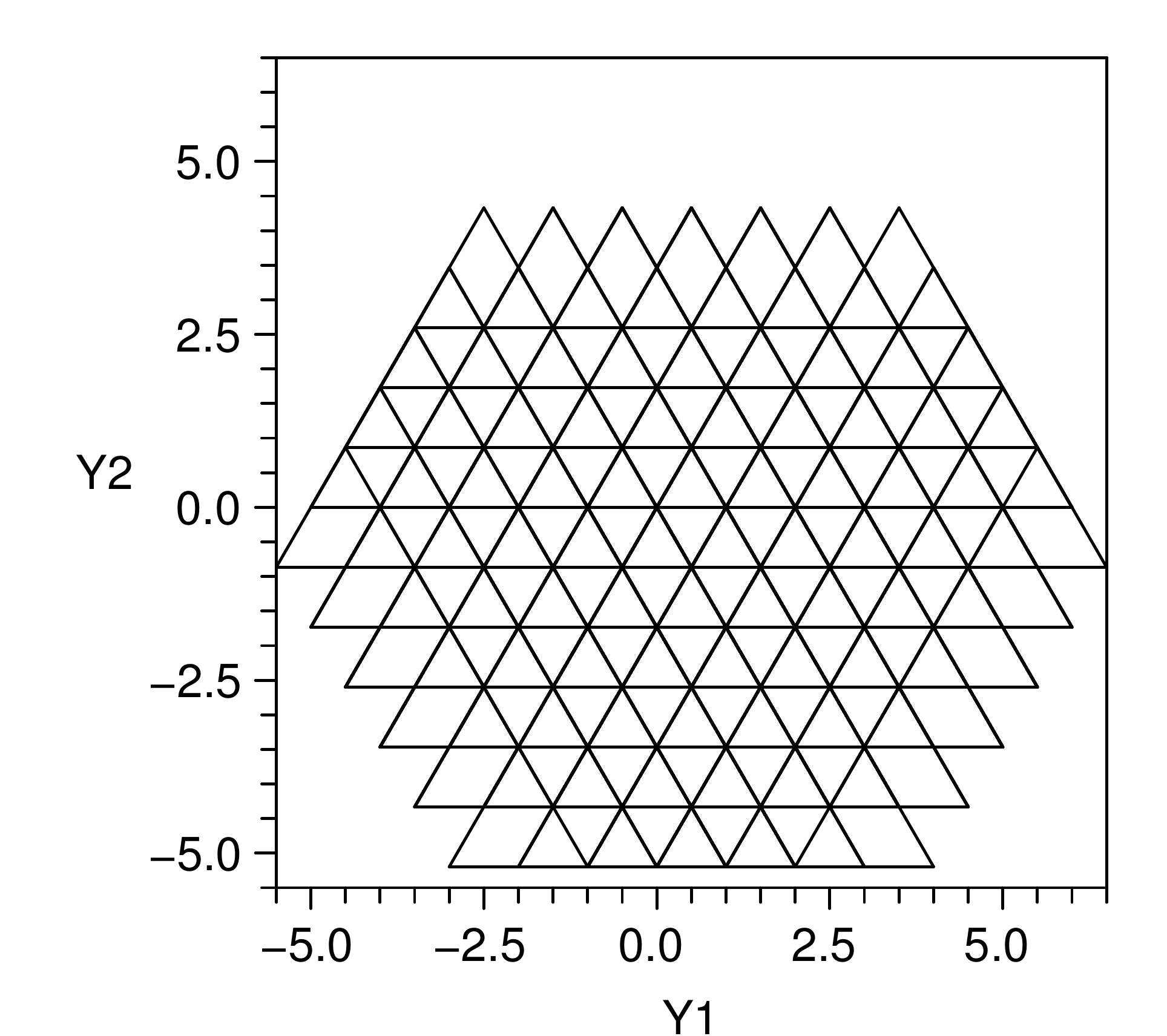}
\end{center}
\caption{\label{dumas-rrseq-v4:fig:triangular_cells}The image of the parameterized curve~$\slSigma_K(x)$ from Ex.~\ref{dumas-rrseq-v4:ex:geometricex} grows linearly with~$K$. Here we see $\slSigma_3$, $\slSigma_6$, and $\slSigma_{12}$.
}
\end{figure}

\subsection{Algorithm}

To summarize the obtained results and clarify the process of computation we write it as Algorithm {\tt LRtoAE} (for Linear Representation to Asymptotic Expansion). To tell the truth, it is a \emph{ pseudo\/}-algorithm (Algorithm~\ref{dumas-rrseq-v4:algo:AlgoBeginning}, p.~\pageref{dumas-rrseq-v4:algo:AlgoBeginning}%
). For example in Line~\ref{lrtoaeline3}, it can be difficult to compute the joint spectral radius~\cite{TsBl97}. Perhaps we are obliged to content ourselves with a real number~$r$ slightly larger than~$\rhojsr$ as in Example~\ref{dumas-rrseq-v4:ex:Dichopile-6}. In that case the question about the finiteness property (Line~\ref{lrtoaeline11}) is no longer meaningful. In the same way, it is likely that in Line~\ref{lrtoaeline14} we cannot solve explicitly the dilation equation. This is not an obstacle that prevents us from writing the expansion. Then it provides us with the qualitative behaviour of the sequence, but the cost of computing the asymptotic expansion for a given integer is almost the same as computing the value of the sequence for that integer.

\begingroup
\begin{algorithm}\label{dumas-rrseq-v4:algo:AlgoBeginning}\small
\caption{\label{dumas-rrseq-v4:algoforrationalsequencespiece1} {\tt LRtoAE}
}
\SetKwInOut{KwInput}{Input}
\SetKwInOut{KwOutput}{Output}
\SetAlgoLined
\SetArgSty{rm}
  \KwInput{A linear representation $L$, $(A_b)_{0 \leq b < \base}$, $C$ insensitive to the leftmost zeroes for the sequence of backward differences $u_n = \nabla s_n$.}
  \KwOutput{An asymptotic expansion of the sequence $s_N$ with respect to the scale
  $N^{\alpha}\binom {\log_\base N} \ell$, $\alpha\in\bR$,
  $\ell\in\mathbb N_{\geq 0}$
  .}
%
  $Q:=\sum_{0\leq b <\base}A_b$\;
  compute a Jordan basis $\cV$ for the matrix $Q$\nllabel{lrtoaeline2}\;
  compute the joint spectral radius $\rhojsr$ of the linear representation\nllabel{lrtoaeline3}\;
  \eIf{the linear representation\ has the finiteness property\nllabel{lrtoaeline11}}
      {$ r :=\rhojsr$\nllabel{lrtoaeline12}}
      {$ r :=$ any number between $\rhojsr$ and the infimum of
      the modulus of eigenvalues of~$Q$ greater than~$\rhojsr$}
  expand the column vector~$C$ of the linear representation over the
  Jordan basis, as $C=\sum_{V\in\cV}\gamma_VV$\;
  $\cV_{>}:=$ the set of generalized eigenvectors in~$\cV$ such that $\gamma_V\neq0$ and the associated eigenvalue~$\rho\omega$ has a modulus $\rho> r $\;
%
  \For{each vector~$V$ in~$\cV_{>}$} 
      {$\rho\omega:=$ the eigenvalue associated to~$V$\;
       $\nu:=$ the size of the Jordan cell~$J$ associated to~$V$\;
       compute the solution~$F(x)$ in $\bC^{\left[0,\nu\right)} \times \bC^{[1,d]}$ of the dilation system
$\displaystyle
F(x) J = \sum
A_b F(\base x -b)
$ \\[-0.5ex]
with the boundary conditions $F(x) = 0$ for $x \leq 0$ and $F(x) = V $ for $x \geq 1$\nllabel{lrtoaeline14}\;
  write down the expansion, with $F^{(-1)}(x) = 0$,
  \begin{equation*} 
   E_{V,N}:=
\sum_{\ell=0}^{\nu-1}
\binom{K+1}{\ell}(\rho\omega)^{K+1-\ell}\left[
\rho\omega  F^{(\nu-\ell-1)}(\base^{ \sawtooth{t} -1})+ F^{(\nu-\ell-2)}(\base^{ \sawtooth{t} -1})
\right]
  \end{equation*} 
  substitute $\log_\base N -  \sawtooth{t} $ for~$K$ in $ E_{V,N}$ (implicitly $t = \log_\base N$)\;
  expand the binomial coefficients $\binom{\log_\base N + 1 -  \sawtooth{t} }{\ell}$ by the Chu-Vandermonde formula\;
  collect~$ E_{V,N}$ according to the binomial coefficients $\binom{\log_\base N}{m}$
}
  $ E_N:=\sum_{V\in\cV_{>}}\gamma_{V} E_{V,N}$\;
  collect $E_N$ according to the powers of~$N$ and next  to the index~$m$ of the binomial coefficients  to obtain
\begin{equation*}
	E_N = \sum_{\rho,\vartheta, m} N^{\log_\base \rho} \binom{\log_\base N}{ m} e^{i \vartheta \log_\base N} \Phi_{\rho,\vartheta, m}(\log_\base N)
\end{equation*}
where $\Phi_{\rho,\vartheta, m}(t)$ is $1$-periodic and H{\"o}lder~$\log_\base (\rho/\rhojsr)$\;
  $\operatorname{error~term}_N:=O(N^{\log_\base  r })$\;
  \If{the linear representation has the finiteness property\nllabel{lrtoaeline5}}
     {$\cV_=:=$ the set of generalized eigenvectors in~$\cV$ such that
     $\gamma_V\neq0$ and the associated eigenvalue~$\rho\omega$ has a
     modulus $\rho=\rhojsr$\;
     $\nu_{\max}:=$ the maximal size of the Jordan cell for the vectors~$V$ in~$\cV_{=}$\;
     $\operatorname{error~term}_N:=O(N^{\log_\base \rhojsr}\log^{\nu_{\max}}_\base N)$\nllabel{lrtoaeline13}}
  \Return{
     \[\displaystyle s_N\mathop{=}_{N\to+\infty} L  E_N+\operatorname{error~term}_N\]
  }
\end{algorithm}
\endgroup

\subsection{More improvements}

Certainly more improvements are possible. For example, in our theorem, we consider a H{\"o}lder exponent. It must be understood that this exponent is a global lower bound, which means it is valid uniformly in the whole interval of reference. 
A deeper approach is presented in~\cite[Th.~4.2, p.~1054]{DaLa92} where the idea of a local H{\"o}lder exponent is described and studied in the framework of dilation equations. 

In~\cite{Tenenbaum97}, Tenenbaum shows that some periodic functions are nowhere differentiable. It would be a misunderstanding of Tenebaum's article to think that this is the general case. As a matter of fact, Tenebaum assumes that the H{\"o}lder exponent is an upper bound and not a lower bound (in mathematical notations $\Omega(\abs{h}^\alpha)$ and not $O(\abs{h}^\alpha)$). Below, we consider a probability distribution function. According to the Lebesgue differentiation theorem for monotone functions, it is almost everywhere differentiable. A good bibliography about nowhere differentiable functions and their history can be found in~\cite{AlKa2011,Lagarias12}. 

\begin{figure}[t]
\begin{center}
  \includegraphics[width=0.45\linewidth]{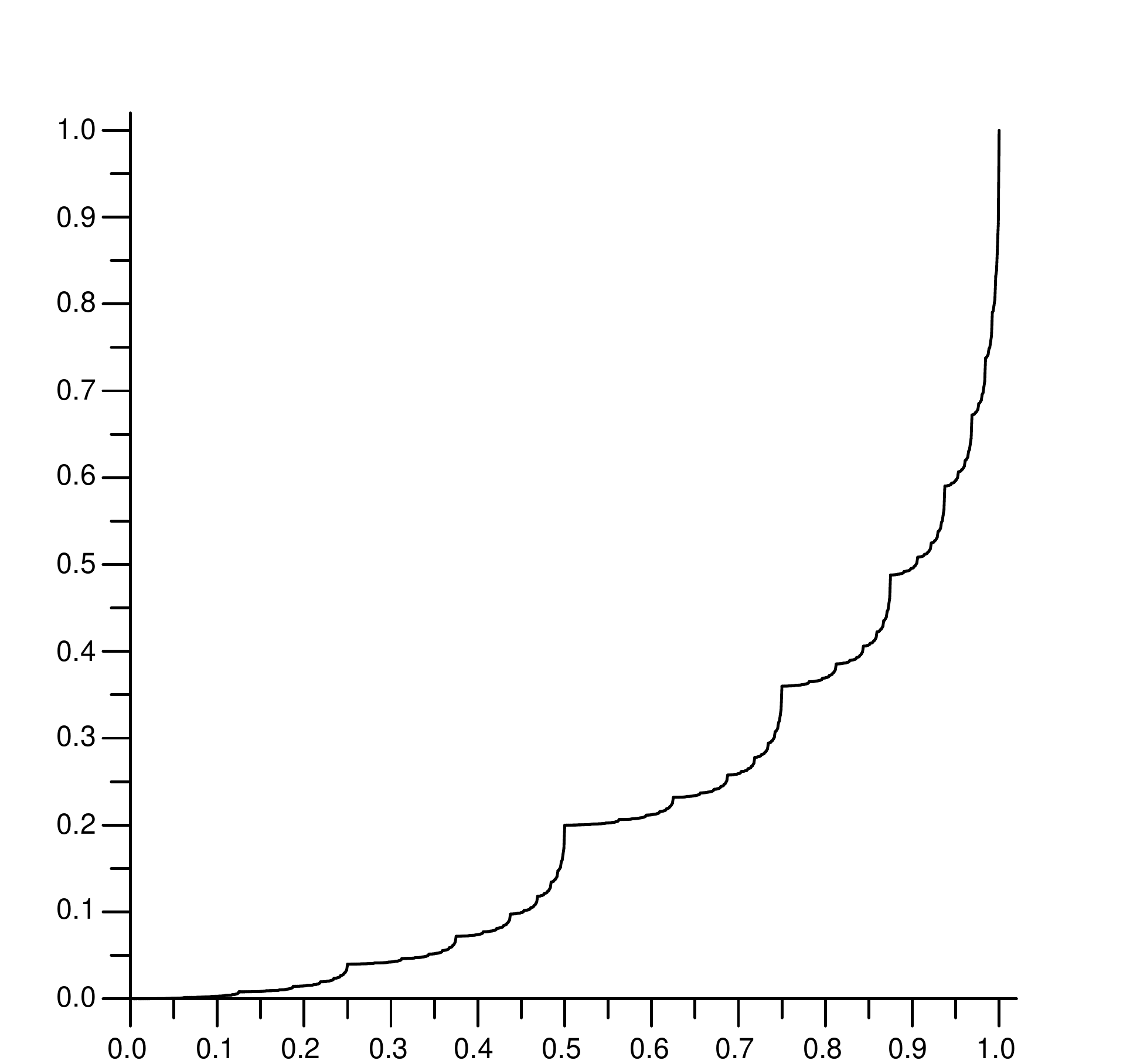}
\end{center}
\caption{\label{dumas-rrseq-v4:fig:Billingsley}Limit probability
  distributions which come from a Bernoulli process described in
  Ex.~\ref{dumas-rrseq-v4:ex:Billingsley}.}
\end{figure}

\begin{example}[Biased coin distribution function]\label{dumas-rrseq-v4:ex:Billingsley}
Billingsley~\cite[Ex.~31.1, p.~407]{Billingsley95} studied the random variable   $X=\sum_{n\geq0}X_n/2^n$ where~$X_n$ is the result of a coin tossing   with probabilities~$p_0$ and~$p_1$ for $X_n=0$ and $X_n=1$   respectively. This defines a rational series with dimension~$1$,   radix~$2$ and a linear representation  $L = (1)$, $A_0 = (p_0)$, $A_1 = (p_1)$, $C = (1)$ with $0<p_0,p_1<1$,   $p_0+p_1=1$. From the standpoint of number theory, the associated sequence is a completely $2$-multiplicative function~\cite[Def.~8.1.5]{DrGr10}. We have $Q= (1)$,   $\rho=1$, $V=(1)$,  $\rhojsr=\max(p_0,p_1)$. The distribution function~$F$   is the solution of the dilation equation~\eqref{dumas-rrseq-v4:eq:DilationEquation}  and   it is H\"older with exponent $\log_2(1/\rhojsr)$. We illustrate   the example with   $p_0=1/5$, $p_1=4/5$ and the exponent is $\log_2(4/5)\simeq   0.322$, which provides us with Figure~\ref{dumas-rrseq-v4:fig:Billingsley}, already sketched out in~\cite[p.~268--269]{LoUl34} or~\cite{Kawamura2011}, where the distribution function is quoted as {\rm Lebesgue's singular function}. A similar pictures appears also in~\cite[Fig.~6]{TaGiDo1998} about baker-type maps and in~\cite{Bedford1989}, which is a gentle introduction to the idea of box dimension.  

The provided exponent is the best uniform exponent. The positive character of the representation permits us to show (see~\cite{DuLiWa07} for an example) that, assuming $p_0\leq p_1$, at every dyadic point   the best local H\"older exponent is $\log_2(1/p_0)$ on the right-hand side   and $\log_2(1/\rhojsr)$ on the left-hand side. Except in the case   $p_0=p_1=1/2$, this gives $\log_2(1/p_0)>1$ and this explains the right-hand sided    horizontal tangents  in the   pictures. See\cite{Kawamura2011} for details about the derivative and a bibliography.
\end{example}


A recurrent question is about the bounds of the periodic functions. It is dealt with for example in~\cite{BrErMo83,Coquet83} and more recently in~\cite{AlKa2006,Kruppel2007}.

An issue which does not seem to have been tackled yet concerns the symmetries of the solutions of dilations equations.  It appears slightly in~\cite{Lagarias12} with the symmetry $\tau(1-x) = \tau(x)$ of Takagi function. We examplify it with the Rudin-Shapiro sequence.

\begin{figure}[t]
\begin{center}
\includegraphics[width=0.45\linewidth]{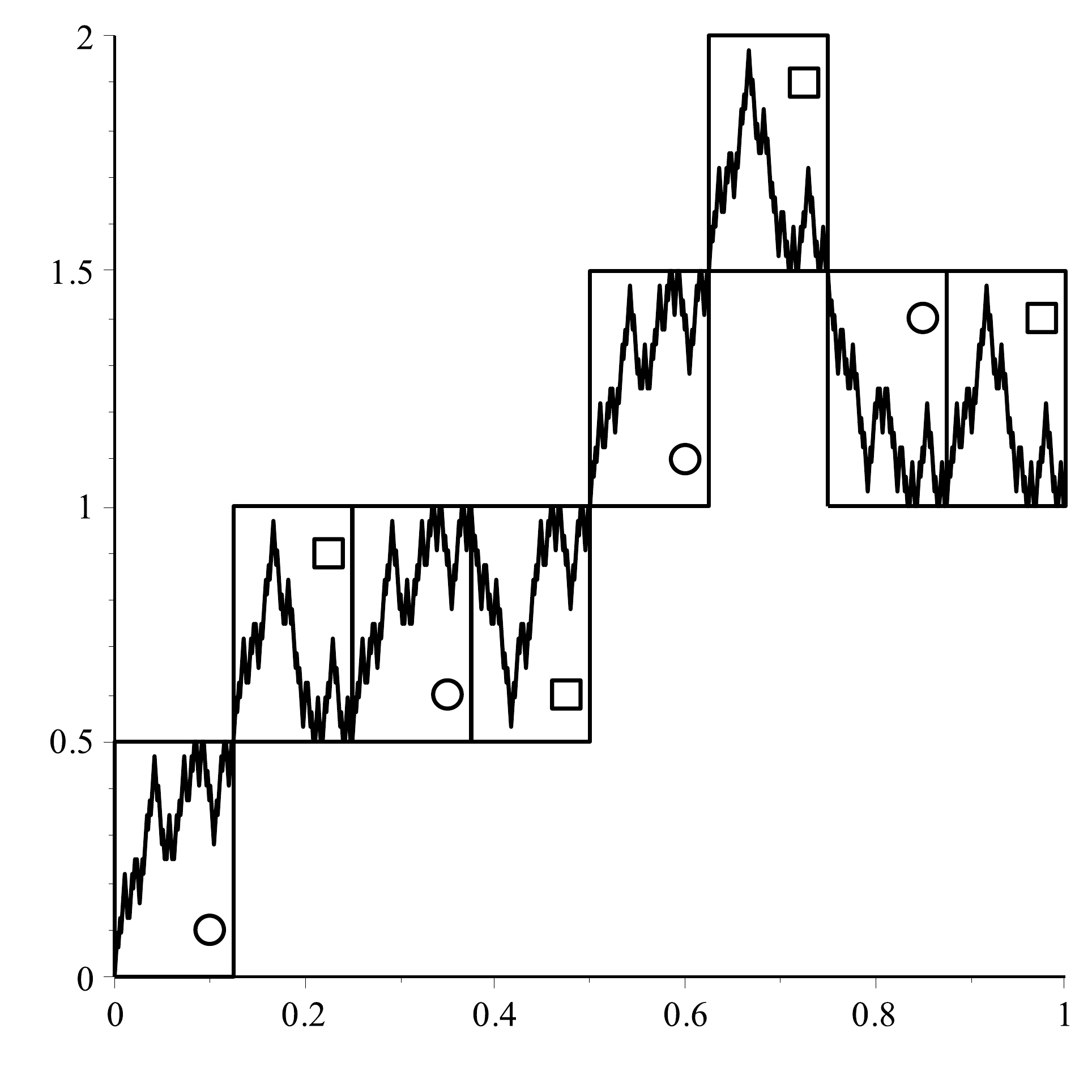}
\hfil
\includegraphics[width=0.45\linewidth]{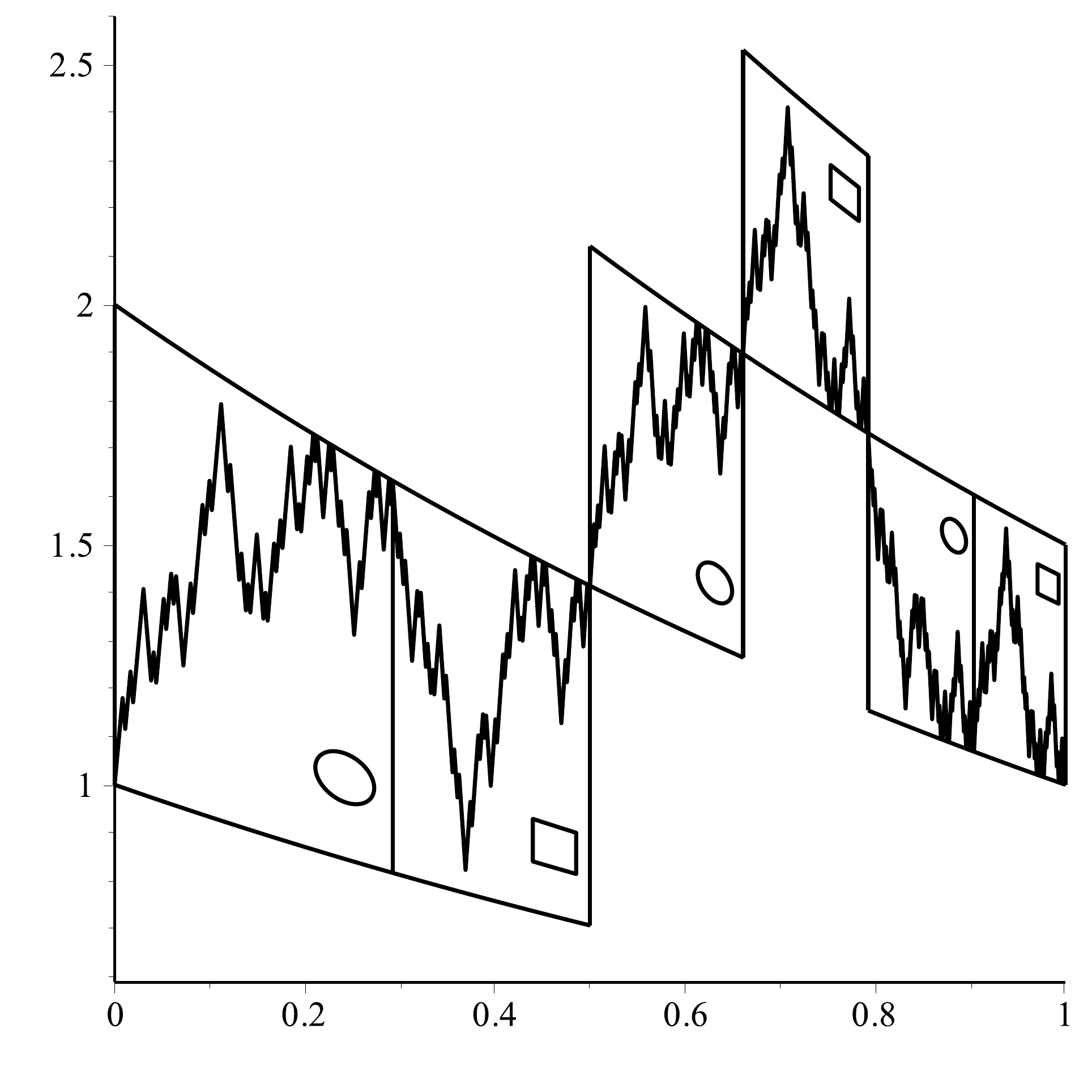}
\end{center} 
\caption{\label{dumas-rrseq-v4:fig:RudinShapiro_symmetry} The solution of
  the dilation equation (left-hand side) and the periodic function
  (right-hand side) associated to the Rudin--Shapiro sequence of
  Ex.~\ref{dumas-rrseq-v4:ex:rudinshapiro} show some symmetries. They
  are  obvious for the first one but quite buried for the
  second. }
\end{figure} 

\begin{example}[Rudin--Shapiro sequence]\label{dumas-rrseq-v4:ex:rudinshapiro}
  The Rudin--Shapiro sequence may be defined as $u_n =(-1)^{e_{2\,;
  11}(n)}$ where $e_{2\,; 11}(n)$ is the number of (possibly
  overlapping) occurrences of the pattern~$11$ in the binary expansion
  of the integer~$n$~\cite{BrCa70}. It is $2$-rational: it admits the
  generating family ~$(u_n,u_{2n+1})$ and the reduced linear
  representation, insensitive to the leftmost zeroes,
\[
L=\left(\begin{array}{cc}
1 & 1
	\end{array}\right),\qquad
A_0=\left(\begin{array}{cc}
1 & 1 \\ 0 & 0
	\end{array}\right),\qquad
A_1=\left(\begin{array}{cc}
0 & 0 \\ 1 & -1
	\end{array}\right),\qquad
C=\left(\begin{array}{c}
1 \\ 0
	\end{array}\right).
\]

As an application of   Theorem~\ref{dumas-rrseq-v4:thm:Theorem}, we obtain the 
  asymptotic expansion for the Rudin--Shapiro sequence~\cite{BrErMo83}
\[
\sum_{n\leq N}u_n\mathop{=}_{N\to+\infty}\sqrt{N}\Phi(\log_4N)+O(1)
\]
where~$\Phi$ is the $1$-periodic function defined by
$\Phi(t)=2^{1-\{t\}}F(4^{\{t\}-1})$ and~$F$ is defined through a
dilation equation for radix~$4$.  More precisely, we use the radix~$4$ linear representation $L' = L$, $A'_0 = A_0^2$, $A'_1 = A_0A_1$, $A_2 = A_1A_0$, $A'_3 = A_1^2$, $C' = C$. 
Functions~$F$ and~$\Phi$ are
illustrated in Figure~\ref{dumas-rrseq-v4:fig:RudinShapiro_symmetry}. Let us
denote $\pi_k$ the part of the graph of~$F$ which corresponds to the
interval $[k/8,(k+1)/8]$ for $0\leq k<8$. It is evident that the parts
of odd index on one hand and the part of even index on the other hand
reproduce the same pattern (with a piece upside down). This can be proved elementarily, playing at the same time with the radix~$2$ representation and the radix~$4$ representation. 

The symmetries of~$F$ are translated to~$\Phi$, but they
lose their graphical evidence. The pieces~$\pi_0$ and~$\pi_1$
disappear and the pieces~$\pi_k$, $2\leq k<8$ become the
pieces~$\pi_k'$ associated to the intervals
$[\log_4(k/2),\log_4((k+1)/2)]$. The links between the pieces become
more intricate. For example the pieces $\pi_2'$ and $\pi_3'$ on one
side and $\pi_6'$ and $\pi_7'$ on the other side are linked by the
formula $2^s\Phi(s)+2^t\Phi(t)=4$ under the condition that both
numbers $s\in[0,1/2]$ and $t\in[\log_43,1]$ are related by
$4^t-4^s=2$.
\end{example}

\section{Background}\label{dumas-rrseq-v4:sec:LinearvsAnalytic}

\subsection{Context}

Sequences related to a numeration system have a long history, but for what we are concerned the first studies about their asymptotic behaviour appear in the middle of the $20$th century. The studies in question deal with one example at a time and use elementary methods.
In our opinion, the most noteworthy article is the Brillhart,  Erd{\H{o}}s, and Morton's work~\cite{BrErMo83}, for it contains the seeds of almost all ideas about the topic: dilation equation, asymptotic expansion with periodic coefficients, H{\"o}lder continuous and non-differentiable functions, bounds for the periodic functions, Fourier series, Fourier coefficients, Dirichlet series, residues, Mellin transform. It can be and needs to be read and re-read many times to extract all its very substance. 

A more systematic study has begun in the late seventies, with two independent methods. One is related to the combinatorics of words~\cite{CoVa86,DuTh89,Dumont90,DuSiTh99}, and is summarized in~\cite[Sec.~4.1 and Sec.~6.1]{BaBeLiTh06}. The other is  based on analytic number theory~\cite{FlRa80,FlGo94,FlGrKiPrTi94,GrHw05}. Its most evolved version is~\cite{DrSz11}, even it is essentially limited to positive sequences, for it mixes several radices.  

Until recently,  the idea of radix-rational sequence was not emphasized, but this property sometimes rises to the surface~\cite[{\rm Corollaire}, p.~13-07]{BeFa78}, \cite[{\rm Satz}~3 and {\rm Satz}~4]{BrMo78}. This is not surprising for it was defined only twenty years ago~\cite{AlSh92}. The sequences are viewed only as satisfying divide-and-conquer recurrences without this wording being necessarily employed. Nevertheless, a linear representation appear in~\cite{GrHePr05}, where a new approach based on Fourier transforms and probability theory is used (this last point seems to limit the method to nonnegative linear representations). Besides,  dilation equations arise in~\cite{BeKr00,AlKa2006,Kruppel2009} and are systematically used in~\cite{Girgensohn2011}. But these authors limit themselves to scalar-valued functions. In comparison, \cite{GoKeSp92} uses matrices, eigenvalues and dilation equations, but the study is very specific to the Thue-Morse sequence.

Stolarsky~\cite{Stolarsky77} gives an extended bibliography (a little old-looking due to the date of publication, but rich) about digital sequences, that is sequences based on a numeration system.  More recent bibliographies may be found in~\cite{DrGa98} or \cite{DrGr10}. Moreover, Stolarsky notes: \emph{Whatever its mathematical virtues, the literature on sums of digital sums reflects a lack of communication between researchers}, a sentence which is yet topical. This was a motivation for us to provide to the reader an extended bibliography with the hope that this domain of research can get organized  and that the basic results about that topic will be known of all interested people.

\subsection{Linear algebra versus Analytic number theory}

Our point of interest here is the comparison between the analytic number theory method and our linear algebra method. The analytic method is based on meromorphic functions and computation of some residues. A good and concise account of the method is given  in~\cite[Sec.~8.2.3]{DrGr10}. The underlying idea is geometrically obvious but the application remains tricky. By contrast our approach is elementary but less intuitive. Nevertheless, it would be a mistake to oppose both approaches. Each has its own merit and the algebraic approach can greatly help the analytic approach.  We will emphasize this point with our favorite example.

\begin{example}[Dichopile algorithm, another way]\label{dumas-rrseq-v4:ex:Dichopile-7}
The use of the analytic number theory approach begins with the Dirichlet series
\begin{equation*}
	u(s) = \sum_{n\geq 1} \frac{\nabla f_n}{n^s}
\end{equation*}
associated with the sequence $u_n = \nabla f_n$, or better with the family of Dirichlet series associated with each of the sequences in the basis used to define the linear representation (Ex.~\ref{dumas-rrseq-v4:ex:Dichopile-1}). All these Dirichlet series are gathered into a row-vector valued Dirichlet series~$U(s)$, associated to the row-vector valued sequence~$U_n$ whose components are the sequences of the basis. Consequently the component of interest is recovered by $u(s) = U(s) C$.

The first question is to evaluate the abscissa of absolute convergence of~$U(s)$. Usually this results from an \emph{ad hoc} computation, usually a bound obtained by elementary arguments in a former work~\cite{BrErMo83}, \cite[Formul{\ae}~(6.7), (6.10)]{FlGrKiPrTi94}. For us this results from the computation of the joint spectral radius. The value $\rhojsr = 1$ shows that all  sequences in the basis are~$O(n^{\varepsilon})$ for every $\varepsilon > 0$ and this asymptotic relationship is the best possible (among the comparisons with a power of~$n$), so that the abscissa of absolute convergence is $\sigma_a = 1$. 

Next, we filter the integer~$n$ in the sum that defines~$U(s)$ according to their parity. This provides us 
with 
\[
	U(s) = \sum_{n=1}^{+ \infty}\frac{U_{2n}}{(2n)^s} + U_1 + \sum_{n=1}^{+ \infty} \frac{U_{2n+1}}{(2n+1)^s} 
	=
	\frac{1}{2^s}  U(s) A_0 + U_1 + \frac{1}{2^s}  U(s) A_1 + \sum_{n=1}^{+ \infty} \left(\frac{1}{(2n+1)^s} -\frac{1}{(2n)^s}\right) U_n  A_1,
\]
or in other words
\begin{equation}\label{dumas-rrseq-v4:eq:MeromorphicFumctionEquation}
U(s)	(\Id_6 - 2^{-s}Q)  = \nabla U (s)\qquad \text{with} \quad \nabla U (s) = U_1 + \sum_{n=1}^{+ \infty} \left(\frac{1}{(2n+1)^s}-\frac{1}{(2n)^s}\right) U_n A_1 .
\end{equation}
The last series converges absolutely for complex numbers~$s$ with a positive real part, because of the difference $(2n+1)^{-s} - (2n)^{-s}$, which is of order $n^{-s-1}$. 
 Equation~\eqref{dumas-rrseq-v4:eq:MeromorphicFumctionEquation} shows that~$U(s)$ extends on the half-plane $ \sigma > 0$
as a meromorphic function. More precisely the poles are the logarithms to base~$2$ of the eigenvalues of the matrix~$Q$. As a consequence, on the boundary of the half-plane of absolute convergence there is a vertical line of poles  $1 + \chi_k$ with $\chi_k = 2k\pi i/\ln 2$ and~$k$ integer, associated to the dominant eigenvalue~$2$. There are no other poles in the vertical strip between~$0$ and~$1$. 

\medskip

The next stage is the use of the Mellin-Perron summation formula~\cite[Th.~13]{HaRi15} 
\begin{equation} \label{dumas-rrseq-v4:eq:Perron1}
	\sum_{1 \leq k < N }\hspace*{-0.3em} U_k +\frac{1}{2}U_N =
	\frac{1}{2 \pi i} \int_{(c)} U(s) \, N ^s
\frac{ds}{s}.
\end{equation}
In this formula the integral is taken along a vertical line~$(c)$ at an abscissa~$c$ larger than the abscissa of absolute convergence. We express the function~$U(s)$ using Formula~\eqref{dumas-rrseq-v4:eq:MeromorphicFumctionEquation} and to emphasize the poles at abscissa 1, we write $ (\Id_6 - 2^{-s}Q)^{-1} = R(s) /  \left(1 -  {2}^{1-s} \right) ^{2} $ with~$R(s)$ a meromorphic function,  analytic on the right of~$0$. 
The function to be integrated has an expansion near $s = 1 + \chi_k$ of the form
\begin{equation*}
	\frac{R(s) \nabla U(s)}{\left(1 -  {2}^{1-s} \right) ^{2}}  \frac{ N ^s}{s}
	\mathop{=}_{s \to 1 + \chi_k}
	c_{2,k} N^{1+\chi_k}\frac{1}{(s - 1 - \chi_k)^2} +
	\left(c_{1,k} N^{1+\chi_k} \ln(N) + c_{0,k} N^{1+\chi_k}  \right) \frac{1}{s - 1 - \chi_k}  + O(1).
\end{equation*}
With Cauchy's residue theorem we change Formula~\eqref{dumas-rrseq-v4:eq:Perron1} into
\begin{equation}\label{dumas-rrseq-v4:eq:Pseudo-Perron}
	\sum_{1 \leq k < N }\hspace*{-0.3em} U_k +\frac{1}{2}U_N =
	\sum_{k = -\infty}^{+\infty} c_{1,k} N^{\chi_k} \times N \ln(N) + \sum_{k = -\infty}^{+\infty} c_{0,k} N^{\chi_k} \times N +
	\frac{1}{2 \pi i} \int_{(\varepsilon)} U(s) \, N ^s \frac{ds}{s},
\end{equation}
where this time~$\varepsilon$ is between~$0$ and~$1$. The term $U_N/2$ and the integral are of order $N^{\varepsilon}$. Overall~$N^{\chi_k}$ is nothing but $\exp(2 k \pi i \log_2 N)$ and some trigonometric series appear,
\begin{equation} \label{dumas-rrseq-v4:eq:Pseudo-expansion}
	\sum_{1 \leq k \leq N }\hspace*{-0.3em} U_k = 
	N \ln(N) \sum_{k = -\infty}^{+\infty} c_{1,k} \exp(2 k \pi i \log_2 N)
	+ N \sum_{k = -\infty}^{+\infty} c_{0,k} \exp(2 k \pi i \log_2 N) + O(N^{\varepsilon}).
\end{equation}
We have obtained an asymptotic expansion in the scale $N^\alpha \ln^\beta(N)$ with variable coefficients. The method is simple, concrete, obvious, natural: we see the poles, we see the line of integration, we push the line to the left, we catch the residues, and we have the asymptotic expansion of the partial sum.

\medskip

Unfortunately the radiant sun that illumines this method was soon overshadowed by thick clouds. First, we do not know if the points $1 + \chi_k$ are really poles of~$U(s)$. In this example, the only point that is certainly a pole is the abscissa of absolute convergence $\sigma_a = 1$, according to Landau's theorem~\cite[Th.~10]{HaRi15} , because the matrices~$L$, $A_0$ and~$A_1$ have nonnegative coefficients so that all the components of~$U_n$ are nonnegative. The same phenomenon occurs with the constant sequence of value~$1$, whose associated Dirichlet series is the Riemann zeta function~$\zeta(s)$. In that case, Equation~\eqref{dumas-rrseq-v4:eq:MeromorphicFumctionEquation} is the usual link between~$\zeta(s)$ and the alternate Riemann zeta function. Because of the factor $1- 2^{1-s}$ it seems that~$\zeta(s)$ has a line of poles $1 + \chi_k$, but we know that the only pole is~$1$. Here, comparing~\eqref{dumas-rrseq-v4:eq:Pseudo-expansion} and our expansion~\eqref{dumas-rrseq-v4:eq:DichopileExpansion}, which begins with a dominant term $N \log_2 (N) /2$, we see that~$1$ may be a double pole but that the other points $1 + \chi_k$ are certainly at most simple poles.  

Second, there is no reason (at this stage) for the trigonometric series above, which is the coefficient of~$N$ in~\eqref{dumas-rrseq-v4:eq:Pseudo-expansion}, to be a convergent series and defines a periodic function. Usually, an extra argument proves the occurrence of a periodic function (see for example~\cite[Lemma~H]{FlRa80}, \cite[Sec.~2]{Coquet83} or \cite[Sec.~2]{BrErMo83}).  For us, Formula~\eqref{dumas-rrseq-v4:eq:DichopilePeriodicFunction} defines a periodic function.

Third, the order of growth of~$U(s)$ along a vertical line does not permit the use of the Mellin-Perron formula.  Equation~\eqref{dumas-rrseq-v4:eq:MeromorphicFumctionEquation} shows, by the study of the right-hand side, that $\abs{U(\sigma+i t)}$ does not grow at infinity more rapidly than $\abs{t}^0$ for $\sigma > 1$  and than $\abs{t}^1$ for $0 < \sigma < 1$, hence no more rapidly than $\abs{t}^{1-\sigma}$ for $0 < \sigma < 1$. It is a consequence of the Lindel{\"o}f theorem~\cite[Sec.~III.4]{HaRi15}, which asserts that the order of growth is a convex function with respect to~$\sigma$.   The bound is not enough small to guarantee the absolute convergence of the integral on the line~$(\varepsilon)$. However the Mellin-Perron formula, say of the second order, 
\begin{equation*} 
\frac{1}{N }\sum_{1\leq k\leq n<N }\hspace*{-1em}U_k=
\sum_{1 \le k < N }\hspace*{-0.3em} U_k\left(1-\frac k N \right)
= \frac{1}{2 \pi i} \int_{(c)} U(s) \, N ^s
\frac{ds}{s(s+1)}
\end{equation*} 
can be used because if we change the line of integration~$(c)$ into~$(\varepsilon)$ the integrand becomes~$O(\abs{t}^{-1-\varepsilon})$. We obtain an expansion
\begin{equation*}
	\frac{1}{N }\sum_{1\leq k\leq n<N }\hspace*{-1em}u_k \mathop{=}_{N \to +\infty}
	N \ln(N) \Psi_1(\log_\base N) + N \Psi_0(\log_\base N) + O(N^\varepsilon),
\end{equation*} 
where~$\Psi_1(t)$ and~$\Psi_0(t)$ are $1$-periodic functions defined as sums of convergent trigonometric series. 
But Proposition~6.4 of~\cite{FlGrKiPrTi94} provides us, by summation of the expansion~\eqref{dumas-rrseq-v4:eq:DichopileExpansion}, with
\begin{equation}\label{dumas-rrseq-v4:eq:SummationDichopileExpansion}
	\frac{1}{N }\sum_{1\leq k\leq n<N }\hspace*{-1em}u_k \mathop{=}_{N \to +\infty}
	\frac{1}{4} N \ln(N) -\frac{1}{8} N +  N \Psi(\log_\base N) + o(N).
\end{equation} 
The uniqueness of the asymptotic expansion shows that $\Psi_1(t) = 1/4$ and $\Psi_0(t) = -1/8 + \Psi(t)$.  Moreover Proposition~6.4 of~\cite{FlGrKiPrTi94} gives the link between the Fourier coefficients~$c_k(\Phi)$ of  the function~$\Phi(t)$, which appears in the asymptotic expansion~\eqref{dumas-rrseq-v4:eq:DichopileExpansion}, and the Fourier coefficients~$c_k(\Psi)$ of the  function~$\Psi(t)$ in~\eqref{dumas-rrseq-v4:eq:SummationDichopileExpansion},
\begin{equation*}
	c_k(\Phi) = (2+\chi_k) c_k(\Psi).
\end{equation*}
This enables us  to show that the Fourier series of~$\Phi(t)$ is the trigonometric series which appears in~\eqref{dumas-rrseq-v4:eq:Pseudo-expansion} as the coefficient of~$N$. In other words, to push the line of integration on the left and collect the residues provides us  with the right Fourier series, even if this first seems to be a wrong process.   Bernstein theorem~\cite[Vol.~I, p.~240]{Zygmund02}, quoted in~\cite[Prop.~6]{GrHw05},  guarantees the uniform convergence of the Fourier series for a H{\"o}lder continuous function with exponent $> 1/2$. We know by our algebraic approach that it is the case for the dichopile algorithm. In this example, the Fourier series converges, but it is not the general case, and it can be necessary to consider F{\'e}jer sums. 
\end{example}

To summarize, we have two methods. One is brilliant but needs dexterity (\cite{Hwang98} provides a good example). The other is more pedestrian and more accessible. Moreover, the algebraic approach provides arguments to sustain the analytic approach. 

\section{Fourier coefficients}\label{dumas-rrseq-v4:sec:FourierCoefficients}

Having made the link between the $1$-periodic function and its Fourier series, we still have to compute its Fourier coefficients. At this point there are two possibilities. The first one is the fine case, where the Dirichlet series~$U(s)$ (notation of Example~\ref{dumas-rrseq-v4:ex:Dichopile-7})
is explicitly known, as in many examples dealt with in~\cite{FlGrKiPrTi94} or~\cite{GrHw05}, which use the Riemann zeta function. The second possibility is the generic one and the dichopile algorithm enters in this case. This is the case on which we will lay stress.

\subsection{Residues}

We have at our disposal several methods to compute numerically the Fourier coefficients. The first method is the direct application of the analytic number theory approach. The Fourier coefficients are obtained through some residues. 

\begin{example}[Dichopile algorithm, computation of the Fourier coefficients-1]
The basic formula is~\eqref{dumas-rrseq-v4:eq:MeromorphicFumctionEquation}, that is $U(s)	(\Id_6 - 2^{-s}Q)  = \nabla U (s)$. The idea is to  compute the right member for a pole~$1 + \chi_k$ of~$U(s)$ to obtain the residue at~$1 + \chi_k$. It is practically easier to use the change of basis of Example~\ref{dumas-rrseq-v4:ex:Dichopile-3}, that is to use a basis for which the matrix~$Q$ is in Jordan form. This emphasizes the components of~$U(s)$ which are really involved with the line of poles~$1 + \chi_k$. 

However, the computation of $\nabla U(s)$ is not so easy, because the convergence of the series in~\eqref{dumas-rrseq-v4:eq:MeromorphicFumctionEquation} is slow. Grabner and Hwang~\cite{GrHw05} deal with examples where they speed up the convergence by considering differences of the second order in place of the first order. Their goal is to ease the process of pushing the line described in Example~\ref{dumas-rrseq-v4:ex:Dichopile-7}. Overall, they use their so-called \emph{$1/2$-balancing principle}. We will not insist on this point because it is well explained in~\cite{GrHw05}. Essentially it leads to consider series with a factor~$1/4^m$ in place of series with a factor~$1/2^m$, hence a clear gain for the convergence speed. It must be noticed that the description of the method with a factor~$1/16^m$ is rather optimistic. It works only for very specific examples, and a factor~$1/4^m$ is the general case. 
\end{example}

\subsection{Crude method}

The second method is to return to the definition of the Fourier coefficients. We begin  without subtlety.

\begin{example}[Dichopile algorithm, computation of the Fourier coefficients-2]
By definition, the Fourier coefficients of~$\Phi(t)$ are
\begin{equation*}
	c_k = \int_0^1 \Phi(t) e^{-2\pi i k t}\,dt.
\end{equation*}
The change of variable $t = 1 + \log_2 x$ with $1/2 \leq x \leq 1$ transforms this formula into 
\begin{equation*}
c _k = \frac{1}{\ln 2} \int_{1/2}^1 \Phi(\log_2 x) e^{- 2 k \pi i \log_2 x} \, \frac{dx}{x} =
	c_k = \frac{1}{\ln 2} \int_{1/2}^1 \left( 1 - \frac{1}{2}\log_2 x + \frac{g_5(x)}{x} \right) e^{- 2 k \pi i \log_2 x} \, \frac{dx}{x}.
\end{equation*}
Distinguishing the case $k = 0$, we have
\begin{equation}\label{dumas-rrseq-v4:eq:FourierCoefficientsAsMellinEvaluations}
c_0 = \frac{5}{4} + \frac{1}{\ln 2} \int_{1/2}^1 \frac{g_5(x)}{x^2} \, dx,
\qquad
c_k = -\frac{i}{4k \pi} + \frac{1}{\ln 2} \int_{1/2}^1 \frac{g_5(x)}{x^2} e^{- 2 k \pi i \log_2 x} \, dx
\end{equation}
A crude approach is to compute these integrals by the trapezoidal rule, with the nodes~$j/2^K$ for $2^{K-1} \leq j \leq 2^K$ and a given~$K$. Obviously, this uses the cascade algorithm. As~$g_5(x)$ is H{\"o}lder with exponent~$\alpha$ for every $0 < \alpha < 1$, the error is of order~$1/2^{K\alpha}$ for every $0 < \alpha <1$, that is essentially~$1/2^K$. This method has the advantage of the simplicity, but we cannot speed up the computation {\`a} la Richardson because there is no asymptotic expansion of the error. It gives only a rough estimation. Below are the values obtained with $K = 10$ on the left-hand side and with $K = 12$ on the right-hand side. The computations have been made with $15$ digits to avoid the rounding errors. The correct digits are written in bold.

\begingroup\scriptsize
\begin{equation*}\setlength\arraycolsep{1pt} 
	\begin{array}{lcl}
c_0 & \simeq & -0.\bm{362}6476334 \\ 
c_{1} & \simeq & +0.\bm{00328}58226  +0.\bm{0019}776043\,i \\ 
c_{2} & \simeq & +0.\bm{003069}5432  -0.\bm{00062}87349\,i \\ 
c_{3} & \simeq & +0.\bm{001685}7421  +0.\bm{0012}276124\,i \\ 
c_{4} & \simeq & +0.\bm{000542}4079  -0.\bm{0011}900529\,i \\ 
c_{5} & \simeq & +0.\bm{001132}8340  +0.\bm{0005}285686\,i \\ 
c_{6} & \simeq & +0.\bm{000616}9547  +0.\bm{0003}775004\,i \\ 
c_{7} & \simeq & +0.\bm{000678}9418  -0.\bm{0004}749207\,i \\ 
c_{8} & \simeq & -0.\bm{00031}21218  +0.\bm{0002}826099\,i \\ 
c_{9} & \simeq & +0.\bm{000338}7154  -0.\bm{000}4014520\,i \\ 
c_{10} & \simeq & +0.\bm{000502}6412  +0.\bm{000}0947029\,i \\ 
\end{array}\qquad\qquad\qquad\qquad
%
	\begin{array}{lcl}
c_0 & \simeq & -0.\bm{3627}354935 \\ 
c_{1} & \simeq & +0.\bm{003284}7914  +0.\bm{0019}880642\,i \\ 
c_{2} & \simeq & +0.\bm{0030691}945  -0.\bm{00062}20152\,i \\ 
c_{3} & \simeq & +0.\bm{0016854}542  +0.\bm{00123}35442\,i \\ 
c_{4} & \simeq & +0.\bm{0005422}948  -0.\bm{00118}36225\,i \\ 
c_{5} & \simeq & +0.\bm{001132}5212  +0.\bm{00053}50851\,i \\ 
c_{6} & \simeq & +0.\bm{000616}6015  +0.\bm{00038}47058\,i \\ 
c_{7} & \simeq & +0.\bm{0006785}365  -0.\bm{00046}64313\,i \\ 
c_{8} & \simeq & -0.\bm{0003118}626  +0.\bm{00029}12762\,i \\ 
c_{9} & \simeq & +0.\bm{000338}4147  -0.\bm{00039}14805\,i \\ 
c_{10} & \simeq & +0.\bm{0005021}373  +0.\bm{00010}51334\,i \\ 
\end{array}
\end{equation*}
\endgroup

\noindent
This permits us to draw a picture (Fig.~\ref{dumas-rrseq-v4:fig:Dichopile-Phi-Function+FourierTruncated10}) of the truncated Fourier series against the periodic function. Because of the H{\"o}lderian character of~$\Phi(t)$ the difference between the function and the partial Fourier sum of order~$n$ is of order~$O(\ln(n)/n^\alpha)$ for every~$0 < \alpha < 1$ and the convergence is rather slow~\cite[Th.~II.10.8]{Zygmund02}. 
\end{example}

\begin{figure}[t]
	\centering
		\begin{center}
	\includegraphics[width=0.45\linewidth]{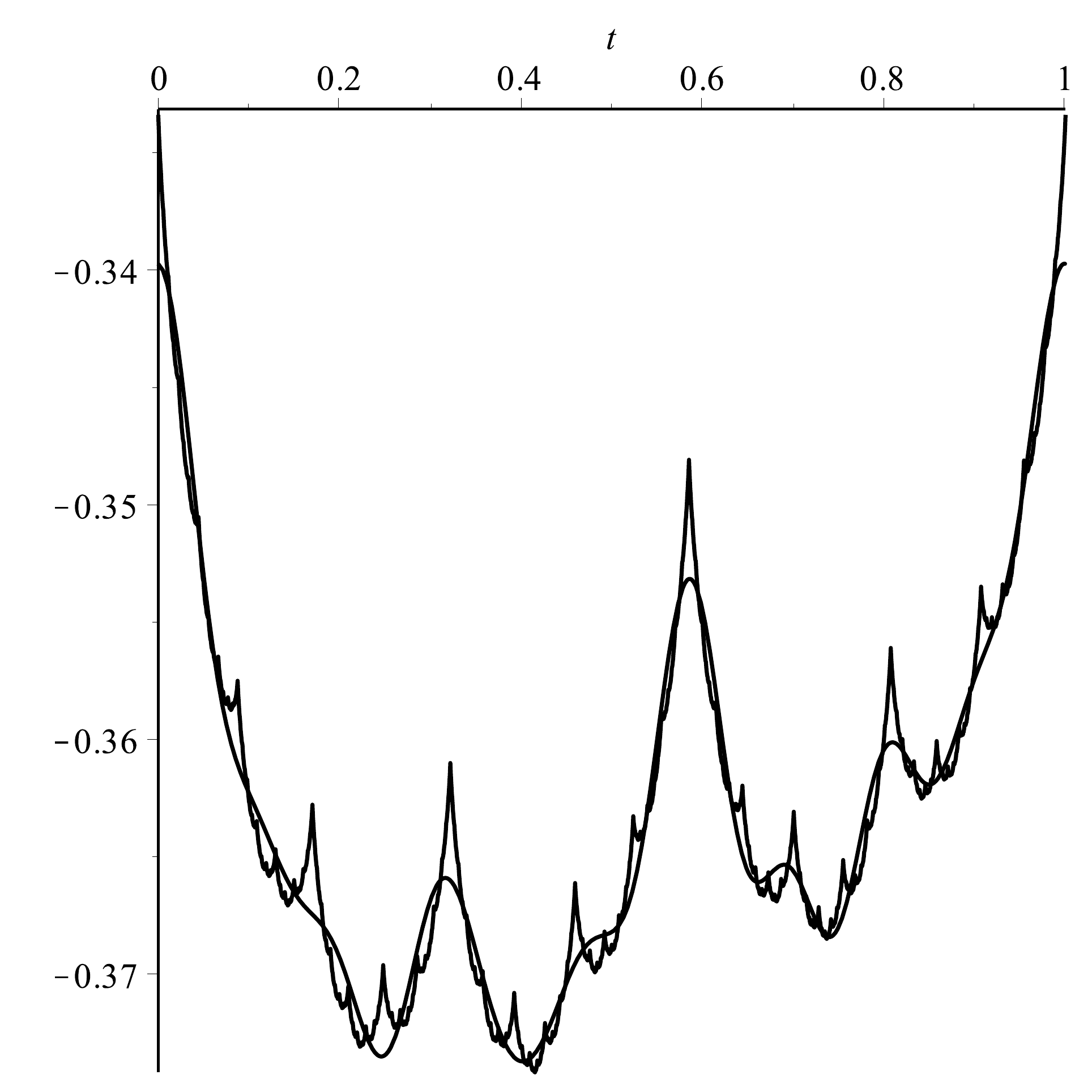}
\end{center}
	\caption{\label{dumas-rrseq-v4:fig:Dichopile-Phi-Function+FourierTruncated10}The periodic function~$\Phi(t)$ of the dichopile algorithm and its Fourier sum of order~$10$.}
\end{figure}

\subsection{Moments}

To go further, let us proceed to a detour. As the result we have in mind has some interest in itself, we consider a rather general framework. For a function~$F(x)$ continuous on the interval~$[0,1]$, we define its moments and partial moments, respectively
\begin{equation}\label{dumas-rrseq-v4:eq:MomentsDefiniton}
	M_\ell = \int_0^1 F(z) z^{\ell-1}\, dz, \qquad 
	M_{\ell,r} = \int_{r/\base}^{(r+1)/\base} F(z) z^{\ell-1}\, dz,
\end{equation}
where~$\ell$ is a positive integer and $0 \leq r < \base$ with~$\base \geq 2$. Remarkably for the solution of a dilation equation of the type considered in Section~\ref{dumas-rrseq-v4:sec:DilationEquations}, the moments and the partial moments can be computed exactly. Similar computations appear in~\cite{Evans57,LaTa92} (Cantor distribution)
and~\cite[Eq.~3.1 or~4.2]{ShYa93} (coefficients for wavelets)
or \cite[p.~5]{TaGiDo1998} (computation of a Riemann-Stieltjes integral).

\begin{lemma}\label{dumas-rrseq-v4:lemma:RecursionForMoments}
With the notations of Section~\ref{dumas-rrseq-v4:sec:DilationEquations} and under the assumption that none of the numbers $\base^\ell \rho$, $\ell \geq 1$, is an eigenvalue of~$Q$, the recursions ($W_b = A_b V$ for $0 \leq b < \base$)
\begin{equation}\label{dumas-rrseq-v4:eq:DichopileGeneralRecursionForMoments}
	\base^\ell M_\ell J - Q M_\ell = 
	\frac{1}{\ell} \sum_{b = 0}^{\base - 2} \left(\base^\ell - (b+1)^\ell\right)  W_b
	+ \sum_{k=1}^{\ell-1} \binom{\ell-1}{k-1} \left(\sum_{r=0}^{\base -1} r^{\ell-k}A_b\right) M_k,
\end{equation}
\begin{equation}\label{dumas-rrseq-v4:eq:DichopileGeneralRecursionForPartialMoments}
		M_{\ell,r} J = \sum_{b < r} \frac{(r+1)^\ell - r^\ell}{\ell \base^\ell} W_b 
	+ \frac{1}{\base^\ell} \sum_{k=1}^{\ell} \binom{\ell-1}{k-1} r^{\ell-k} M_k,\qquad 0 \leq r < \base,
\end{equation}
determine all the moments and partial moments of the matrix-valued function~$F(x)$ defined by Lemma~\ref{dumas-rrseq-v4:lemma:TheoremAboutDilationEquations}. 
\end{lemma}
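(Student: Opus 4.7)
The plan is to exploit the dilation equation
$$F(x) J = \sum_{0 \leq b < \base} A_b F(\base x - b)$$
together with the extension $F(x) = 0$ for $x \leq 0$ and $F(x) = V$ for $x \geq 1$, integrated against the weight $z^{\ell-1}$. I would first establish the partial-moment formula and then obtain the total-moment recursion either by summing over $r$ or by a parallel direct computation.

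For the partial moments, note that when $z \in [r/\base,(r+1)/\base]$, the argument $\base z - b$ lies in $[r-b,\,r-b+1]$. Hence for $b < r$ we have $\base z - b \geq 1$, so $F(\base z-b) = V$; for $b > r$ we have $\base z - b \leq 0$, so $F(\base z-b) = 0$; only the term $b = r$ contributes a genuine value of $F$. The dilation equation collapses on this cell to
$$F(z) J = \sum_{b < r} W_b + A_r F(\base z - r).$$
Multiplying by $z^{\ell-1}$ and integrating over $[r/\base,(r+1)/\base]$, the constant block contributes $\sum_{b<r} W_b \cdot \frac{(r+1)^\ell - r^\ell}{\ell \base^\ell}$. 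In the remaining integral I substitute $u = \base z - r$, so that $z = (u+r)/\base$ and $dz = du/\base$; expanding $(u+r)^{\ell-1}$ by the binomial theorem and identifying the resulting integrals with the $M_k$ yields \eqref{dumas-rrseq-v4:eq:DichopileGeneralRecursionForPartialMoments}.

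For the total moments I would sum $M_\ell = \sum_{r=0}^{\base-1} M_{\ell,r}$. The double sum $\sum_{r} \sum_{b<r}$ rearranges to $\sum_b W_b \sum_{r>b}((r+1)^\ell-r^\ell)$, which telescopes to $\sum_{b=0}^{\base-2} (\base^\ell - (b+1)^\ell) W_b$. On the other side, isolating $k = \ell$ in the sum over the $M_k$ produces the term $Q M_\ell / \base^\ell$ (since $\sum_r r^0 A_r = Q$), which I would move to the left-hand side to obtain \eqref{dumas-rrseq-v4:eq:DichopileGeneralRecursionForMoments}. Alternatively, the very same derivation applied directly on $[0,1]$ introduces an extra boundary contribution $\int_1^{\base - b} V (u+b)^{\ell-1}\,du = V \cdot (\base^\ell - (b+1)^\ell)/\ell$ that recovers the inhomogeneous term in one stroke.

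The remaining point is to argue that the recursion actually determines $M_\ell$. The identity for the total moment is of Sylvester type $M_\ell (\base^\ell J) - Q M_\ell = R_\ell$, where $R_\ell$ depends only on $M_1,\dots,M_{\ell-1}$ and on the known vectors $W_b$. Such an equation has a unique solution exactly when the spectra of $\base^\ell J$ and $Q$ are disjoint, which is the content of the hypothesis that $\base^\ell \rho$ is not an eigenvalue of $Q$. Once $M_\ell$ is known, the partial moments follow from the partial-moment formula using that $J$ is invertible ($\rho > 0$). The only real piece of computation is the bookkeeping in the binomial expansion and the telescoping of the $W_b$ sum, and these are routine once the change of variables is set up; this is the step I would spend the most care on in a fully written proof.
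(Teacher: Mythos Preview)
Your derivation is correct and follows the paper's route: the partial-moment identity comes from the dilation equation restricted to a cell $[r/\base,(r+1)/\base]$, a change of variable $u=\base z-r$, and the binomial expansion of $(u+r)^{\ell-1}$; the total-moment identity follows by summing over~$r$ (you make the telescoping of $\sum_{r>b}((r+1)^\ell-r^\ell)$ explicit, which the paper leaves implicit).

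The only genuine variation is in the solvability step. You invoke the Sylvester criterion: $M_\ell(\base^\ell J)-QM_\ell=R_\ell$ has a unique solution precisely when $\operatorname{spec}(\base^\ell J)\cap\operatorname{spec}(Q)=\varnothing$, which is exactly the hypothesis $\base^\ell\rho\notin\operatorname{spec}(Q)$. The paper instead unpacks the Jordan structure $J=\rho\matI+N$ and solves column by column, obtaining $(\base^\ell\rho\,\matI_d-Q)M_\ell^{j}=-\base^\ell M_\ell^{j-1}+\cdots$, so that invertibility of $\base^\ell\rho\,\matI_d-Q$ lets one compute $M_\ell^{0},M_\ell^{1},\dots,M_\ell^{\nu-1}$ in order. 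Your argument is cleaner and more conceptual; the paper's is more constructive and makes the actual computation transparent. Either way the partial moments then follow from invertibility of~$J$, as you note.
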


\begin{proof}
Using the dilation equation~\eqref{dumas-rrseq-v4:eq:DilationEquation} we find readily~\eqref{dumas-rrseq-v4:eq:DichopileGeneralRecursionForPartialMoments} by a mere change of variable and the binomial theorem. 

It may look troublesome to have to solve Equation~\eqref{dumas-rrseq-v4:eq:DichopileGeneralRecursionForPartialMoments}, as it involves multiples of~$M_\ell$ on both the left hand and the right hand sides. However, we can emphasize the structure of the Jordan cell by writing it $J = \rho \matI_{\left[0,\nu\right)} + N$ and the structure of the~$d\times \nu$ matrix~$M_\ell$ by viewing it as the collection of its columns~$M_\ell^0$, $M_\ell^1$, $\ldots$, $M_\ell^{\nu-1}$. We add $M_\ell^{-1} = 0$ for convenience. With these notations~\eqref{dumas-rrseq-v4:eq:DichopileGeneralRecursionForPartialMoments} rewrites
\begin{equation}\label{dumas-rrseq-v4:eq:DichopileGeneralAndPracticalRecursionForMoments}
	(\base^\ell \rho \matI_d - Q) M_\ell^j =  - \base^\ell M_\ell^{j-1} + 
	\frac{1}{\ell} \sum_{b = 0}^{\base - 2} \left(\base^\ell - (b+1)^\ell\right)  W_b^j
	+ \sum_{k=1}^{\ell-1} \binom{\ell-1}{k-1} \left(\sum_{r=0}^{\base -1} r^{\ell-k}A_b\right) M_k^j, \qquad 0 \leq j < \nu, 
\end{equation}
because of the equality $M_\ell N = (0,M_\ell^0,M_\ell^1,\ldots, M_\ell^{\nu-2})$. It is now clear that  Equation~\eqref{dumas-rrseq-v4:eq:DichopileGeneralAndPracticalRecursionForMoments} enables us to compute successively $M_1^0$, $M_1^1$, $\ldots$,  $M_1^{\nu-1}$, $M_2^0$, $\ldots$, $M_2^{\nu-1}$ and more generally all the moments, if the numbers $\base^\ell \rho$ are not eigenvalues of~$Q$. Last, Equation~\eqref{dumas-rrseq-v4:eq:DichopileGeneralRecursionForPartialMoments} provides us with the partial moments because~$J$ is invertible. 
\end{proof}

\begin{example}[Dichopile algorithm, computation of the moments]
For the dichopile algorithm, Lemma~\ref{dumas-rrseq-v4:lemma:RecursionForMoments} translates into the formul{\ae}
\begin{equation}\label{dumas-rrseq-v4:eq:M0Recursion}
	(2^{\ell+1} \matI_6 - Q) M^0_\ell = \frac{2^{\ell} - 1}{\ell} A_0V_2^0 +  A_1 \sum_{j=1}^{\ell-1} \binom{\ell-1}{j-1} M_j^0,
\end{equation}
\begin{equation}\label{dumas-rrseq-v4:eq:M1Recursion}
	(2^{\ell+1} \matI_6 - Q) M^1_\ell = \frac{2^{\ell} - 1}{\ell} A_0V_2^1 +  A_1 \sum_{j=1}^{\ell-1} \binom{\ell-1}{j-1} M_j^1 - 2^{\ell}M^0_\ell,
\end{equation}
\begin{equation*}
	2 M^0_{\ell,1} = \frac{1 - 1/2^{\ell}}{\ell } A_0 V^0_2 + \frac{1}{2^{\ell }} A_1 \sum_{j=1}^{\ell} \binom{\ell-1}{j-1} M^0_j,
\end{equation*}
\begin{equation}\label{dumas-rrseq-v4:eq:PartialMoment11}
	2 M^1_{\ell,0}  
 = \frac{1 -1 /2^{\ell}}{\ell  } A_0 V_2^1 + \frac{1}{2^{\ell}} A_1 \sum_{j=1}^\ell \binom{\ell-1}{j-1} M^1_j - M^0_{\ell,1},
\end{equation}
Since some components of~$f(x)$ and~$g(x)$ are known, we can verify the result of the computation.
\end{example}

\subsection{Mellin transform}

Let us return to the computation of the Fourier coefficients. If we look scrupulously to the proof of Theorem~\eqref{dumas-rrseq-v4:thm:Theorem} and particularly to Formula~\eqref{dumas-rrseq-v4:eq:ChuVandermonde}, we see that the occurrence of periodic functions in the expansion of the sequence comes from the term 
\begin{equation}\label{dumas-rrseq-v4:eq:PhiDefinition}
	\Phi(t) = \rho ^{1-\sawtooth{t}} e^{i \vartheta (1-\sawtooth{t})} F(\base^{\sawtooth{t}-1}) \binom{1-\sawtooth{t}}{p}
\end{equation}
for some nonnegative integer~$p$.  This integer~$p$ is $\ell -m$ in the Chu-Vandermonde type Formula~\eqref{dumas-rrseq-v4:eq:ChuVandermonde}. 

Let us focus on  the case $p = 0$. 
With the change of variable $t = 1 + \log_\base x$, $1/\base \leq x \leq 1$ (that already appears in Delange's article~\cite[p.~37]{Delange75}), the Fourier coefficients of the $1$-periodic function~$\Phi(t)$ are
\begin{equation*}
	C_k =  \int_0^1 \Phi(t) e^{- 2 \pi i k t}\, dt = 
	\int_{1/\base}^1 \rho^{\log_\base x} e^{i \vartheta \log_\base x} F(x) 
	e^{- 2 \pi i k \log_\base x} \frac{dx}{x \ln \base},
\end{equation*}
that is
\begin{equation*}
	C_k = \frac{1}{\ln \base} \int_{1/\base}^1 x^{-1 + \frac{\ln \rho + i \vartheta}{\ln \base}} F(x) 
	 x^{-\chi_k} \, dx,
\end{equation*}
with $\chi_k = 2k\pi i/\ln \base $. 
It turns out that these coefficients can be viewed as special values of a Mellin transform, namely values of
\begin{equation*}
	F^*(s) = \int_{1/\base}^1 x^{s-1} F(x) \,dx
\end{equation*}
at points $s = - \chi_k + \log_\base \lambda $ with $\lambda = \rho e^{i \vartheta}$ the eigenvalue of~$Q$ under consideration. 
According to Lemma~\ref{dumas-rrseq-v4:lemma:RecursionForMoments},  we conversely know the value of the Mellin transform at positive integers,
\begin{equation*}
	F^*(\ell) = \sum_{r=1}^{\base -1} M_{\ell,r}
\end{equation*}
with the notations of~\eqref{dumas-rrseq-v4:eq:MomentsDefiniton}. 

\begin{lemma}
The Mellin transform of the matrix-valued function~$F(x)$ defined by Lemma~\ref{dumas-rrseq-v4:lemma:TheoremAboutDilationEquations},
\begin{equation*}
	F^*(s) = \int_{1/\base}^1 x^{s-1} F(x) \,dx,
\end{equation*}
is an entire function, which can be expressed as the sum of a Newton series
\begin{equation}\label{dumas-rrseq-v4:eq:NewtonSeries}
	F^*(s) = \sum_{n=0}^{+\infty} \binom{s-1}{n}  \Delta^n F^*(1),
\end{equation}
where~$\Delta$ is the forward difference operator, acting on the variable~$s$ of~$F^*(s)$. A partial sum gives the value~$F^*(s)$ with an error of order the first neglected term.
\end{lemma}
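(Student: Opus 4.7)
The plan is to prove the three assertions in order: that $F^*$ is entire, that the series converges to it, and that the tail is controlled by its first term.

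For the entire character, I would rely on the fact that the integration range $[1/\base,1]$ is compact and bounded away from~$0$, so that $x^{s-1}$ is a holomorphic function of~$s$ with bounds uniform in $x$ on any compact set of~$\bC$. Combined with the continuity of~$F$ on~$[1/\base,1]$ given by Lemma~\ref{dumas-rrseq-v4:lemma:TheoremAboutDilationEquations}, this lets me differentiate under the integral sign (or invoke Morera's theorem) to conclude that $F^*(s)$ is entire.

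The heart of the proof is the Newton expansion. The identity $\Delta x^{s-1}=x^s-x^{s-1}=(x-1)\,x^{s-1}$ allows one immediately, by induction on~$n$, to move the forward difference operator inside the integral and obtain
\begin{equation*}
\Delta^n F^*(s)=\int_{1/\base}^1 x^{s-1}(x-1)^n F(x)\,dx,\qquad \Delta^n F^*(1)=\int_{1/\base}^1(x-1)^n F(x)\,dx.
\end{equation*}
On the other hand, writing $x^{s-1}=\bigl(1+(x-1)\bigr)^{s-1}$ and using the generalized binomial series, for any fixed~$s$ one has the pointwise expansion
\begin{equation*}
x^{s-1}=\sum_{n=0}^{+\infty}\binom{s-1}{n}(x-1)^n,\qquad x\in[1/\base,1].
\end{equation*}
The crucial observation is that on this interval $|x-1|\leq 1-1/\base<1$, so the binomial series has radius of convergence~$1$ and converges normally on~$[1/\base,1]$. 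This legitimates the swap of sum and integral, yielding~\eqref{dumas-rrseq-v4:eq:NewtonSeries}.

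For the error estimate, I would use the bound $|\Delta^n F^*(1)|\leq (1-1/\base)^n\int_{1/\base}^1\lVert F(x)\rVert\,dx$. The ratio of two consecutive terms of the Newton series is $\binom{s-1}{n+1}/\binom{s-1}{n}=(s-1-n)/(n+1)$, whose modulus tends to~$1$; combined with the geometric factor $1-1/\base<1$, the tail of the series is a convergent geometric-like sum whose total is bounded by a constant (depending only on~$\base$ and~$s$) times the modulus of the first neglected term. The main obstacle, really the only delicate point, is arranging this last uniform estimate cleanly in~$s$; everything else reduces to the binomial expansion on a radius strictly less than~$1$, which is the geometric reason why truncating at the base point $s=1$ (as opposed to any other integer) is so well suited to a function supported away from the singularity of $x^{s-1}$ at~$0$.
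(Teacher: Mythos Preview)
Your approach is the same as the paper's: expand $x^{s-1}=(1+(x-1))^{s-1}$ by the binomial series, identify the $n$th coefficient as the integral $\int_{1/\base}^1(x-1)^nF(x)\,dx=\Delta^nF^*(1)$, and justify the swap by the bound $|x-1|\le 1-1/\base<1$. The argument for entireness and for~\eqref{dumas-rrseq-v4:eq:NewtonSeries} is correct and matches the paper.

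There is a small gap in your treatment of the error term. From the upper bound $|\Delta^nF^*(1)|\le C\,(1-1/\base)^n$ together with $\bigl|\binom{s-1}{n+1}/\binom{s-1}{n}\bigr|\to1$ you obtain that the tail is bounded by a constant times $|\binom{s-1}{n}|(1-1/\base)^n$. That is a bound by the \emph{upper estimate} on the first neglected term, not by the term itself; if $\Delta^nF^*(1)$ happened to be much smaller than $(1-1/\base)^n$ your conclusion would not follow. The paper closes this by extracting, from the integral representation, an actual asymptotic equivalent
\[
\Delta^nF^*(1)=\int_{1/\base}^1(x-1)^nF(x)\,dx\sim (-1)^n\,F(1/\base)\,\frac{(1-1/\base)^{n+1}}{n+1},
\]
obtained by a Laplace-type argument (the factor $(1-x)^n$ concentrates at $x=1/\base$). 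This gives matching upper and lower bounds, so successive terms of the Newton series have ratio tending to $1-1/\base$, and then the tail is genuinely of the order of its first term. Adding this one-line asymptotic to your argument completes the proof.
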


\begin{proof}
It must be noticed that this function~$F^*(s)$ is  analytic in the whole complex plane. Actually a Mellin transform is an integral from~$0$ to~$+\infty$, and the behaviour of the function at the ends of the interval determines the vertical strip in the complex plane where the Mellin transform is defined and analytic. Here the integrand vanishes in the neighborhood of~$0$ and of~$+\infty$, so the transform is an entire function. We write the term $x^{s-1}$ as $(1 - (1-x))^{s-1}$ and we expand it by the binomial theorem. This gives 
\begin{equation*}
	F^*(s) = \sum_{n=0}^{+\infty} \binom{s-1}{n} \int_{1/\base}^1 F(x) (x-1)^n \, dx .
\end{equation*}
Again with the binomial theorem, the last integral appears as the $n$th order difference of the sequence $F^*(\ell)$ evaluated at $\ell = 1$. In other words we obtain Equation~\eqref{dumas-rrseq-v4:eq:NewtonSeries}. Moreover the integral expression of~$\Delta^n F^*(1)$ permits to show that it behaves as $(F(1-1/\base) + o(1)) (1-1/\base)^{n+1}/(n+1)$. As a consequence, the series converges at least as fast as a geometric series of ratio $1-1/\base$. Hence the assertion about the error term.
\end{proof}

The above result may suggest that the calculation of the Fourier coefficients is particularly simple. This is not quite true, as the following example shows.

\begin{example}[Dichopile algorithm, computation of the Fourier coefficients-3]
	According to~\eqref{dumas-rrseq-v4:eq:FourierCoefficientsAsMellinEvaluations}, we have to compute the integrals
\begin{equation*}
	e_k = \int_{1/2}^1 g_5(x) x^{-\chi_k -2}\,dx = g_5^*(-1 -\chi_k )
\end{equation*}
and~$g_5^*(s)$ expresses as
\begin{equation*}
	g_5^*(s) = \sum_{n=0}^{+\infty} \binom{s-1}{n}  \Delta^n g_5^*(1) =  \sum_{n=0}^{+\infty} \binom{s-1}{n} \Delta^n M^1_{1,1},\quad 
\end{equation*}
where~$\Delta$  acts on the first lower index~$\ell$ of~$M^1_{\ell,1}$. At this point, two problems arise. 
First the moments~$M^1_{\ell,1}$ are of order~$1/\ell$, while the differences~$\Delta^n M^1_{1,1}$ are of order $2^n/n$ according to the proof of the above proposition. Hence there is a strong cancellation in the computation of differences~$\Delta^n M^1_{1,1}$. To take this phenomenon into account, we do not compute these differences as float numbers but exactly as rational numbers they are. Second for $s = -1 + \chi_k $ the modulus of $\binom{s-1}{n} /(2^nn)$ increases first up to a maximal value   of order 
\[
10^{2.06
\,|k|- 0.65
\ln |k|-1.08
}
\]
obtained for $n \simeq 5 |k|$, and next decreases towards~$0$. Henceforth,  to obtain the sum within an error $10^{-m}/2$ we have to sum the terms until they become smaller than~$10^{-m}/2$ and 
we use mantiss{\ae} of length $2.06\, |k| -0.65 \ln |k| -1 + m + 5$ (with~$5$ digits to control the rounding errors). Practically, we sum the series up to the $n$th term with $n \simeq (\pi/\ln(2))^2 \, |k| + m \log_2{10} - 3/2\, \log_2 |k|  \simeq 20.5\, |k| + 3.3 \, m - 2.2 \, \ln |k| $. This analysis is not valid for the case $k = 0$, where the absolute value of the sequence is decreasing from the beginning.  
Below are the first few values of the Fourier coefficients for $m=50$.

\begingroup\scriptsize
\begin{equation*}\setlength\arraycolsep{1pt}
	\begin{array}{lcl}
c_0 & \simeq & -0.36276483219909523733941579131627817954357682261599 \\ 
c_{1} & \simeq & +0.00328444028368975642383395704527876596759317794466  +0.00199132072044919779610043930356670834266095959747\,i \\ 
c_{2} & \simeq & +0.00306910005017327177913457733939403086029964937936  -0.00062027605748476162777330653703757401619891385239\,i \\ 
c_{3} & \simeq & +0.00168540331698581529572968437256156261586371542757  +0.00123480183822010180331253966429137775847524941664\,i \\ 
c_{4} & \simeq & +0.00054226937333349800444215803839172178044911364862  -0.00118254688336345399259102335961528247150067405496\,i \\ 
c_{5} & \simeq & +0.00113248887958587730781507505377755682399833345293  +0.00053603714268075411158841686643383757932219955252\,i \\ 
c_{6} & \simeq & +0.00061656997916724901585530584609901168042150761797  +0.00038561742152620206710596427624390201335494057698\,i \\ 
c_{7} & \simeq & +0.00067850368090861766393827715162354538698380890834  -0.00046549574088176188102491212188129989781189682338\,i \\ 
c_{8} & \simeq & -0.00031184978206237104092864408642405259547335766921  +0.00029217720780737790487915974566745439206958256568\,i \\ 
c_{9} & \simeq & +0.00033839108394137885396861099363986939271517660809  -0.00039052852080456028985072135518517264916825572241\,i \\ 
c_{10} & \simeq & +0.00050210083984953064910235351056678436057811735044  +0.00010608705990659151346295572715633259834904284986\,i \\ 
	\end{array}
\end{equation*}
\endgroup 

\end{example}

We have dealt with the case $p = 0$ in Equation~\eqref{dumas-rrseq-v4:eq:PhiDefinition}. The general case of a nonnegative integer~$p$ would lead us to consider derivatives of the Mellin transform~$F^*(s)$. 

\small

\bibliographystyle{model1-num-names}
\bibliography{\bibpath/RRsequence}

\end{document}